\documentclass{IEEEtran}

\usepackage{comment}
\usepackage{algorithm,algorithmicx,algpseudocode,amsmath,amssymb,amsthm}
\usepackage{bm,color,dsfont,pifont,subfigure,hyperref}
\usepackage{makecell}
\usepackage{multirow}
\usepackage{scalerel}
\usepackage{url}

\usepackage{authblk}

\makeatletter
\renewcommand{\maketag@@@}[1]{\hbox{\m@th\normalsize\normalfont#1}}%
\makeatother

\newtheorem{definition}{Definition}

\newtheorem{lemma}{Lemma}
\newtheorem{theorem}{Theorem} 
\newtheorem{remark}{Remark}

\newtheorem{claim}{Claim}
\newtheorem{example}{Example}

\allowdisplaybreaks
\newcommand\numberthis{\addtocounter{equation}{1}\tag{\theequation}}

\newcommand{\norm}[1]{\left\lVert#1\right\rVert}
\renewcommand{\(}{\left(}
\renewcommand{\)}{\right)}

\def\la{\left\langle}
\def\ra{\right\rangle}
\def\lcb{\left\{}
\def\rcb{\right\}}
\def\ln{\left\|}
\def\rn{\right\|}

\def\A{\mathcal{A}}
\def\C{\mathbb{C}}

\def\P{\mathcal{P}}
\def\T{\mathcal{T}}
\def\H{\mathcal{H}}
\def\D{\mathcal{D}}
\def\G{\mathcal{G}}
\def\R{\mathbb{R}}
\def\E{\mathbb{E}}

\def\CS{\mathcal{C}}
\def\I{\mathcal{I}}

\newcommand{\ve}{\bm{e}}

\newcommand{\vx}{\bm{x}}

\newcommand{\vw}{\bm{w}}
\newcommand{\vy}{\bm{y}}
\newcommand{\vz}{\bm{z}}

\newcommand{\vX}{\bm{X}}

\newcommand{\vZ}{\bm{Z}}
\newcommand{\vA}{\bm{A}}
\newcommand{\vB}{\bm{B}}
\newcommand{\vC}{\bm{C}}
\newcommand{\vD}{\bm{D}}
\newcommand{\vE}{\bm{E}}
\newcommand{\vF}{\bm{F}}
\newcommand{\vH}{\bm{H}}
\newcommand{\vL}{\bm{L}}
\newcommand{\vR}{\bm{R}}
\newcommand{\vW}{\bm{W}}
\newcommand{\vS}{\bm{S}}
\newcommand{\vM}{\bm{M}}
\newcommand{\vI}{\bm{I}}
\newcommand{\vU}{\bm{U}}
\newcommand{\vV}{\bm{V}}

\newcommand{\vP}{\bm{P}}
\newcommand{\vQ}{\bm{Q}}

\def\vSS{\bm{\Sigma}}
\newcommand \vDelta{\bm{\Delta}}
\newcommand \vDeltaa{\bm{\Delta_1}}
\newcommand \vDeltab{\bm{\Delta_2}}

\DeclareMathOperator{\rank}{rank}
\DeclareMathOperator{\diag}{diag}
\DeclareMathOperator{\Real}{{\normalfont Re}}

\newcommand{\PC}[1]{\P_{\CS}(#1)}

\newcommand{\dist}[2]{{\normalfont\mbox{dist}}(#1,#2)}
\newcommand{\distQ}[2]{{\normalfont\mbox{dist}_Q}(#1,#2)}
\newcommand{\distP}[2]{{\normalfont\mbox{dist}_P}(#1,#2)}
\newcommand{\distPsq}[2]{{\normalfont\mbox{dist}^2_P}(#1,#2)}

\newcommand{\kw}[1]{#1}

\newcommand{\js}[1]{{\color{blue} #1}}
\makeatletter
\renewcommand*{\@fnsymbol}[1]{\ensuremath{\ifcase#1\or \dagger\or \ddagger\or \mathsection\or
    *\or \mathparagraph\or \|\or **\or \dagger\dagger
    \or \ddagger\ddagger \else\@ctrerr\fi}}
\makeatother

\title{Projected Gradient Descent for Spectral Compressed Sensing via Symmetric Hankel Factorization}
\author{Jinsheng Li, Wei Cui, and Xu Zhang

\thanks{
J. Li and W. Cui are with the School of Information and Electronics, Beijing Institute of
Technology, Beijing 100081, China (e-mail: jinshengli@bit.edu.cn; cuiwei@bit.edu.cn). }
\thanks{X.~Zhang is with the School of Artificial Intelligence, Xidian University, Xi'an 710126, China (e-mail: zhang.xu@xidian.edu.cn). }
\thanks{Corresponding author: Xu Zhang.}

}

\begin{document}


\maketitle
\begin{abstract}
Current spectral compressed sensing methods via Hankel matrix completion employ symmetric factorization to demonstrate the low-rank property of the Hankel matrix. However, previous non-convex gradient methods only utilize asymmetric factorization to achieve spectral compressed sensing. In this paper, we propose a novel nonconvex projected gradient descent method for spectral compressed sensing via symmetric factorization named Symmetric Hankel Projected Gradient Descent (SHGD), which updates only one matrix and avoids a balancing regularization term. SHGD reduces about half of the computation and storage costs compared to the prior gradient method based on asymmetric factorization. {Besides, the symmetric factorization employed in our work is completely novel to the prior low-rank factorization model, introducing a new factorization ambiguity under complex orthogonal transformation}. Novel distance metrics are designed for our factorization method and a linear convergence guarantee to the desired signal is established with $O(r^2\log(n))$ observations. 
Numerical simulations demonstrate the superior performance of the proposed SHGD method in phase transitions and computation efficiency compared to state-of-the-art methods.
\end{abstract}
\begin{IEEEkeywords}
Spectral compressed sensing, Hankel matrix completion, symmetric matrix factorization
\end{IEEEkeywords}
\section{Introduction}
In this paper, we study the spectrally compressed sensing problem, which aims to recover a spectrally sparse signal from partial measurements. Spectrally compressed sensing widely arises in applications such as target localization in radar systems \cite{Potter2010}, analog-to-digital conversion \cite{Tropp2010}, magnetic resonance imaging \cite{Lustig2007}, and channel estimation \cite{Paredes2007}. Denote the spectrally sparse signal as  $\vx=[x_0,x_1,\cdots,x_{n-1}]^T\in\C^{n}$, whose elements are a superposition of $r$ complex sinusoids
\begin{align*}
x(t) = \sum_{k=1}^rd_ke^{(i2\pi f_k-\tau_k)t},\numberthis\label{eq:signal_model}
\end{align*}
where $t=0,\cdots,n-1$, $f_k\in[0,1)$ is the $k$-th normalized frequency, $d_k$ is the amplitude of the $k$-th sinusoids, and $\tau_k$ is the damping factor.

Only a portion of the spectrally sparse signals can be observed in practice as obtaining the complete set of sampling points is time-consuming and technically challenging due to hardware limitations. Additionally, the canonical full sampling on a uniform grid is inefficient due to the weakening effects caused by the damping factors $\{\tau_k\}$. One natural question is how to recover the entire elements of the desired signal $\vx$ from partial and nonuniform known entries, i.e.,
\begin{align*}
\mbox{Find}\quad\vx\quad\mbox{subject to}\quad\P_{\Omega}(\vx)=\sum_{a\in{\Omega}}x_a\ve_a,\numberthis\label{eq:setup}
\end{align*}
where $\ve_a$ is the canonical basis of $\R^n$, ${\Omega}\subseteq\{0,\cdots,n-1\}$ is the sampling index set with cardinality as $m$, 
 $\P_{\Omega}$ is an operator which projects the signal $\vx$ onto the sampling index set ${\Omega}$, and $\P_{\Omega}(\vz)=\sum_{a\in{\Omega}}\la\vz,\ve_a\ra\ve_a$ for $\vz\in\C^n$.

 To reconstruct the spectrally sparse signals from partial observations, many spectral compressed sensing approaches \cite{Chen2014,Cai2018,Cai2019,Zhang2021} were proposed to
 exploit the spectral sparsity via the low-rankness of the lifted Hankel matrix $\H\vx$, where $\H:\C^n\rightarrow\C^{n_1\times n_2}$ is the Hankel lifting operator, and thus $\H\vx$ is an $n_1\times n_2~(n=n_1+n_2-1)$ Hankel matrix. Especially, the low-rankness of the lifted Hankel matrix $\H\vx$ arises from the following Vandermonde decomposition
$$
\mathcal{H}\bm{x}=\bm{E}_L\bm{D}\bm{E}_R^T,
$$
where $\vE_L$ is an $n_1\times r$ Vandermonde matrix whose $k$-th column is $[1,w_k,\cdots,w_k^{n_1-1}]^T$, $\vE_R$ is an $n_2\times r$ Vandermonde matrix whose $k$-th column is $[1,w_k,\cdots,w_k^{n_2-1}]^T$, $w_k=e^{(i2\pi f_k-\tau_k)}$ and $\bm{D}=\diag(d_1,\cdots,d_r)$. When the frequencies $\{f_k\}$ are distinct and each diagonal entry $d_k$ is nonzero, we have $\rank(\H\vx)=r$.
 

Using the low rankness of the lifted Hankel matrix, a convex relaxation approach named Enhanced Matrix Completion (EMaC) was proposed in \cite{Chen2014} to reconstruct spectrally sparse signals. However, the convex approach faces a computational challenge when dealing with large-scale problems. To address the computational challenge, several non-convex approaches were proposed, including PGD \cite{Cai2018}, FIHT \cite{Cai2019}, and ADMM with prior information \cite{Zhang2021}, which demonstrate a lower computation cost than convex approaches \cite{Chen2014,Tang2013}, especially in high-dimensional situations. In particular, PGD is a non-convex projected gradient descent method that applies asymmetric low-rank factorization to parameterize the lifted low-rank Hankel matrix, i.e., $\H\vx=\vZ_U\vZ_V^H$ and introduces a regularization term to reduce the solution space, where $\vZ_U\in\C^{n_1\times r}$ and $\vZ_V\in\C^{n_2\times r}$.


 \begin{figure*}[!t]
		\centering
		\includegraphics[width=0.7\linewidth]{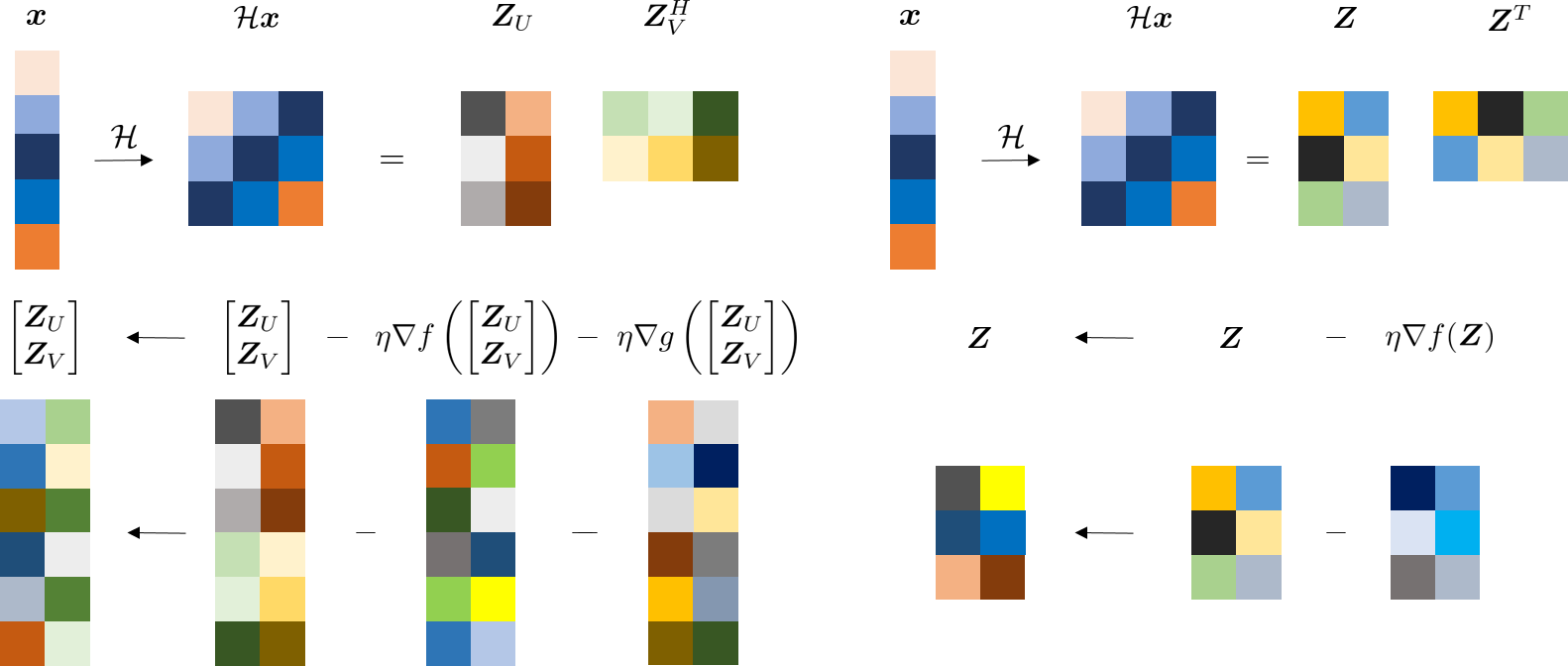}  
\caption{{\textbf{Left}: Asymmetric factorization and update rule of PGD. 
\textbf{Right}: Symmetric factorization and update rule of the proposed SHGD. 
The loss function, $f(\cdot)$, is defined separately in the PGD and SHGD approaches, while the regularization function $g(\cdot)$ is specific to PGD.
SHGD does not need a balancing regularization term and updates one single factor at a time compared to PGD, reducing at least half of operations and storage costs.}}
\label{fig:diagram}
\end{figure*}

\subsection{Motivation and Contributions}
Current spectral compressed sensing methods via Hankel matrix completion employ symmetric factorization to demonstrate the low-rankness of the lifted Hankel matrix\footnote{Without loss of generality, we suppose that the dimension of the signal is odd; otherwise, we can make it odd by either zero padding after the end of the signal or deleting the last entry. 
}. However, these methods employ a non-convex gradient descent algorithm using asymmetric factorization to solve the spectral compressed sensing problem. This raises the question of whether a non-convex gradient method based on symmetric factorization can be used instead.

Fortunately, when the dimension of the signal is odd, the lifted matrix can be a square Hankel matrix $\H\vx\in \C^{n_s\times n_s}~(n=2n_s-1)$ that enjoys the following {complex} symmetric factorization via Vandermonde decomposition
\begin{align*}
\H\vx=\bm{E}_L\bm{D}\bm{E}_R^T=\bm{E}\bm{D}\bm{E}^T=\vZ_\natural\vZ_\natural^T,
\end{align*}
where we can set the $n_s \times r$ Vandermonde matrix $\bm{E}_L=\bm{E}_R \triangleq \vE$  in square case and the $n_s\times r$ matrix $\vZ_\natural \triangleq \vE\vD^{\frac{1}{2}}$. This inspires us to use a low-rank {complex} symmetric factorization $\vM=\H\vz=\vZ\vZ^T$ with the Hankel constraint to estimate $\H\vx$ where $\vZ\in\C^{n_s\times r}$, and further recover the desired signal $\vx$. 
Using symmetric factorization offers two advantages. First, we can avoid the need for a regularization term to balance asymmetric matrices. Second, we can reduce computation time and storage costs by optimizing a single matrix instead of two.

 In this paper, we propose a projected gradient descent method under symmetric Hankel factorization to solve the spectrally compressed sensing problem, named Symmetric Hankel Projected Gradient Descent (SHGD). The differences between SHGD and PGD are shown in Fig. \ref{fig:diagram}, where SHGD reduces at least half of operations and storage costs compared to PGD. Our main contributions are {twofold}:
 
 {
\noindent \textbf{1) }A new nonconvex symmetrically factorized gradient descent method named SHGD is proposed. 
It is demonstrated that SHGD enjoys a linear convergence rate towards the desired spectrally sparse signal with high probability, provided the number of observations is $O(r^2\log(n))$. 
Extensive numerical simulations are conducted to validate the performance of the proposed SHGD. Compared to PGD, SHGD reduces about half of the computational cost and storage requirement. Additionally, SHGD demonstrates superior phase transition behavior compared to FIHT, while maintaining comparable computational efficiency.

\noindent \textbf{2)} 
The complex symmetric factorization $\vM=\vZ\vZ^T$ employed in our work is novel to the prior low-rank factorization model in the literature \footnote{{There are two types of factorization for low-rank real matrices: $\vM=\vL\vR^T$ for asymmetric matrices and $\vM=\vZ\vZ^T$ for positive semi-definite matrices. However, there are three types of factorization for complex matrices:  $\vM=\vL\vR^H$ for asymmetric matrices, $\vM=\vZ\vZ^H$ for Hermitian positive semi-definite matrices, and $\vM=\vZ\vZ^T$ for complex symmetric matrices, which is first pointed out in our work.}}. Complex symmetric factorization introduces a new factorization ambiguity $\vM=\vZ\vZ^T=\vZ\vQ(\vZ\vQ)^T$ where $\vQ$ is called \textit{complex orthogonal matrix} (not unitary) \cite{Horn2012,Horn1999,Luk1997,Didenko1978} such that $\vQ\vQ^T=\vI$ for $\vQ\in \C^{r \times r}$. Such 
factorization ambiguity under complex orthogonal transformation is {first pointed out} in our work. Consequently, the analyses are non-trivial and highly technical to deal with a new factorization model and algorithm. Novel distance metrics are designed to analyze such factorization model.} 

\subsection{Related work}
 When a spectral sparse signal is undamped and its frequencies are discretized on a uniform grid, one can use conventional compressed sensing \cite{Candes2006,Donoho2006} to estimate its spectrum. However, the true frequencies are usually continuous and off-the-grid in practice, thus mismatch errors arise when applying conventional compressed sensing approaches. A grid-free approach named atomic norm minimization (ANM) was proposed by exploiting the sparsity of the frequency domain
 in a continuous way \cite{Tang2013}. When frequencies are well-separated, ANM can achieve exact recovery with high probability from $O(r\log(r)\log(n))$ random samples.  Another grid-free approach was proposed to exploit the sparsity of frequencies via the low-rank Hankel completion model \cite{Chen2014}. By formulating it as a nuclear norm minimization problem, the authors showed $O(r\log^4(n))$ random observations are sufficient to recover the true signals exactly with high probability, equipped with the incoherence conditions. Both of them are convex approaches and suffer from computation issues for large-scale problems.
 
Several provable non-convex approaches were proposed to solve the previous low-rank Hankel matrix completion problem towards spectrally compressed sensing, such as PGD \cite{Cai2018}, FIHT \cite{Cai2019} and pMAP \cite{Shen2022}. These methods offer lower computational costs compared to convex approaches, particularly in high-dimensional scenarios.  Inspired by a factorized projected gradient method towards matrix completion \cite{Zheng2016}, a similar gradient method PGD \cite{Cai2018} that employs low-rank factorization was proposed for Hankel matrix completion. Equipped with $O(r^2\log(n))$ random samples, PGD enjoys an exact recovery with high probability \cite{Cai2018}.  However, the inherent symmetric structure of Hankel matrices wasn't exploited thoroughly in \cite{Cai2018}. The complex symmetry of the square Hankel matrix was exploited in \cite{Xu2008} for fast symmetric SVD of Hankel matrices and in \cite{Andersson2011} for a fast alternating projection method towards complex frequency estimation, but there are no theoretical guarantees in \cite{Xu2008} and \cite{Andersson2011}. FIHT \cite{Cai2019} is an algorithm that employs the Riemannian optimization technique towards low-rank matrices \cite{Wei2016,Wei2020} to solve the Hankel matrix completion problem, and the complex symmetry could be exploited for further reduction of storage and computation. FIHT achieves an exact recovery with high probability equipped with $O(r^2\log^2(n))$ random samples, which is worse with an order of $\log
(n)$ compared to PGD and SHGD. A penalized version of the Method of Alternating Projections (MAP) was proposed in \cite{Shen2022} towards weighted low-rank Hankel optimization, of which the low-rank Hankel matrix completion was a special case. pMAP enjoys a linear convergence rate to the truth with a reasonable computation cost. 
Considering the prior information in the spectral compressed sensing problem, the authors in \cite{Zhang2021} employed low-rank factorization, and proposed an ADMM framework with prior information, but there is no convergence rate analysis for their nonconvex approach. A concurrent work called HT-GD \cite{Wu2023} was proposed to employ the symmetric property of low-rank Hankel matrices and Hermitian property of low-rank Toeplitz matrices, but it focuses on undamped signals and lacks theoretical guarantees \cite{Wu2023}. {Inspired by the previous PGD method \cite{Cai2018}, the authors in \cite{Mao2022} proposed a projected gradient descent method via vectorized Hankel lift (PGD-VHL) towards blind super-resolution. They also apply the approach which is based on low-rank asymmetric factorization with a balancing regularization term.} 

{Symmetric factorization arises in a variety of fields, such as non-negative matrix factorization for clustering \cite{Yin2023,He2011}, matrix completion \cite{Chen2015}, mobile communications, and factor analysis where symmetric tensor factorization is employed \cite{Comon2008,Brachat2010,Cai2015a}, further reducing computation and storage costs. However, the complex symmetric factorization in this work is a new type of factorization, which hasn't appeared in the literature to our knowledge, enriching the range of low-rank matrix factorization approaches. }

\noindent \textbf{Notations}. 
We denote vectors with bold lowercase letters, matrices with bold uppercase letters, and operators with calligraphic letters. For matrix $\vZ$, we use $\vZ^T$, $\vZ^H$, $\bar{\vZ}$, $\norm{\vZ}$, and $\norm{\vZ}_F$ to denote its transpose, conjugate transpose, complex conjugate, spectral norm, and Frobenius norm, respectively. Besides, we define $\norm{\vZ}_{2,\infty}$ as the largest $\ell_2$-norm of its rows. For vector $\vz$, $\norm{\vz}_1$ and $\norm{\vz}_2$ denotes its $\ell_1$ and $\ell_2$ norms. Define the inner product of two matrices $\vZ_1$ and $\vZ_2$ as $\la\vZ_1,\vZ_2\ra=\mathrm{trace}(\vZ_1^H\vZ_2)$.
$[n]$ denotes the set $\{0,1,\cdots,n-1\}$ where $n$ is a natural number. We denote the identity matrix and operator as $\vI$ and $\I$, respectively. The adjoint of the operator $\A$ is denoted as $\A^*$. $\Real(\cdot)$ denotes the real part of a complex number.
  
\section{Problem formulation} \label{sec:pb_formul}
Without loss of generality, we suppose the length $n$ of the desired signal $\vx$ is odd in the following. We aim to recover the desired signal via symmetric Hankel matrix completion. Firstly, we form a $n_s\times n_s$ complex symmetric Hankel matrix 
\begin{align*}
\H\vx =\begin{bmatrix}
x_0 & x_1 & \cdots& x_{n_s-1}\\
x_1 & x_2& \cdots &  x_{n_s}\\
\vdots & \vdots  & \vdots &\vdots\\
x_{n_s-1} & x_{n_s}  &  \cdots  &x_{n-1}
\end{bmatrix},
\end{align*}
where $\H:\C^n\rightarrow\C^{n_s\times n_s}~(n=2n_s-1)$ is the Hankel lifting linear operator.
The lifted symmetric Hankel matrix $\H\vx$ has a Vandermonde decomposition as pointed out in the introduction:
$$
\mathcal{H}\bm{x}=\bm{E}\bm{D}\bm{E}^T,
$$
where $\vE$ is an $n_s\times r$ Vandermonde matrix with its $k$-th column as $[1,w_k,\cdots,w_k^{n_s-1}]^T$ and $w_k=e^{(i2\pi f_k-\tau_k)}$. When frequencies are distinct and each diagonal entry $d_k$ of $\vD$ is non-zero, $\rank(\H\vx)=r$.


 We denote $\H^*$ as the adjoint of  $\H$, which is to map a matrix $\vM\in\C^{n_s\times n_s}$ to a vector $\H^*\vM=\{\sum_{i+j=a}\vM_{(i,j)}\}_{a=0}^{n-1}$. Let $\D^2=\H^*\H$ and $\D$ is a linear operator that maps a vector $\vx\in\C^n$ to $\D\vx\in\C^n$, where $[\D\vx]_a=\sqrt{w_a}x_a$ and $w_a$ is the length of the $a$-th skew-diagonal of an $n_s\times n_s$ matrix with $a\in[n-1]$. Defining $\G=\H\D^{-1}$ as $\D$ is invertible, then we have $\G^*\G=\I$. 
 
As stated in \cite{Cai2018}, a rank constraint {\em weighted least square} problem is constructed to recover $\vx$, utilizing the exact low-rankness  property of $\H\vx$:
\begin{align*}
\min_{\vz\in\C^n}\la \P_{\Omega}\(\D(\vz-\vx)\),\D(\vz-\vx)\ra~~\mbox{s.t.}~\rank(\H\vz)=r.\numberthis\label{eq:low_rank_H}
\end{align*}

The problem \eqref{eq:low_rank_H} can be reformulated as the following optimization problem after making the substitutions that $\vy\leftarrow \D\vx$ and $\vz\leftarrow \D\vz$
\begin{align*}
\min_{\vz\in\C^{n}}\la \P_{\Omega}\(\vz-\vy\),\vz-\vy \ra~\mbox{s.t.}~\rank(\G\vz)=r.\numberthis\label{eq:low_rank_G}
\end{align*}
Note that  $\G\vz\in \C^{n_s\times n_s}$ is a rank-$r$ square Hankel matrix and consequently complex symmetric.

For any complex symmetric matrix $\vM=\vM^T\in\C^{n_s\times n_s}$, there is a special form of SVD called Takagi factorization \cite{Horn2012}, which is defined as $\vM=\vU\vSS\vU^T,$
where $\vU$ is a unitary matrix and $\vSS$ is a diagonal singular value matrix. If we set $\vZ=\vU\vSS^\frac{1}{2}$, then the complex symmetric matrix $\vM$ enjoys symmetric factorization $\vM=\vZ\vZ^T$.

Therefore, we can employ low-rank symmetric factorization $\G\vz=\vZ\vZ^T$ to eliminate the rank constraint, where $\vZ \in\C^{n_s\times r}$, and  enforce Hankel structure by the following Hankel constraint
\begin{align*}
(\I-\G\G^*)(\vZ\vZ^T)=\bm{0},    
\end{align*}
where $\G\G^*$ is a projector that maps a matrix to a Hankel matrix.
Thus \eqref{eq:low_rank_G} can be rewritten as
\begin{align*}
&\min_{\vZ\in\C^{n_s\times r}}\la\P_{\Omega}\(\G^*\(\vZ\vZ^T\)-\vy\), \G^*\(\vZ\vZ^T\)-\vy\ra \\
&\quad\mbox{s.t.}\quad(\I-\G\G^*)(\vZ\vZ^T) = \bm{0},\numberthis\label{eq:constrained}\end{align*}
where we use the fact that $\vz=\G^*\(\G\vz\)=\G^*\(\vZ\vZ^T\)$, and substitute this in \eqref{eq:low_rank_G}.

Furthermore, we consider a penalized version of \eqref{eq:constrained} to recover the weighted desired signal $\vy$,
\begin{align*}
f(\vZ) = &\frac{1}{4p}\la\P_{\Omega}(\G^*(\vZ\vZ^T)-\vy),\G^*(\vZ\vZ^T)-\vy\ra\\
&+\frac{1}{4}\ln(\I-\G\G^*)(\vZ\vZ^T)\rn_F^2, \numberthis\label{eq:penalized}
\end{align*}
 where $p=m/n$ is the sampling ratio. We can interpret \eqref{eq:penalized} as that
 one uses a low-rank symmetrically factorized matrix $\vM=\vZ\vZ^T$ with Hankel structure (Hankel structure enforced in \eqref{eq:penalized} by a penalty term) to estimate the lifted matrix $\G\vy$ via minimizing the mismatch in the measurement domain. 

\section{Definitions and algorithms} \label{sec:def-algorithm}

\subsection{Definitions}
Denoting the lifted truth matrix as $\vM_\star=\G\vy$, {which is a complex symmetric matrix as pointed out in the previous section}.
Let the Takagi factorization of $\vM_\star$ being $\vM_\star=\vU_\star\vSS_\star\vU_\star^T$ and set 
\begin{align*}
 \vZ_\star=\vU_\star\vSS_\star^\frac{1}{2},\numberthis\label{eq:defZstar}   
\end{align*}
then $\vM_\star=\vZ_\star\vZ_\star^T$. 

If the singular vectors of $\vM_\star$ are aligned with the sampling basis, we can't recover $\vM_\star$ from element-wise observations. Consequently, we suppose that $\vM_\star=\G\vy$ is $\mu_0$-incoherent as in \cite{Cai2018,Candes2009}, 
\begin{definition}\label{def:coh}
The $n_s\times n_s$ square Hankel matrix $\vM_\star=\G\vy$ is $\mu_0$-incoherent, if there exists an absolute numerical constant $\mu_0>0$ such that   
\begin{align*}
\ln\vU_\star\rn_{2,\infty} \leq \sqrt{\frac{2\mu_0r}{n}},
\end{align*}
where $n=2n_s-1$, $\vU_\star\vSS_\star\vU_\star^T$ is the Takagi factorization of $\vM_\star$.
\end{definition}
\begin{remark}
    We remove the dependence on  $c_s$ for the incoherence condition defined in \cite{Chen2014, Cai2018,Cai2019}, as $c_s=n/n_s\leq 2$ for the square case and only suppose the incoherence condition for one single matrix $\vU_\star$.
\end{remark}
It has been demonstrated in \cite{Cai2018} and \cite[Thm. 2]{Liao2016} that $\vM_\star=\G\vy$ is $\mu_0$-incoherent as long as the minimum wrap-around distance between the frequencies is greater than about $2/n$, and the spectrally sparse signals are undamped.


Let $\mu$ and $\sigma$ be numerical constants such that $\mu\geq \mu_0$ and $\sigma\geq \sigma_1(\vM_\star)$. When $\vM_\star$ is $\mu_0$-incoherent, the matrix $\vZ_\star$  satisfies $\ln\vZ_\star\rn_{2,\infty}\leq\sqrt{2\mu r\sigma/n}$. Moreover, let $\CS$ be a convex set defined as 
\begin{align*}
\CS = \lcb\vZ\in\C^{n_s\times r}~|~\ln\vZ\rn_{2,\infty}\leq 2\sqrt{\frac{\mu r\sigma}{n}} \rcb.\numberthis\label{eq:set_C}
\end{align*}
We immediately have $\vZ_\star\in\CS$  
and it is reasonable to project the factor $\vZ$ onto {this set}. 
We define the following projection operator $\P_{\CS}(\cdot)$, for $\vZ\in\C^{n_s\times r}$
\begin{align*}
[\P_{\CS}(\vZ)]^{(i,:)}=\begin{cases}\vZ^{(i,:)} & \mbox{if }\|\vZ^{(i,:)}\|_2\leq 2\sqrt{\frac{\mu r\sigma}{n}},\\
\frac{\vZ^{(i,:)}}{\|\vZ^{(i,:)}\|_2}2\sqrt{\frac{\mu r\sigma}{n}}&\mbox{otherwise}.
\end{cases}
\end{align*}
\subsection{Algorithms}
As discussed above, we construct the following constrained optimization problem under symmetric factorization
\begin{align*}
\min_{\vZ\in\CS}~f(\vZ) =& \frac{1}{4p}\la\P_{\Omega}(\G^*(\vZ\vZ^T)-\vy),\G^*(\vZ\vZ^T)-\vy\ra\\
&+\frac{1}{4}\ln(\I-\G\G^*)(\vZ\vZ^T)\rn_F^2.\numberthis\label{eq:rec_algloss}
\end{align*}
 Under Wirtinger calculus, the gradient of the loss function $f(\vZ)$ is 
 \begin{small}
  \begin{align*}
\nabla f(\vZ)= p^{-1}\(\G\P_{\Omega}(\G^*(\vZ\vZ^T)-\vy)\)\bar{\vZ}+(\I-\G\G^*)(\vZ\vZ^T)\bar{\vZ}.
\numberthis\label{eq:loss_gradient} 
\end{align*}   
 \end{small}

Note that the gradient of our loss function avoids the computation and storage for two factors as well as a balancing regularization term, compared to PGD \cite{Cai2018}. 
\begin{algorithm}[t]
\caption{Symmetric Hankel Projected  Gradient Descent}
\label{alg:SHGD}
\begin{algorithmic} 
\Statex \textbf {Preprocessing:} 
 1. Make each dimension of the signal odd by zero-padding. 
 \Statex  2. {Partition} ${\Omega}$ into disjoint sets ${\Omega}_0,\cdots,{\Omega}_K$ of equal size $\hat{m}$, let $\hat{p}=\frac{\hat{m}}{n}$.
\Statex \textbf{Initialization:} $\vM^{0}=\T_r \(\hat{p}^{-1}\G\P_{{\Omega}_{0}}(\vy)\)=\vU^0\vSS^0(\vU^0)^T$, $\tilde{\vZ}^0=\vU^0(\vSS^0)^{1/2}$ and $\vZ^0=\PC{\tilde{\vZ}^0}$.
\For{$k=0,1,\cdots,K$}\\
\quad1. ${\tilde{\vZ}^{k+1}}=\vZ^k-\eta \nabla f^{(k)}(\vZ^k)$\\
\quad2. $\vZ^{k+1}=\PC{\tilde{\vZ}^{k+1}}$
\EndFor
\Statex \textbf{Output:} $\vZ^K$ in the last iteration, $\vy^K = \G^*(\vZ^K(\vZ^K)^T)$ and $\vx^K = \D^{-1}\vy^K$.
\end{algorithmic}
\end{algorithm}

We design a projected gradient descent algorithm with sample-splitting for solving \eqref{eq:rec_algloss}, named as Symmetric Hankel Projected Gradient Descent, seeing Algorithm \ref{alg:SHGD}.
%
In preprocessing, we first transform the dimension of the observed signal odd by zero-padding if not. Secondly, the sampling set ${\Omega}$ is partitioned into $K+1$ disjoint subsets of equal size $\hat{m}$. Such partitions of the sample set are commonly used in analyzing matrix completion problems \cite{Cherapanamjeri2017, Jain2013} and Hankel matrix completion problems \cite{Cai2019, Zhang2018, Zhang2019}. The sample-splitting technique keeps the independence between the current sampling set and the previous iterates, which simplifies the theoretical analyses.

The following is the initialization, which is one-step hard thresholding after Takagi factorization and then projecting the initial factor to the convex set $\CS$. There are several algorithms to implement Takagi factorization  \cite{Xu2008, Chebotarev2014, Ikramov2012} and one may use the algorithm in \cite{Xu2008} directly.

Finally, we enter the stage that iteratively updates on single factor $\vZ$.
Considering the sample splitting setting, the gradient term $\nabla f^{(k)}(\vZ^k)$ in the $k$-th iteration is 
\begin{align*}
\nabla f^{(k)}(\vZ^k)&= \hat{p}^{-1}\(\G\P_{{\Omega}_{k}}(\G^*(\vZ^k({\vZ^{k}})^T)-\vy)\)\bar{\vZ}^{k}\\
&\quad+(\I-\G\G^*)(\vZ^k({\vZ^k})^T)\bar{\vZ}^{k}.
\end{align*}

Our model and algorithm can be generalized to multi-dimensional spectral sparse signals. We omit the details due to space limitations. 

\subsection{Computational complexity for SHGD and PGD}
{
The empirical computational efficiency improvement of SHGD relies on specific implementations and characteristics of the problem being considered, such as the problem scale $n$ and the number of sinusoids $r$. A detailed comparison of the empirical computational time between SHGD and PGD is shown in Section \ref{subsec:comptime}. In this part, we provide an analysis of computational complexity for SHGD and PGD as follows. 

The gradient of SHGD \eqref{eq:loss_gradient} can be reformulated as $\nabla f(\vZ)=\G\( p^{-1}\P_{\Omega}(\G^*(\vZ\vZ^T)-\vy)-\G^*(\vZ\vZ^T)\)\bar{\vZ}+\vZ(\vZ^T\bar{\vZ})$. We compute
$\G^*(\vZ\vZ^T)$ by $r$ fast convolutions. Setting $\vw=p^{-1}\P_{\Omega}(\G^*(\vZ\vZ^T)-\vy)-\G^*(\vZ\vZ^T)$, $\(\G\vw\)\bar{\vZ}$ can be computed by $r$ fast Hankel matrix-vector multiplications. Each of the aforementioned two steps needs $Cnr\log(n)$ flops, where $C>0$ is a constant\footnote{The constant factors may differ for the two steps, but we can choose the larger one as $C$.}.  Besides, the computation of $\vZ(\vZ^T\bar{\vZ})$ requires $2n_s r^2(\approx nr^2)$ flops. For PGD \cite{Cai2018}, there are three steps of convolution type, each requiring $Cnr\log(n)$ flops, and four steps associated with balancing regularization, each using about $n r^2$ flops. Thus the computation time ratio between SHGD and PGD per-iteration is $\frac{2C nr\log(n)+nr^2}{3C nr\log(n)+4nr^2}=\frac{2C\log(n)+r}{3C\log(n)+4r}$. Depending on the relative scale between $r$ and $C\log(n)$, this ratio ranges from $1/4$ to $2/3$.
}

\section{Theoretical results and Analysis} \label{sec:theoretical-results}
In this section, we first discuss the challenges associated with analyzing the proposed SHGD algorithm. We then present the linear convergence result of SHGD. Lastly, we introduce our analysis framework and provide the proof of Theorem \ref{thm:recovery_guarantee}.

\subsection{Theoretical challenges}\label{subsec:challenge}

The previous analysis for PGD primarily relies on asymmetric factorization and balancing regularization \cite{Cai2018}, which makes it feasible to design a distance metric to effectively address unitary ambiguity. However, extending the analysis from PGD to SHGD is challenging. {The complex symmetric factorization is novel to the prior low-rank factorization model, introducing a new factorization ambiguity $\vM_\star=\vZ_\star\vZ_\star^T=\vZ_\star\vQ(\vZ_\star\vQ)^T$ for $\vM_\star\in\C^{n_s\times n_s}$, where $\vQ$ is called {complex orthogonal matrix (not unitary unless real)} \cite{Horn2012,Horn1999,Luk1997} such that  $\vQ\in \mathcal{Q} \triangleq \{\vS \in \C^{r\times r}| \vS\vS^T=\vS^T\vS=\vI\}$. Such factorization ambiguity under complex orthogonal transformation is first pointed out in our work.} The analysis of complex orthogonal ambiguity is more challenging than that of the unitary ambiguity, since $\|\vZ\vR\|=\|\vZ\|$ holds for all unitary matrix $\vR$ while $\|\vZ\vQ\|\neq\|\vZ\|$ can occur for some complex orthogonal matrix $\vQ$ as shown in Example \ref{ex1}. To characterize the distance from any matrix $\vZ$ to the set $\vZ_\star\vQ$ as indicated in \cite{Cai2018,Zheng2016,Tu2016,Ma2019}, it is essential to define a distance metric 
\begin{align*}
\distQ{\vZ}{\vZ_\star}=\inf_ {\vQ\in \mathcal{Q}}\ln\vZ-\vZ_\star\vQ\rn_F.\numberthis \label{eq:def_dis_complex_oth}
\end{align*}
However, it is hard to find the closed-form solution or a concise first-order optimality condition for \eqref{eq:def_dis_complex_oth}. Besides, the set of the complex orthogonal matrix $\vQ$ may be unbounded and therefore non-compact, bringing existence issues to the solution of the previous definition.  See examples as follows.
\begin{example} \label{ex1}
Define hyperbolic functions $\cosh{t}=(e^t+e^{-t})/2$ and $\sinh{t}=(e^t-e^{-t})/2$ . Let $\vI_{2}$ be a $2\times 2$ identity matrix and $\vS=\left[
 \begin{smallmatrix}
    0 &1\\-1 &0
\end{smallmatrix}\right]$, then $\vQ(t)=(\cosh{t})\vI_{2}+(j\sinh{t})\vS$ is a complex orthogonal matrix for all $t\in\R$ and the set $\vQ=\{\vQ(t)|t\in\R\}$ is non-compact. 
\begin{proof}
\begin{align*}
        \vQ(t)\vQ(t)^T=\vQ(t)^{T}\vQ(t)=\((\cosh{t})^2-(\sinh{t})^2\)\vI_{2}=\vI_{2},
\end{align*}
thus $\vQ(t)$ is a complex orthogonal matrix. Besides, it is obvious that  $\ln\vQ(t)\rn$ and $\ln\vQ(t)\rn_F$ $\rightarrow\infty$ as $t\rightarrow\infty$. So the set $\vQ$ is not bounded and therefore non-compact.
\end{proof}
\end{example}
{
Notice that $\mbox{dist}^2_Q(\vZ,\vZ_\star)$ can be reformulated as
\begin{small}
       \begin{align*}
2\mbox{dist}^2_Q(\vZ,\vZ_\star)
&=\inf_ {\substack{\vQ^T\vQ=\vQ\vQ^T=\vI,\\\vQ\in\C^{r\times r}}}\ln\vZ-\vZ_\star\vQ\rn_F^2+\|\vZ-\vZ_\star\vQ^{-T}\|_F^2,
\end{align*}
\end{small} $\vQ$ must be invertible 
as $\vQ^T\vQ=\vI$ for $\vQ\in{\C^{r\times 
r}}$. We drop the complex orthogonal constraint $\vQ^T\vQ=\vI$ but 
keep $\vQ$ invertible, and thus define a novel distance metric\footnote{Rigorously, we should use $\inf$ instead of $\min$ for the distance definition. However, we will prove the existence of a minimizer in 
Lemma~\ref{lem:existence} under some conditions that the iterates of our algorithm satisfy.} 
\begin{align*}
\distP{\vZ}{\vZ_\star}{=}\min_{\substack{\vP \in \C^{r\times r} \\ \text { invertible }}}\sqrt{\ln\vZ-\vZ_\star\vP\rn_F^2+\ln\vZ-\vZ_\star\vP^{-T}\rn_F^2}.\numberthis\label{eq:def_dist_invert}
\end{align*}
Additionally, we 
reveal the relationship between the previous distance metrics 
through the following Lemma. 
 \begin{lemma} \label{lem:relations-dist} The distance metrics $\distP{\vZ}{\vZ_\star}$ and $\distQ{\vZ}{\vZ_\star}$ satisfy the following relationships:
 
\noindent1) $\distP{\vZ}{\vZ_\star}\leq\sqrt{2}\distQ{\vZ}{\vZ_\star}$. 

\noindent2) Suppose  $\distP{\vZ}{\vZ_\star}\leq\varepsilon\sigma_r(\vM_\star)^{\frac{1}{2}}$, then  $\distP{\vZ}{\vZ_\star}$ asymptotically reduces to $\sqrt{2}\distQ{\vZ}{\vZ_\star}$ 
when $\varepsilon\rightarrow 0$.
 \end{lemma}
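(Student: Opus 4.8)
The plan is to prove the two parts separately: part 1) by a one-line feasibility argument, and part 2) by a perturbation argument built around the principal square root of $\vP^T\vP$. For part 1), observe that every $\vQ\in\mathcal{Q}$ is invertible with $\vQ^{-1}=\vQ^T$, hence $\vQ^{-T}=\vQ$, so $\vQ$ is a feasible point of the minimization in \eqref{eq:def_dist_invert} and
\[
\distPsq{\vZ}{\vZ_\star}\le\ln\vZ-\vZ_\star\vQ\rn_F^2+\ln\vZ-\vZ_\star\vQ^{-T}\rn_F^2=2\ln\vZ-\vZ_\star\vQ\rn_F^2 .
\]
Taking the infimum over $\vQ\in\mathcal{Q}$ gives $\distP{\vZ}{\vZ_\star}\le\sqrt{2}\,\distQ{\vZ}{\vZ_\star}$, which is 1).

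For part 2) I would fix a minimizer $\vP$ of \eqref{eq:def_dist_invert}; under the hypothesis $\distP{\vZ}{\vZ_\star}\le\varepsilon\,\sigma_r(\vM_\star)^{1/2}$ a minimizer exists because along any minimizing sequence $\{\vP_n\}$ the quantities $\ln\vZ_\star\vP_n\rn$ and $\ln\vZ_\star\vP_n^{-T}\rn$ are uniformly bounded by $\ln\vZ\rn_F+\distP{\vZ}{\vZ_\star}+o(1)$, which — since $\sigma_r(\vZ_\star)=\sigma_r(\vM_\star)^{1/2}$ — forces $\ln\vP_n\rn$ and $\ln\vP_n^{-1}\rn$ bounded, confining $\{\vP_n\}$ to a compact set of invertible matrices (this is the content of Lemma~\ref{lem:existence}). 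Write $a=\ln\vZ-\vZ_\star\vP\rn_F$, $b=\ln\vZ-\vZ_\star\vP^{-T}\rn_F$, so $a^2+b^2=\distPsq{\vZ}{\vZ_\star}$ and $a+b\le\sqrt{2}\,\distP{\vZ}{\vZ_\star}$, and set $\vE=\vP^T\vP-\vI$. From $\vP^T\vP-\vI=\vP^T(\vP-\vP^{-T})$ and the identity $\vZ_\star(\vP-\vP^{-T})=(\vZ_\star\vP-\vZ)-(\vZ_\star\vP^{-T}-\vZ)$ one gets $\ln\vP-\vP^{-T}\rn_F\le(a+b)/\sigma_r(\vM_\star)^{1/2}$, hence $\ln\vE\rn_F\le\ln\vP\rn(a+b)/\sigma_r(\vM_\star)^{1/2}=O(\varepsilon)$; in particular $\ln\vE\rn<\tfrac12$ once $\varepsilon$ is small.

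The core step is to manufacture a complex orthogonal matrix from $\vP$. Define $\vQ=\vP(\vI+\vE)^{-1/2}$, where $(\vI+\vE)^{-1/2}$ is the principal square root given by its binomial series (convergent since $\ln\vE\rn<1$); as $\vE$ is complex symmetric so is $(\vI+\vE)^{-1/2}$, and using $(\vI+\vE)^{-1}=(\vP^T\vP)^{-1}$ a direct check gives $\vQ^T\vQ=\vQ\vQ^T=\vI$, i.e.\ $\vQ\in\mathcal{Q}$. The key identity is that $\vZ_\star\vQ$ is the average of the two alignments up to second order: combining $(\vI+\vE)^{-1/2}=\tfrac12\bigl(\vI+(\vI+\vE)^{-1}\bigr)+O(\ln\vE\rn^2)$ with $\vZ_\star\vP(\vI+\vE)^{-1}=\vZ_\star\vP(\vP^T\vP)^{-1}=\vZ_\star\vP^{-T}$ yields $\vZ_\star\vQ=\tfrac12\bigl(\vZ_\star\vP+\vZ_\star\vP^{-T}\bigr)+O\!\bigl(\ln\vZ_\star\vP\rn\ln\vE\rn^2\bigr)$, so that
\[
\distQ{\vZ}{\vZ_\star}\le\ln\vZ-\vZ_\star\vQ\rn_F\le\tfrac12(a+b)+O\!\bigl(\ln\vZ_\star\vP\rn\ln\vE\rn^2\bigr)\le\tfrac{1}{\sqrt{2}}\distP{\vZ}{\vZ_\star}+O\!\bigl(\ln\vZ_\star\vP\rn\ln\vE\rn^2\bigr).
\]
Since $\vZ\in\CS$ bounds $\ln\vZ\rn$, hence $\ln\vZ_\star\vP\rn$ and $\ln\vP\rn$, and since $\ln\vE\rn^2=O\bigl(\ln\vP\rn^2\distPsq{\vZ}{\vZ_\star}/\sigma_r(\vM_\star)\bigr)$ together with $\distP{\vZ}{\vZ_\star}\le\varepsilon\,\sigma_r(\vM_\star)^{1/2}$, the remainder is $O(\varepsilon)\cdot\distP{\vZ}{\vZ_\star}$. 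Combined with part 1) this gives $\distP{\vZ}{\vZ_\star}\le\sqrt{2}\,\distQ{\vZ}{\vZ_\star}\le\bigl(1+O(\varepsilon)\bigr)\distP{\vZ}{\vZ_\star}$, i.e.\ $\sqrt{2}\,\distQ{\vZ}{\vZ_\star}$ and $\distP{\vZ}{\vZ_\star}$ agree to first order as $\varepsilon\to0$, which is 2).

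The step I expect to be the real obstacle is not the square-root algebra but the control of the minimizer $\vP$: one must establish both its existence and a bound on $\ln\vP\rn$, $\ln\vP^{-1}\rn$ that is uniform in $\varepsilon$ (up to dependence on $\mu,r,\kappa$), for otherwise the binomial expansion of $(\vI+\vE)^{-1/2}$ is not legitimate and the remainder is not $o\bigl(\distP{\vZ}{\vZ_\star}\bigr)$. Example~\ref{ex1} shows these norms cannot be bounded for a generic complex orthogonal matrix, so the argument must genuinely exploit both the box constraint $\vZ\in\CS$ and the smallness of $\distP{\vZ}{\vZ_\star}$.
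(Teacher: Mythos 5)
Part 1) is identical in substance to the paper's: the paper rewrites $2\distQ{\vZ}{\vZ_\star}^2$ as $\inf_{\vQ\in\mathcal{Q}}\bigl(\ln\vZ-\vZ_\star\vQ\rn_F^2+\ln\vZ-\vZ_\star\vQ^{-T}\rn_F^2\bigr)$ and then enlarges the constraint set to all invertible $\vP$; your feasibility argument is the same observation read in the other direction.

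Part 2) is where you genuinely diverge, and your version is the more quantitative of the two. The paper's argument runs as follows: any near-optimal $\vP$ satisfies $\|\vP-\vP^{-T}\|_F\le2\varepsilon$ (the same triangle-inequality-plus-$\sigma_r(\vZ_\star)$ step you use), so $\distP{\vZ}{\vZ_\star}$ can be computed as an infimum over the shrinking set $\{\vP\text{ invertible}:\|\vP-\vP^{-T}\|_F\le2\varepsilon\}$, which at $\varepsilon=0$ collapses exactly onto the complex orthogonal group $\mathcal{Q}$, at which point the two distances coincide; the word ``asymptotically'' is then supplied by the linear convergence in Theorem~\ref{thm:recovery_guarantee}. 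This is essentially a continuity/limiting argument with no explicit rate. You instead \emph{construct} a feasible complex orthogonal matrix from $\vP$ via the polar-type factor $\vQ=\vP(\vP^T\vP)^{-1/2}$, show $\vZ_\star\vQ$ agrees with $\tfrac12(\vZ_\star\vP+\vZ_\star\vP^{-T})$ up to second order in $\vE=\vP^T\vP-\vI$, and obtain an explicit $\bigl(1+O(\varepsilon)\bigr)$ sandwich — a strictly sharper, self-contained statement. The trade-off is exactly the one you flag: your expansion requires an a priori uniform bound on $\ln\vP\rn$ and $\ln\vP^{-1}\rn$, which the lemma as stated does not supply; in the paper this control only arrives via the iterate conditions \eqref{eq:nmcond1}--\eqref{eq:nmcond2} and Lemma~\ref{lem:existence}, not from $\distP{\vZ}{\vZ_\star}\le\varepsilon\sigma_r(\vM_\star)^{1/2}$ alone. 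The paper's qualitative collapse-of-the-constraint-set argument avoids ever bounding $\ln\vP\rn$, but at the cost of being informal (continuity of the infimum in $\varepsilon$ is asserted rather than proved) and of not yielding a rate. Both proofs are sound modulo the conditioning assumption that is implicit in how the lemma is actually invoked; yours is the one that would survive if a quantitative $O(\varepsilon)$ statement were needed downstream.
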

 \begin{proof}
     See Appendix \ref{apd:pf-lemma1-relatdist}. 
 \end{proof}
Lemma \ref{lem:relations-dist} tells us that $\distP{\vZ}{\vZ_\star}$ approximates $\distQ{\vZ}{\vZ_\star}$ well under some conditions. This inspires us to study $\distP{\vZ}{\vZ_\star}$ and build the corresponding convergence.} The distance metric $\distP{\vZ}{\vZ_\star}$ has a nice first-order optimality condition: 
\begin{align*}
 (\vZ_\star\vP_{\vZ})^H({\vZ-\vZ_\star\vP_{\vZ}})=(\vZ-\vZ_\star\vP_{\vZ}^{-T})^T\big(\overline{\vZ_\star\vP_{\vZ}^{-T}}\big), \numberthis\label{eq:optcond_primal} 
\end{align*}
where we define the optimal solution $\vP_{\vZ}$ for \eqref{eq:def_dist_invert}, if exists, as:
\begin{align*}
\vP_{\vZ}:=\underset{\substack{\vP \in \C^{r\times r} \\ \text { invertible }}}{\operatorname{argmin}}\sqrt{\ln\vZ-\vZ_\star\vP\rn_F^2+\ln\vZ-\vZ_\star\vP^{-T}\rn_F^2}.\numberthis \label{eq:def_optinvP}
\end{align*}

Establishing the convergence mechanism for SHGD  in terms of the distance metric $\distP{\vZ}{\vZ_\star}$ reveals a distinct deviation from that for PGD \cite{Cai2018}.
Unlike PGD, SHGD requires an analysis towards a different distance metric $\distP{\vZ}{\vZ_\star}$ and a different gradient direction under symmetric factorization. Also, the conditions about the local basin of attraction towards SHGD are characterized differently from PGD, 
which are small deviations from the true solution in terms of our distance metric and well-conditioness of $\vZ_\star\vP_{\vZ}$, $\vZ_\star\vP_{\vZ}^{-T}$. 
At last, we establish an inductive convergence analysis framework to verify that the iterates in Algorithm \ref{alg:SHGD} satisfy such conditions and thus enjoy linear convergence to the ground truth.
\subsection{Main results} \label{subsec:thm_linear_converg}
{We consider the sampling with replacement model as in \cite{Cai2018,Cai2019,Zhang2018,Zhang2019}. To distinguish from the sampling index set  $\Omega$ and $\Omega_k$, we denote ${\hat{\Omega}}=\{a_j|j=1,\cdots,m\}$ and ${\hat{\Omega}}_k=\{a_j|j=1,\cdots,\hat{m}\}$ as the corresponding sampling set with replacement, where the indices $a_j$ are drawn independently and uniformly from $\{0,\cdots,n-1\}$.}  


\begin{theorem}\label{thm:recovery_guarantee}
Assume $\vM_\star=\G\vy$ is $\mu_0$-incoherent.  Let $\eta=\frac{\eta^{\prime}}{\sigma_1(\vM_\star)}$, where $0<\eta^{\prime}\leq\frac{1}{54}$. Set $\sigma=\sigma_1(\vM_0)/(1-\varepsilon_0)$ and  $\mu\geq \mu_0$, where $\varepsilon_0$ is a small enough constant. If the number of observations satisfies $m\gtrsim O \(\varepsilon_0^{-2}\mu^2\kappa^4 r^2\log(n)\)$, the iterates $\vZ^k$ of Algorithm~\ref{alg:SHGD} satisfy
\begin{align*}
    \distP{\vZ^k}{\vZ_\star} \leq \left(1-\frac{11\eta^{\prime}}{100\kappa}\right)^k c_1\hspace{0.05cm}\sigma_r(\vM_\star)^{\frac{1}{2}},
\end{align*}
with probability at least $1-O(n^{-2})$, where $c_1>0$ is a universal constant, and $\kappa=\sigma_1(\vM_\star)/\sigma_r(\vM_\star)$.
\end{theorem}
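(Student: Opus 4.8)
\emph{Proof plan.} The argument follows the by-now-standard two-stage recipe for factorized nonconvex methods --- a spectral warm start followed by local linear contraction --- but every ingredient must be re-derived for the symmetric factorization $\vZ\vZ^T$, the distance $\distP{\cdot}{\cdot}$ of \eqref{eq:def_dist_invert}, and the non-compact complex orthogonal ambiguity. \textbf{Initialization.} Since the indices are drawn uniformly, $\E[\hat p^{-1}\G\P_{\hat\Omega_0}(\vy)]=\G\vy=\vM_\star$, so a matrix Bernstein bound exploiting the $\mu_0$-incoherence of $\vM_\star$ and the Hankel structure of $\G$ gives $\norm{\hat p^{-1}\G\P_{\hat\Omega_0}(\vy)-\vM_\star}\le\varepsilon_0\sigma_r(\vM_\star)$ once $\hat m\gtrsim\varepsilon_0^{-2}\mu_0\kappa^2 r\log n$. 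As $\vM_\star$ has rank $r$, the rank-$r$ Takagi truncation $\vM^0=\T_r(\cdot)$ obeys $\norm{\vM^0-\vM_\star}\lesssim\varepsilon_0\sigma_r(\vM_\star)$; a Takagi-perturbation estimate (aligning $\vU^0$ to $\vU_\star$ by a unitary, the symmetric analogue of Wedin's theorem) yields $\distQ{\tilde\vZ^0}{\vZ_\star}\lesssim\varepsilon_0\sqrt{\kappa}\,\sigma_r(\vM_\star)^{1/2}$, and Lemma~\ref{lem:relations-dist}(1) upgrades this to a bound on $\distP{\tilde\vZ^0}{\vZ_\star}$. Finally, since $\vZ_\star\in\CS$ and $\PC{\cdot}$ is nonexpansive, $\vZ^0=\PC{\tilde\vZ^0}$ stays $O(\varepsilon_0\sigma_r(\vM_\star)^{1/2})$-close to $\vZ_\star$, which in turn forces $\vP_{\vZ^0}$ close to $\mathcal Q$ and hence $\vZ_\star\vP_{\vZ^0}$, $\vZ_\star\vP_{\vZ^0}^{-T}$ well-conditioned (condition number $\approx\kappa^{1/2}$, in particular both lie in $\CS$); this is the induction base case and also certifies the hypotheses of Lemma~\ref{lem:existence}.

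\textbf{One-step contraction.} The crux is: if $\vZ^k\in\CS$, $\distP{\vZ^k}{\vZ_\star}$ is $O(\varepsilon_0\sigma_r(\vM_\star)^{1/2})$, and $\vZ_{\star 1}:=\vZ_\star\vP_{\vZ^k}$, $\vZ_{\star 2}:=\vZ_\star\vP_{\vZ^k}^{-T}$ are well-conditioned (so they lie in $\CS$), then $\distP{\vZ^{k+1}}{\vZ_\star}\le(1-\tfrac{11\eta'}{100\kappa})\distP{\vZ^k}{\vZ_\star}$. Using the same (generally suboptimal) $\vP_{\vZ^k}$ for $\vZ^{k+1}$ together with the nonexpansiveness of $\PC{\cdot}$,
\begin{align*}
\distPsq{\vZ^{k+1}}{\vZ_\star}&\le\sum_{j=1}^2\norm{\vZ^k-\eta\nabla f^{(k)}(\vZ^k)-\vZ_{\star j}}_F^2\\
&=\distPsq{\vZ^k}{\vZ_\star}-2\eta\,\Real\la\nabla f^{(k)}(\vZ^k),\,2\vZ^k-\vZ_{\star 1}-\vZ_{\star 2}\ra+2\eta^2\norm{\nabla f^{(k)}(\vZ^k)}_F^2.
\end{align*}
Hence it suffices to prove a restricted strong convexity bound $\Real\la\nabla f^{(k)}(\vZ^k),\,2\vZ^k-\vZ_{\star 1}-\vZ_{\star 2}\ra\gtrsim\sigma_r(\vM_\star)\,\distPsq{\vZ^k}{\vZ_\star}$ and a co-coercivity bound $\norm{\nabla f^{(k)}(\vZ^k)}_F^2\lesssim\sigma_1(\vM_\star)\,\Real\la\nabla f^{(k)}(\vZ^k),\,2\vZ^k-\vZ_{\star 1}-\vZ_{\star 2}\ra$; plugging in $\eta=\eta'/\sigma_1(\vM_\star)$ with $\eta'\le\tfrac{1}{54}$ then kills the quadratic term and leaves the claimed contraction factor after taking square roots. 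To prove these two inequalities I would split $\nabla f^{(k)}(\vZ^k)=(\vZ^k(\vZ^k)^T-\vM_\star)\bar{\vZ}^k+\mathcal E^{(k)}(\vZ^k)$, where the first (``population'') term comes from $\hat p^{-1}\E\,\P_{\hat\Omega_k}=\I$, $\G\G^*\vM_\star=\vM_\star$ and $\G\G^*+(\I-\G\G^*)=\I$, and $\mathcal E^{(k)}$ collects the sampling fluctuation. For the population term the two inequalities are pure linear algebra driven by the first-order optimality identity \eqref{eq:optcond_primal} and the well-conditionedness of $\vZ_{\star 1},\vZ_{\star 2}$. For $\mathcal E^{(k)}$ one controls $\norm{\mathcal E^{(k)}(\vZ^k)}_F$ and $|\Real\la\mathcal E^{(k)}(\vZ^k),2\vZ^k-\vZ_{\star 1}-\vZ_{\star 2}\ra|$ via an RIP-type concentration bound for $\hat p^{-1}\G\P_{\hat\Omega_k}\G^*-\G\G^*$ restricted to incoherent rank-$O(r)$ matrices; this is where the $\mu^2\kappa^4 r^2\log n$ sample size enters and why the $\CS$-projection --- which keeps $\norm{\vZ^k}_{2,\infty}$ small, hence $\vZ^k(\vZ^k)^T$ incoherent --- is indispensable. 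The fluctuation terms are $\varepsilon_0$-small and are absorbed into the constants above.

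\textbf{Induction and probability.} With the base case from the initialization and the one-step estimate above, I would induct on $k$: contraction preserves the ``small distance'' hypothesis (the iterate only gets closer), $\CS$-membership is automatic, and the well-conditionedness of $\vZ_\star\vP_{\vZ^{k+1}}$, $\vZ_\star\vP_{\vZ^{k+1}}^{-T}$ together with the existence of $\vP_{\vZ^{k+1}}$ (Lemma~\ref{lem:existence}) follow from the same closeness that the contraction guarantees. Because each round $k$ uses a fresh independent subsample $\hat\Omega_k$, the concentration events across the $K+1=O(\log n)$ rounds are independent; taking per-round failure probability $O(n^{-3})$ and a union bound gives the overall $1-O(n^{-2})$, while $m=(K+1)\hat m\gtrsim\varepsilon_0^{-2}\mu^2\kappa^4 r^2\log n$ supplies enough samples per round.

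\textbf{Expected main obstacle.} The difficulty concentrates in the population-gradient estimates under the complex orthogonal ambiguity. Since $\mathcal Q$ is non-compact and $\norm{\vZ\vQ}\neq\norm{\vZ}$ in general (Example~\ref{ex1}), one cannot align to a single best unitary nor bound $\norm{\vZ_{\star 1}}$, $\norm{\vZ_{\star 2}}$ by $\norm{\vZ_\star}$ for free; the proof must treat the pair $(\vZ_{\star 1},\vZ_{\star 2})$ jointly, use the precise optimality identity \eqref{eq:optcond_primal} to symmetrize the cross terms in $\Real\la\nabla f^{(k)}(\vZ^k),\,2\vZ^k-\vZ_{\star 1}-\vZ_{\star 2}\ra$, and propagate a quantitative well-conditionedness invariant on $\vZ_\star\vP_{\vZ^k}$, $\vZ_\star\vP_{\vZ^k}^{-T}$ through every iteration --- none of which appears in the PGD analysis of \cite{Cai2018}.
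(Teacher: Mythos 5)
Your proposal takes essentially the same route as the paper: a spectral initialization controlled via the Hankel concentration bound and a symmetric-factorization perturbation estimate (Lemma~\ref{lem:init}, Lemma~\ref{lem:symdist_inq}), followed by an inductive one-step contraction (Lemma~\ref{lem:dist_contract}) that lower-bounds $\Real\la\nabla f^{(k)},\vDeltaa^k+\vDeltab^k\ra$, upper-bounds $\|\nabla f^{(k)}\|_F^2$, exploits the first-order optimality identity \eqref{eq:optcond_primal} to symmetrize cross terms, invokes nonexpansiveness of $\P_\CS$ with the optimal alignment matrices shown to lie in $\CS$, and propagates well-conditionedness invariants on $\vZ_\star\vP_{\vZ^k}$, $\vZ_\star\vP_{\vZ^k}^{-T}$ through the induction with a union bound over the sample-splitting rounds. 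The cosmetic deviations (phrasing the gradient-norm control as a co-coercivity inequality rather than a direct bound against $\distPsq{\vZ^k}{\vZ_\star}$, and reusing $\vP_{\vZ^k}$ rather than realigning to the optimal $\tilde{\vP}^{k+1}$ before the projection step) are equivalent reorganizations of the same estimates and do not change the substance.
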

\begin{remark} \label{rmk:recoveryerr_vec}
By establishing the relationship between $\|\vx^k-\vx\|_2$ and $\distP{\vZ^k}{\vZ_\star}\leq \varepsilon$, we can obtain the $\varepsilon$ recovery accuracy for $\|\vx^k-\vx\|_2$, i.e., $\|\vx^k-\vx\|_2\leq O(\varepsilon)$, when $\distP{\vZ^k}{\vZ_\star}\leq \varepsilon$. See the proof in Appendix \ref{apd:pf_rmkrecov_vec}.
\end{remark}
\begin{remark}
 Our step size is on the order of $O(\frac{1}{\sigma_1(\vM_\star)})$ independent of $r$ and $\mu$, compared to the {conservative} stepsize $O(\frac{\sigma_r(\vM_\star)}{\mu^{2}c_s^2r^2\sigma_1^2(\vM_\star)})$
 of PGD \cite{Cai2018}.
 To attach $\epsilon$ recovery accuracy, the iteration complexity is $O(\kappa\log(\frac{1}{\epsilon}))$, independent of $r$ and $\mu$ also, compared to PGD \cite{Cai2018}. 
\end{remark}

\begin{remark}
 With a more careful sample-splitting strategy\footnote{One can use a non-uniform sample-splitting strategy, which is to set the size of the sample set for the initialization satisfy $\tilde{m} \gtrsim O\( \varepsilon_0^{-2}\mu\kappa^4 r^2\log(n)\)$ from Lemma~\ref{lem:init}, and $\hat{m}\gtrsim O\( \varepsilon_0^{-2}\mu^2\kappa^2 r^2\log(n)\)$ for the following $\kappa\log(1/\varepsilon)$ steps from Lemma~\ref{lem:dist_contract}.}, the sample complexity of SHGD is $O(\varepsilon_0^{-2}\mu^2\kappa^{4}r^2\log(n)\log(1/\varepsilon))$ to attach $\varepsilon$ accuracy in terms of our distance metric. We believe that the requirement for sample-splitting is just an artifact of our proof and can be removed safely, which we leave as a potential direction for future work.  
\end{remark}
{
\begin{remark}
    By combining the techniques from Theorem \ref{thm:recovery_guarantee} and \cite{Chen2015}, we can generalize the analysis to noisy measurements $\vy_e=\P_\Omega(\vx+\ve)$ and give an order-wise recovery bound for simplicity. Suppose the elements of the noise vector $\ve$ are independent zero-mean sub-Guassian random variables with parameter $\sigma$ \cite{Vershynin2018}. With high probability, one has the following robust recovery guarantee:
 \begin{align*}
     \|\vx-\vx^k\|_2\lesssim \|\vZ^k(\vZ^k)^T-\vM_\star\|_F\lesssim\sigma\sqrt{\frac{n^2}{m}},
 \end{align*}
as long as $\sigma\ll\sigma_r(\vM_\star)$ and $m\geq O(r^2\log(n))$.
\end{remark}
}
\subsection{Analytical framework} \label{subsec:analysis}

{First of all, we introduce the existence of the minimizer of the problem \eqref{eq:def_dist_invert} under some conditions; see Lemma~\ref{lem:existence} in Appendix \ref{apd:support-lems} for the details.  One can easily verify the existence of the minimizer of the distance metric for the iterates in our Algorithm \ref{alg:SHGD} inductively by invoking Lemma~\ref{lem:existence}. Consequently, we omit the debation about such existence issues for simplicity of presentation.  }

{
We apply a two-stage analysis, combined with an inductive framework. Firstly we show some good properties of the initialization, which are about the local basin of attraction specified for SHGD. Then we show that the iterates returned by Algorithm \ref{alg:SHGD} can converge linearly to the true solution  provided they are in such a local basin of attraction. 
 }
 \begin{lemma}[Initialization conditions]\label{lem:init}
Suppose $\vM_\star=\G\vy$ is $\mu_0$-incoherent, $\sigma=\sigma_1(\vM_0)/(1-\varepsilon_0)$, $\mu\geq \mu_0$ and $\varepsilon_0$ is a small enough constant. When $\hat{m}\gtrsim O\( \varepsilon_0^{-2}\mu\kappa^4 r^2\log(n)\)$,  one has the following results with probability at least $1-n^{-2}$,
\begin{align*}
  \distP{\vZ^0}{\vZ_\star}&\leq 1.6\kappa^{-1}\varepsilon_0\sigma_r(\vM_\star)^{\frac{1}{2}};\numberthis \label{eq:initdist_afterproj}\\
 \sigma_r\(\vZ_{\star {i}}^0\)&\geq\(1-6.4\kappa^{-1}\varepsilon_0\)\sigma_r\(\vM_\star\)^{\frac{1}{2}},i=1,2,\numberthis \label{eq:init_nmbd_f1}\\
 \sigma_1\(\vZ_{\star{i}}^0\)&\leq \(1+6.4\kappa^{-1}\varepsilon_0\)\sigma_1\(\vM_\star\)^{\frac{1}{2}}, i=1,2;\numberthis \label{eq:init_nmbd_f2}\\
     \sigma&=\frac{\sigma_1(\vM^0)}{1-\varepsilon_0}\geq\sigma_1(\vM_\star).\numberthis 
\end{align*} 
 Besides, the optimal solution $\vP^0$ to $\distP{\vZ^0} {\vZ_\star}$ exists and we set $\vZ^0_{\star{1}}=\vZ_\star\vP^0,\vZ^0_{\star{2}}=\vZ_\star(\vP^0)^{-T}$.
 \end{lemma}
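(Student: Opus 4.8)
The plan is to analyze the one-step spectral initialization $\vM^0 = \T_r(\hat p^{-1}\G\P_{\hat\Omega_0}(\vy))$ by first controlling the operator-level perturbation $\|\hat p^{-1}\G\P_{\hat\Omega_0}\G^* - \I\|$ acting on incoherent rank-$r$ Hankel matrices, and then translating this into bounds on $\vM^0$, its Takagi factor, and finally the projected factor $\vZ^0=\PC{\tilde\vZ^0}$. Concretely, I would first invoke a restricted-isometry-type bound (this is the standard Hankel RIP / Bernstein-matrix-concentration argument used in \cite{Cai2018,Cai2019}, applied to the square symmetric case here): with $\hat m \gtrsim \varepsilon_0^{-2}\mu\kappa^4 r^2\log(n)$ samples drawn with replacement, with probability at least $1-n^{-2}$ one has $\|\G\(\hat p^{-1}\P_{\hat\Omega_0}-\I\)\G^*(\vM)\| \le \delta \|\vM\|$ for all $\mu_0$-incoherent rank-$r$ Hankel matrices $\vM$, with $\delta = c\kappa^{-2}\varepsilon_0$ for a small constant $c$. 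Applying this to $\vM_\star$ and using that $\T_r$ is the best rank-$r$ approximation gives $\|\vM^0-\vM_\star\|\le 2\delta\|\vM_\star\|$ and hence $\|\vM^0-\vM_\star\|_F\le 2\sqrt{2r}\,\delta\|\vM_\star\| \le c'\kappa^{-1}\varepsilon_0\,\sigma_r(\vM_\star)$ after absorbing $\sqrt r$ and $\kappa$ into the sample requirement; Weyl's inequality then yields $\sigma=\sigma_1(\vM^0)/(1-\varepsilon_0)\ge\sigma_1(\vM_\star)$.

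Next I would pass from the matrix bound to the factor bound. Since $\vM^0$ and $\vM_\star$ are both complex symmetric rank-$r$ matrices with Takagi factorizations $\vM^0=\vZ^0_T(\vZ^0_T)^T$ and $\vM_\star=\vZ_\star\vZ_\star^T$, I need a symmetric analogue of the standard ``$\|\vM_1-\vM_2\|_F$ controls the factor distance'' lemma — but phrased in terms of the new distance metric $\distP{\cdot}{\cdot}$ rather than the unitary-alignment distance. Here I would use Lemma~\ref{lem:relations-dist}: first bound $\distQ{\tilde\vZ^0}{\vZ_\star}$ by the usual perturbation inequality $\distQ{\tilde\vZ^0}{\vZ_\star}^2 \lesssim \|\vM^0-\vM_\star\|_F^2/\sigma_r(\vM_\star)$ (valid because $\sigma_r(\vM^0)$ is close to $\sigma_r(\vM_\star)$, which follows from Weyl once $\delta\kappa<1/2$), and then use $\distP{\cdot}{\cdot}\le\sqrt2\,\distQ{\cdot}{\cdot}$. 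This gives $\distP{\tilde\vZ^0}{\vZ_\star}\le 1.5\kappa^{-1}\varepsilon_0\sigma_r(\vM_\star)^{1/2}$ before projection. The projection step $\vZ^0=\PC{\tilde\vZ^0}$ is non-expansive on the convex set $\CS$, and since $\vZ_\star\vP \in\CS$ for the optimal $\vP$ (because incoherence gives $\|\vZ_\star\vP\|_{2,\infty}$ within the radius defining $\CS$ — this uses $\sigma\ge\sigma_1(\vM_\star)$ and $\mu\ge\mu_0$), projecting $\tilde\vZ^0$ can only decrease its distance to $\vZ_\star\vP$; carrying the $\sqrt2$ through both alignment factors $\vZ_\star\vP$ and $\vZ_\star\vP^{-T}$ and re-optimizing gives $\distP{\vZ^0}{\vZ_\star}\le 1.6\kappa^{-1}\varepsilon_0\sigma_r(\vM_\star)^{1/2}$, establishing \eqref{eq:initdist_afterproj}. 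Existence of the minimizer $\vP^0$ then follows by checking the hypotheses of Lemma~\ref{lem:existence} with this $\varepsilon$.

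Finally, \eqref{eq:init_nmbd_f1}–\eqref{eq:init_nmbd_f2} follow from \eqref{eq:initdist_afterproj} by a Weyl/perturbation argument on the optimal alignment matrices: writing $\vZ^0_{\star i}$ for $\vZ_\star\vP^0$ and $\vZ_\star(\vP^0)^{-T}$, the bound $\|\vZ^0-\vZ^0_{\star i}\|_F\le\distP{\vZ^0}{\vZ_\star}$ together with $\sigma_r(\vZ^0)^2$ being close to $\sigma_r(\vM^0)\approx\sigma_r(\vM_\star)$ gives $\sigma_r(\vZ^0_{\star i})\ge\sigma_r(\vZ^0)-\|\vZ^0-\vZ^0_{\star i}\|\ge(1-6.4\kappa^{-1}\varepsilon_0)\sigma_r(\vM_\star)^{1/2}$, and symmetrically for $\sigma_1$; the constant $6.4$ just tracks the accumulated factors of $\sqrt2$ and the slack from $\sigma_r(\vZ^0)$ vs.\ $\sigma_r(\vM_\star)^{1/2}$. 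The main obstacle I expect is the second step — establishing the symmetric perturbation bound linking $\|\vM^0-\vM_\star\|_F$ to $\distP{\tilde\vZ^0}{\vZ_\star}$ — because the usual Procrustes/alignment lemmas are stated for the unitary or PSD case, and here one must handle the non-compactness of the complex-orthogonal group and verify that the relevant infimum in $\distQ{\cdot}{\cdot}$ is attained and well-behaved near $\vZ_\star$; Lemma~\ref{lem:relations-dist} is precisely the tool that lets us sidestep working directly with the non-compact group, so the real work is in applying it carefully and tracking constants.
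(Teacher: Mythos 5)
Your high-level structure matches the paper's: a spectral perturbation bound on $\|\vM^0-\vM_\star\|$, Weyl to get $\sigma\geq\sigma_1(\vM_\star)$, a factor-level perturbation lemma linking $\|\vM^0-\vM_\star\|_F$ to $\distP{\tilde\vZ^0}{\vZ_\star}$, showing the optimal alignment matrices of $\tilde\vZ^0$ lie in $\CS$ so that $\P_\CS$ is non-expansive toward them, and then triangle/Weyl arguments for the conditioning bounds. The first part differs cosmetically — the paper cites Lemma~\ref{lem:spec_init} directly and multiplies by $\sqrt{2r}$ rather than passing through a uniform RIP statement — but both reach the same $\kappa^{-1}\varepsilon_0\sigma_r(\vM_\star)$ bound.

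The genuine gap is at the step you yourself flag as the main obstacle. You propose to bound $\distQ{\tilde\vZ^0}{\vZ_\star}^2 \lesssim \|\vM^0-\vM_\star\|_F^2/\sigma_r(\vM_\star)$ by ``the usual perturbation inequality'' and then pass to $\distP$ via Lemma~\ref{lem:relations-dist}. Neither ingredient does the job. The ``usual'' Procrustes-type bounds are for the Hermitian PSD case $\vM=\vZ\vZ^H$ with unitary ambiguity; for the complex-symmetric case $\vM=\vZ\vZ^T$ no such off-the-shelf bound exists, and the natural ambiguity group $\mathcal{Q}$ is non-compact, so even the existence of an aligner for $\distQ$ is in question. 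Lemma~\ref{lem:relations-dist} only relates $\distP$ to $\distQ$; it gives no bridge from either distance to $\|\vM^0-\vM_\star\|_F$. What the paper actually uses is Lemma~\ref{lem:symdist_inq}, a new result proved via the real orthogonal Procrustes problem on $\Real(\vZ_\star^H\vZ)$ together with the stacked representation $\vF=[\vZ;\bar\vZ]$: it directly shows $\distPsq{\tilde\vZ^0}{\vZ_\star}\leq 2\min_{\vQ\in\mathcal{O}}\|\tilde\vZ^0-\vZ_\star\vQ\|_F^2\leq\frac{\sqrt{2}+1}{\sigma_r(\vM_\star)}\|\vM^0-\vM_\star\|_F^2$, restricting to the compact real orthogonal group $\mathcal{O}\subset\mathcal{Q}$ to avoid the non-compactness issue entirely. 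Your proposal does not supply this lemma or an argument for it, so the chain of inequalities cannot be completed. A secondary, more minor gap: for \eqref{eq:init_nmbd_f1}--\eqref{eq:init_nmbd_f2} you argue via ``$\sigma_r(\vZ^0)^2$ being close to $\sigma_r(\vM^0)$,'' but after the row-wise clipping $\vZ^0=\PC{\tilde\vZ^0}$ one no longer has $\vZ^0(\vZ^0)^T=\vM^0$, so $\sigma_r(\vZ^0)$ is not directly tied to $\sigma_r(\vM^0)$; the paper instead chains triangle inequalities through $\vZ_\star\vQ_z$, $\tilde\vZ^0_{\star i}$ and $\vZ^0_{\star i}$, which is what actually produces the constant $6.4$.
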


\begin{proof}
    See Appendix \ref{apd:proof-init-lem}.
\end{proof}


Next, we suppose the iterate in the $k$-th step satisfies the following conditions:
\begin{align*}
 &\distP{\vZ^{k}}{\vZ_\star}\leq 1.6\kappa^{-1}\varepsilon_0 \left(1-\frac{11\eta^{\prime}}{100\kappa}\right)^k\sigma_r(\vM_\star)^\frac{1}{2};\numberthis\label{eq:distancebd} \\
    &\sigma_r(\vZ_{\star{i}}^k)\geq\Big[1-6.4\kappa^{-1}\varepsilon_0\sum_{t=0}^k(1-\frac{11\eta^{\prime}}{100\kappa})^t\Big]\sigma_r\(\vM_\star\)^{\frac{1}{2}},\numberthis\label{eq:nmcond1}
    \\ 
    &\sigma_1(\vZ_{\star{i}}^k)\leq \Big[(1+6.4\kappa^{-1}\varepsilon_0\sum_{t=0}^k(1-\frac{11\eta^{\prime}}{100\kappa})^t\Big]\sigma_1\(\vM_\star\)^{\frac{1}{2}}, \numberthis\label{eq:nmcond2}
\end{align*}

\noindent where $i=1,2$. Besides, the optimal solution $\vP^k$ to $\distP{\vZ^k}{\vZ_\star}$ exists and  $\vZ^k_{\star{1}}=\vZ_\star\vP^k,\vZ^k_{\star{2}}=\vZ_\star(\vP^k)^{-T}$.

Next, we point out the following distance contraction result for the gradient update that $\tilde{\vZ}^{k+1}=\vZ^k-\eta \nabla f^{(k)}(\vZ^k)$ under conditions \eqref{eq:distancebd}, \eqref{eq:nmcond1}, and \eqref{eq:nmcond2}.
  
\begin{figure*}[!t]
		\centering
		\includegraphics[width=0.19\linewidth]{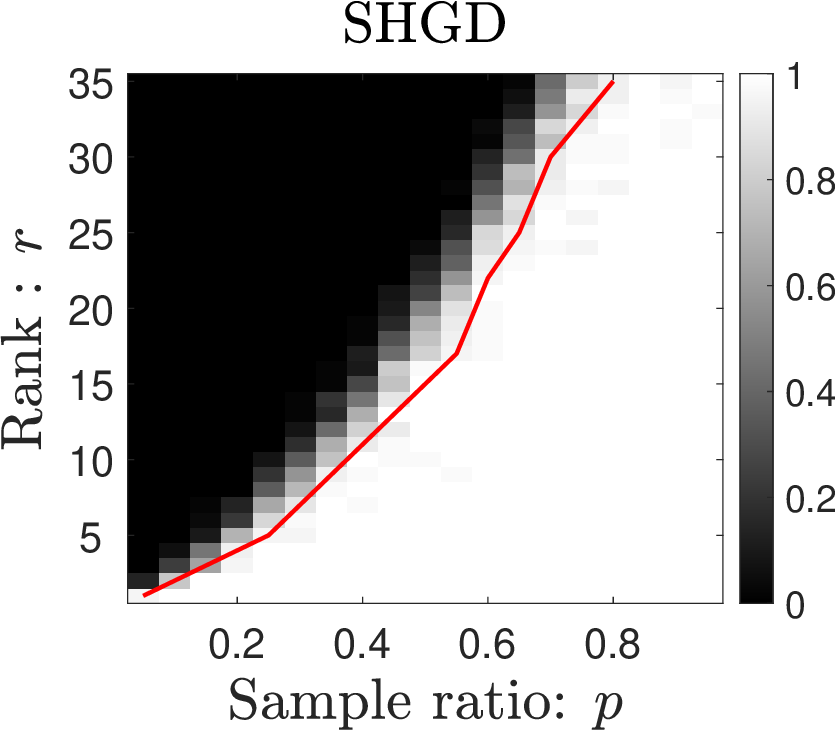}  \hfill
  \includegraphics[width=0.19\linewidth]{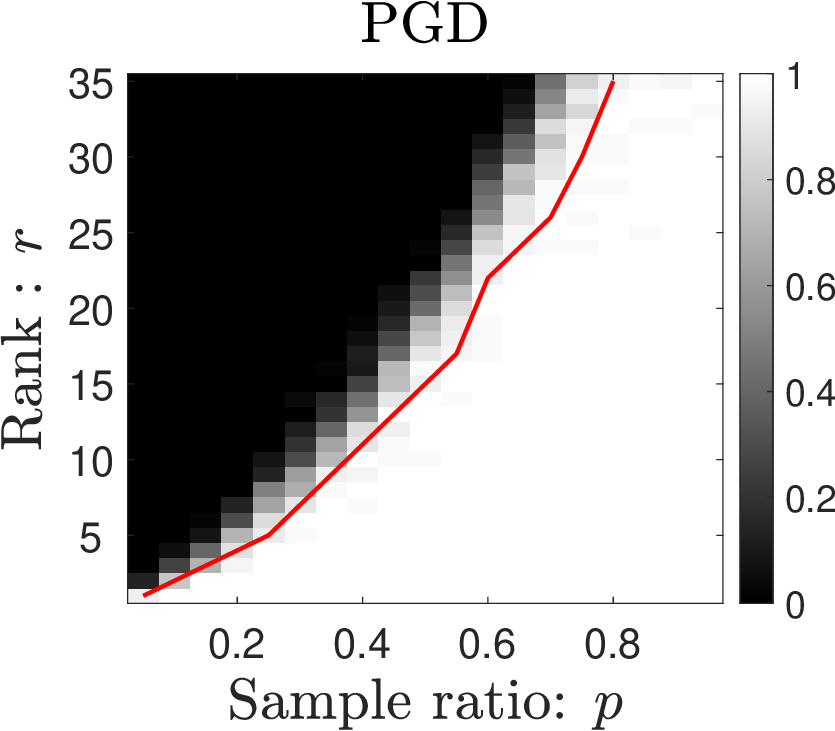}  \hfill
  \includegraphics[width=0.19\linewidth]{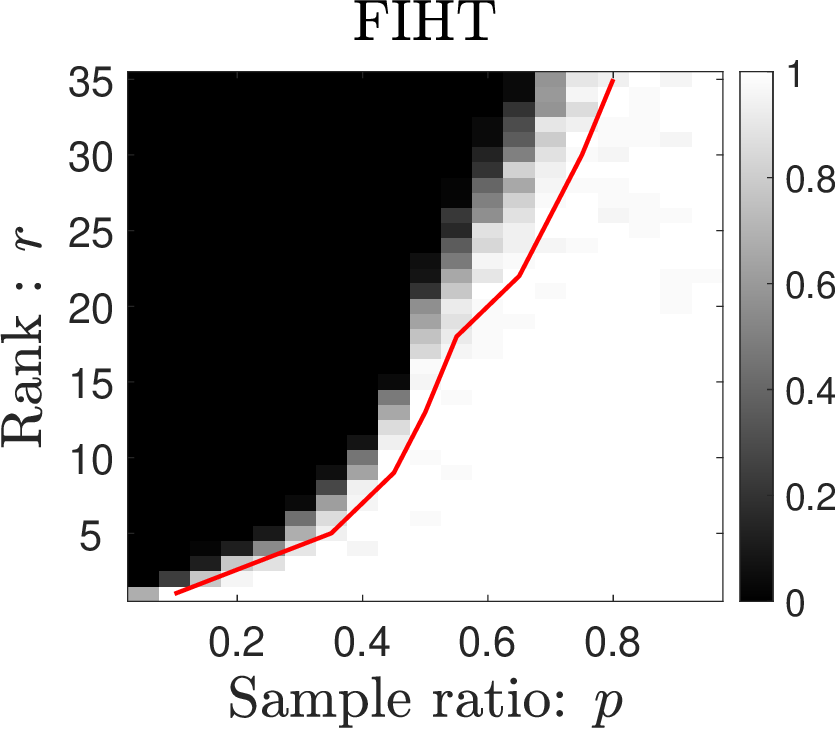} \hfill
  \includegraphics[width=0.19\linewidth]{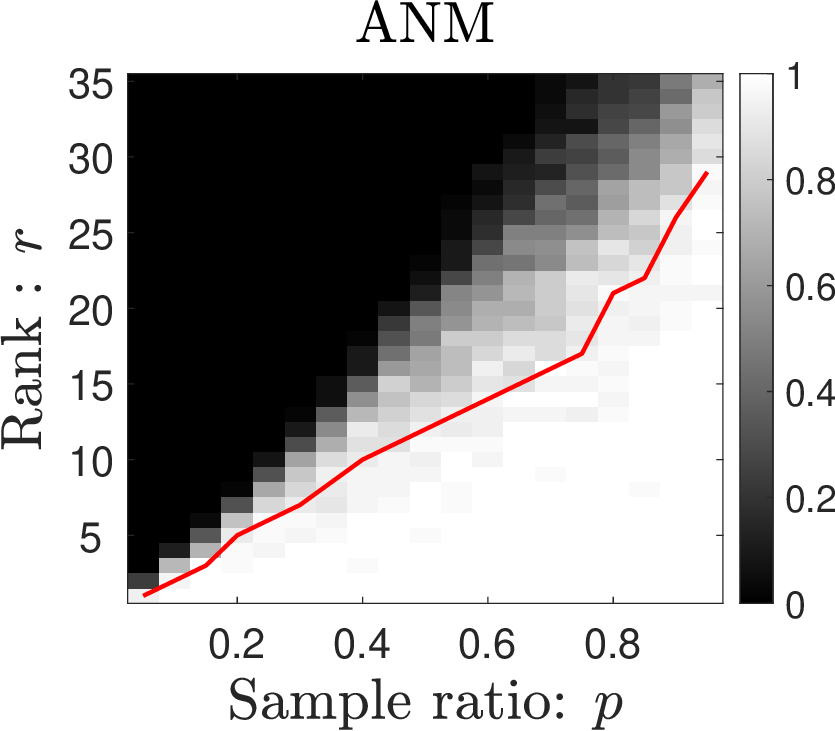}
  \hfill
  \includegraphics[width=0.19\linewidth]{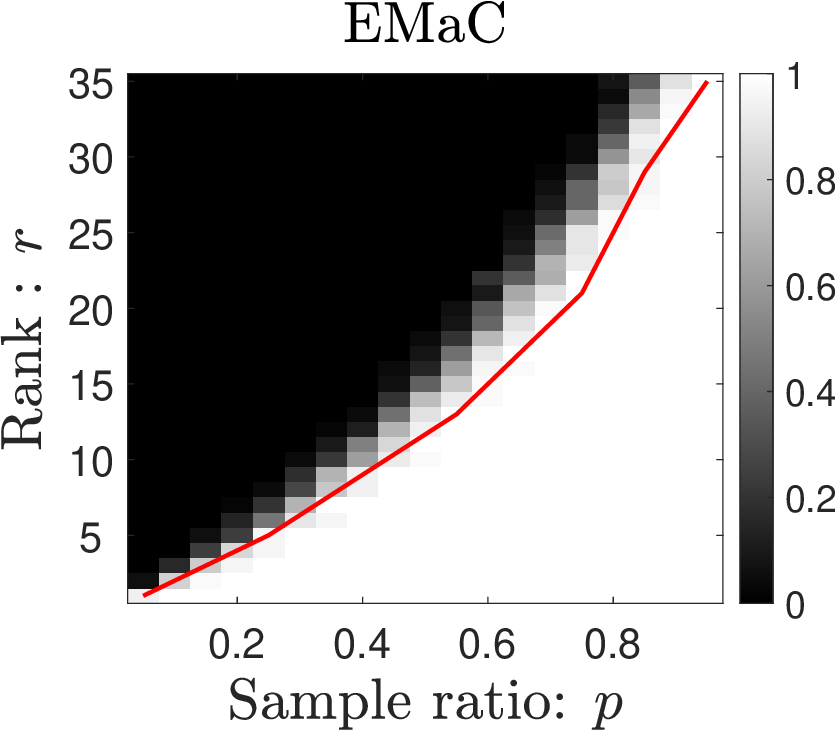} 
		\caption{The comparisons of different algorithms in terms of phase transitions under frequencies without separation.  The red curve is the 90\% success rate curve.}
		\label{fig:phasetrans_withoutsep}
  \end{figure*}

 \begin{figure*}[!t]
		\centering
		\includegraphics[width=0.19\linewidth]{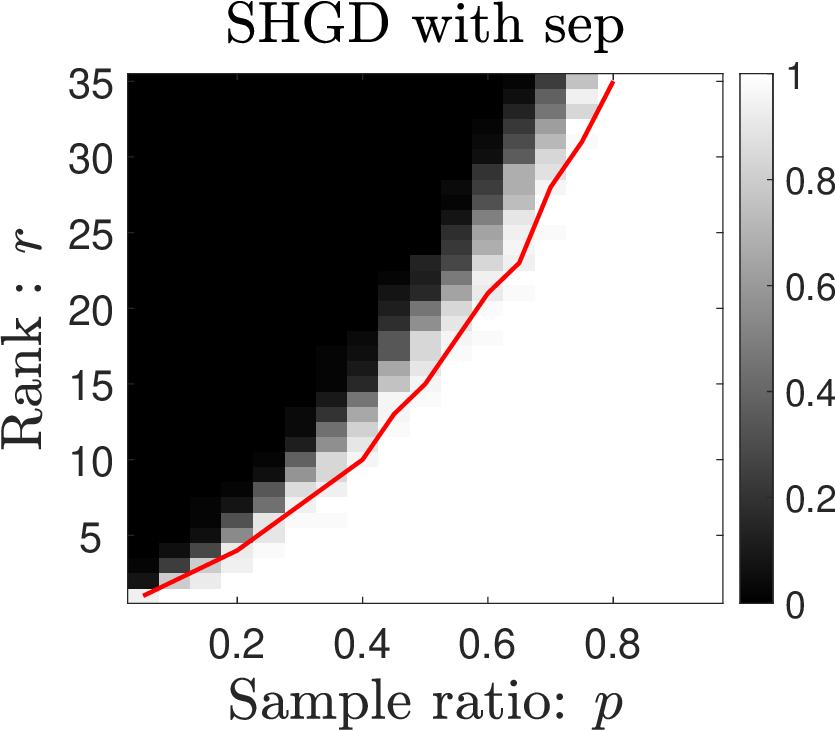}  \hfill
  \includegraphics[width=0.19\linewidth]{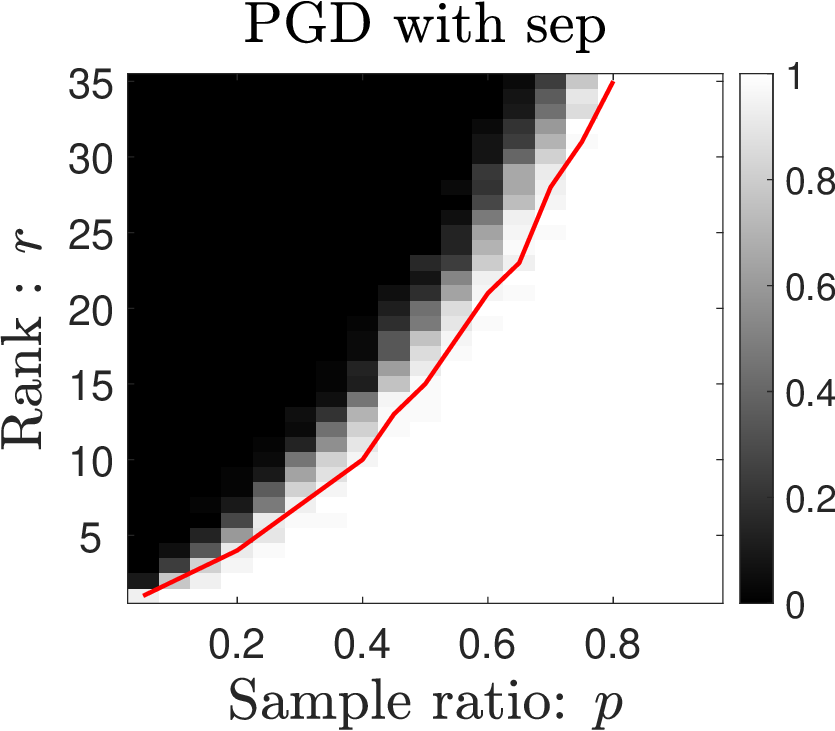}  \hfill
  \includegraphics[width=0.19\linewidth]{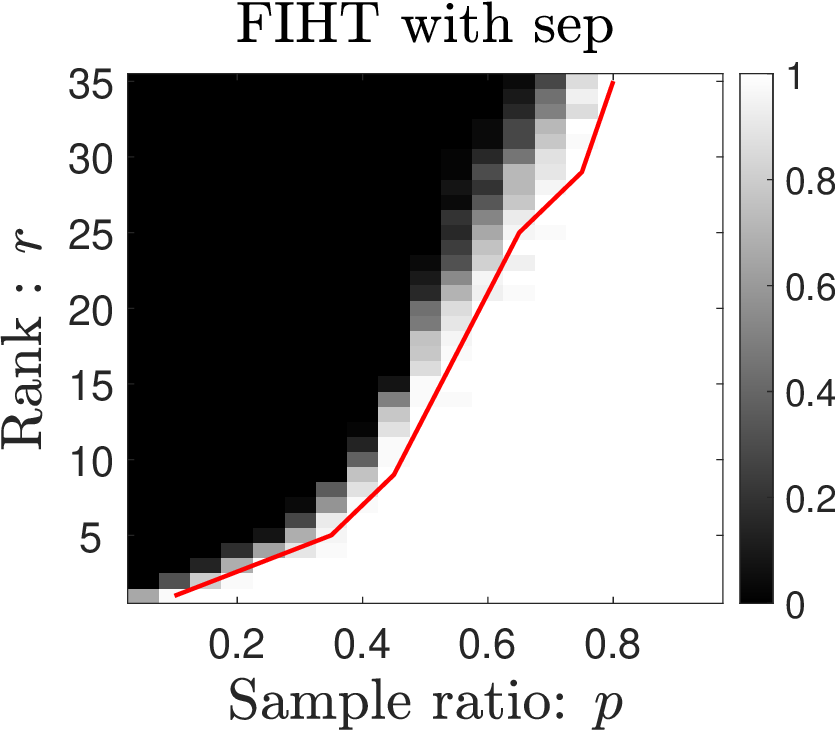} \hfill
  \includegraphics[width=0.19\linewidth]{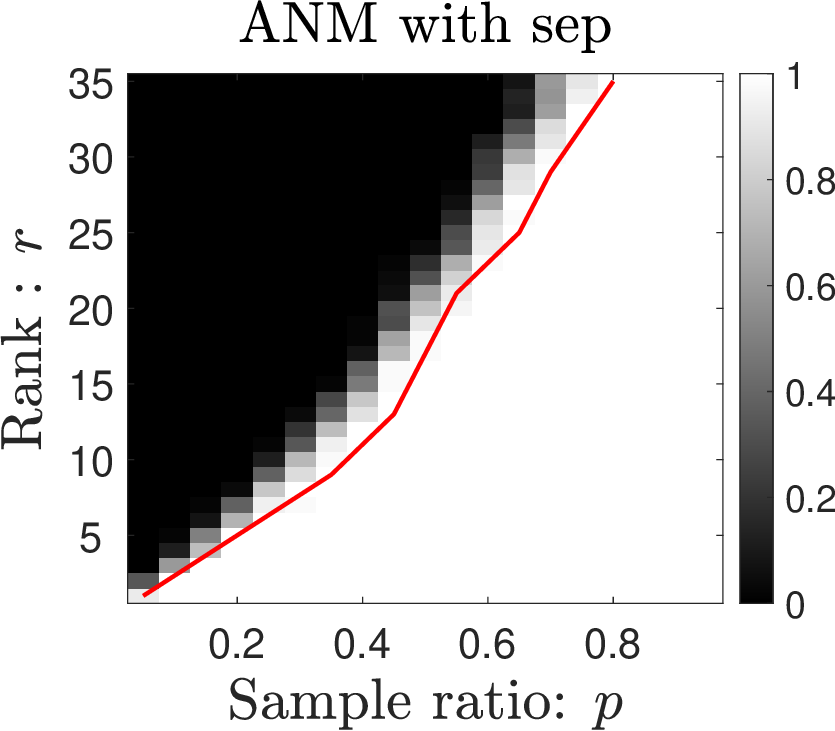} \hfill
  \includegraphics[width=0.19\linewidth]{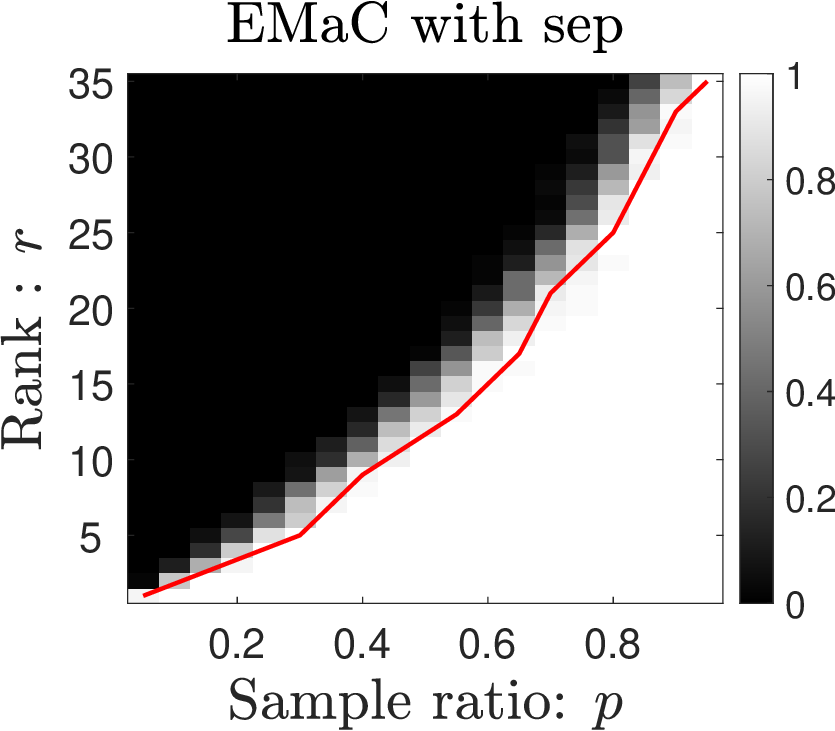} 
		\caption{The comparisons of different algorithms in terms of  phase transitions under  
 frequencies with separation. The red curve is the 90\% success rate curve.}
		\label{fig:phasetrans_withsep}
\end{figure*}

\begin{lemma}[Distance contraction]\label{lem:dist_contract}
Suppose $\vM_\star=\G\vy$ is $\mu_0$-incoherent, and the conditions of \eqref{eq:distancebd}, \eqref{eq:nmcond1}, and \eqref{eq:nmcond2} hold. Let $\eta=\frac{\eta^{\prime}}{\sigma_1(\vM_\star)}$ where $0<\eta^{\prime}\leq\frac{1}{54}$. When $\hat{m}\gtrsim O\( \varepsilon_0^{-2}\mu^2\kappa^2 r^2\log(n)\)$, the iterate $\tilde{\vZ}^{k+1}$ of Algorithm~\ref{alg:SHGD} satisfies
    \begin{align*}
 \distP{\tilde{\vZ}^{k+1}}{\vZ_\star} \leq \left(1-\frac{11\eta^{\prime}}{100\kappa}\right)\distP{\vZ^k}{\vZ_\star},\numberthis \label{eq:dist_contraction_npj}
\end{align*}
with probability at least $1-c_2 n^{-2}$, and $\varepsilon_0$ is a small enough constant.
\end{lemma}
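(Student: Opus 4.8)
The plan is a one-step descent argument in the distance metric $\distP{\cdot}{\vZ_\star}$, with the population curvature estimate replaced by one compatible with the complex-orthogonal ambiguity. Let $\vP^k$ be the optimal alignment for $\distP{\vZ^k}{\vZ_\star}$ (its existence follows from Lemma~\ref{lem:existence} under \eqref{eq:distancebd}--\eqref{eq:nmcond2}), and put $\vZ^k_{\star{1}}=\vZ_\star\vP^k$, $\vZ^k_{\star{2}}=\vZ_\star(\vP^k)^{-T}$, $\vDelta_1=\vZ^k-\vZ^k_{\star{1}}$, $\vDelta_2=\vZ^k-\vZ^k_{\star{2}}$, so that $\distP{\vZ^k}{\vZ_\star}^2=\norm{\vDelta_1}_F^2+\norm{\vDelta_2}_F^2$ and, using $\vZ^k_{\star{1}}(\vZ^k_{\star{2}})^T=\vM_\star$, $\vN:=\vZ^k(\vZ^k)^T-\vM_\star=\vZ^k_{\star{1}}\vDelta_2^T+\vDelta_1(\vZ^k_{\star{2}})^T+\vDelta_1\vDelta_2^T$. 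Taking $\vP^k$ as a (suboptimal) alignment for $\tilde{\vZ}^{k+1}=\vZ^k-\eta\nabla f^{(k)}(\vZ^k)$ gives
\begin{align*}
\distP{\tilde{\vZ}^{k+1}}{\vZ_\star}^2\leq\distP{\vZ^k}{\vZ_\star}^2-2\eta\,\Real\la\nabla f^{(k)}(\vZ^k),\vDelta_1+\vDelta_2\ra+2\eta^2\norm{\nabla f^{(k)}(\vZ^k)}_F^2.
\end{align*}
Since $\vy=\G^*\vM_\star$, $\G^*\G=\I$ and $(\I-\G\G^*)\vM_\star=\bm{0}$, the gradient splits as $\nabla f^{(k)}(\vZ^k)=\vN\bar{\vZ}^k+\mathcal{E}_k\bar{\vZ}^k$ with deviation $\mathcal{E}_k=(\hat{p}^{-1}\G\P_{\Omega_k}\G^*-\G\G^*)(\vN)$.

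The core of the argument is a ``regularity condition'' for the population part, of the form $\Real\la\vN\bar{\vZ}^k,\vDelta_1+\vDelta_2\ra\geq c_a\norm{\vN}_F^2+c_b\,\sigma_r(\vM_\star)(\norm{\vDelta_1}_F^2+\norm{\vDelta_2}_F^2)$. I would establish it by substituting the quadratic expansion of $\vN$ above, using the first-order optimality condition \eqref{eq:optcond_primal}, i.e.\ $(\vZ^k_{\star{1}})^H\vDelta_1=\vDelta_2^T\overline{\vZ^k_{\star{2}}}$, to reorganize and cancel the sign-indefinite first-order cross terms in $\vDelta_1,\vDelta_2$, invoking $\sigma_r(\vZ^k_{\star{i}})\geq\tfrac{\sqrt2}{2}\sigma_r(\vM_\star)^{1/2}$ and $\sigma_1(\vZ^k_{\star{i}})\lesssim\sigma_1(\vM_\star)^{1/2}$ ($i=1,2$) from \eqref{eq:nmcond1}--\eqref{eq:nmcond2} to extract positive-definite quadratic forms in $\vDelta_i$, and using $\norm{\vDelta_i}_F\leq1.6\kappa^{-1}\varepsilon_0\sigma_r(\vM_\star)^{1/2}$ from \eqref{eq:distancebd} (with $\varepsilon_0$ small) to absorb the cubic and quartic remainders. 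The $\sigma_r(\vM_\star)$-term additionally relies on a companion lower bound $\norm{\vN}_F\gtrsim\sigma_r(\vM_\star)^{1/2}\distP{\vZ^k}{\vZ_\star}$ adapted to the metric \eqref{eq:def_dist_invert}.

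For the sampling deviation I would invoke a restricted-isometry bound for $\hat{p}^{-1}\G\P_{\Omega_k}\G^*-\G\G^*$ on incoherent rank-$O(r)$ matrices: $\Omega_k$ is independent of $\vZ^k$ by sample splitting, $\vZ^k\in\CS$ is incoherent by \eqref{eq:set_C}, and $\vZ^k_{\star{i}}=\vZ_\star\vP^k$ is incoherent because $\vP^k$ is well conditioned by \eqref{eq:nmcond1}--\eqref{eq:nmcond2}; hence for $\hat{m}\gtrsim O(\varepsilon_0^{-2}\mu^2\kappa^2 r^2\log n)$, with probability $1-c_2n^{-2}$ one gets $\norm{\mathcal{E}_k}_F\lesssim(\varepsilon_0/\kappa)\norm{\vN}_F$, so that $|\Real\la\mathcal{E}_k\bar{\vZ}^k,\vDelta_1+\vDelta_2\ra|$ is small enough to be absorbed into the two curvature terms above. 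Combined with $\norm{\vZ^k}\lesssim\sigma_1(\vM_\star)^{1/2}$ (from \eqref{eq:distancebd}, \eqref{eq:nmcond2}), the same estimates yield the smoothness bound $\norm{\nabla f^{(k)}(\vZ^k)}_F^2\lesssim\sigma_1(\vM_\star)\norm{\vN}_F^2$.

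Plugging these estimates into the expansion and setting $\eta=\eta'/\sigma_1(\vM_\star)$ with $0<\eta'\leq\tfrac{1}{54}$, the $c_a\norm{\vN}_F^2$ curvature term pays for the $2\eta^2\norm{\nabla f^{(k)}(\vZ^k)}_F^2$ term (so no $\kappa$-dependence in the stepsize is needed), and what survives is $\distP{\tilde{\vZ}^{k+1}}{\vZ_\star}^2\leq(1-\tfrac{22\eta'}{100\kappa})\distP{\vZ^k}{\vZ_\star}^2$; square-rooting via $\sqrt{1-x}\leq1-x/2$ gives \eqref{eq:dist_contraction_npj}. The main obstacle is the regularity condition: unlike the Hermitian or asymmetric analyses, where a norm-preserving unitary alignment is available, here $\vZ^k_{\star{1}}$ and $\vZ^k_{\star{2}}$ generally differ and the coupling between $\vDelta_1$ and $\vDelta_2$ must be controlled entirely through the non-unitary optimality condition \eqref{eq:optcond_primal}, which makes the $\sigma_r(\vM_\star)$-level one-point strong convexity delicate to establish.
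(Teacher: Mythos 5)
Your overall scaffold matches the paper's: you start from the suboptimal-alignment expansion $\distPsq{\tilde{\vZ}^{k+1}}{\vZ_\star}\leq\distPsq{\vZ^k}{\vZ_\star}-2\eta\Real\la\nabla f^{(k)},\vDelta_1+\vDelta_2\ra+2\eta^2\norm{\nabla f^{(k)}}_F^2$, split the gradient into population and sampling pieces, use the decomposition $\vN=\vZ^k_{\star1}\vDelta_2^T+\vDelta_1(\vZ^k_{\star2})^T+\vDelta_1\vDelta_2^T$, invoke the first-order optimality condition \eqref{eq:optcond_primal} to kill the bad cross term, and use the singular-value bounds from Claim~\ref{claim:regcond_for_factors} to extract $\sigma_r(\vM_\star)$-level curvature. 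That part of the plan is correct, and indeed the paper's key step is exactly your ``companion lower bound'' in the equivalent form $\norm{\vDelta_1(\vZ^k_{\star2})^T+\vZ^k_{\star1}\vDelta_2^T}_F^2\geq\tfrac12\sigma_r(\vM_\star)(\norm{\vDelta_1}_F^2+\norm{\vDelta_2}_F^2)$.

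However, there is a genuine gap in how you propose to handle the sampling deviation. You claim that with $\hat m\gtrsim O(\varepsilon_0^{-2}\mu^2\kappa^2 r^2\log n)$ samples one gets $\norm{\mathcal{E}_k}_F\lesssim(\varepsilon_0/\kappa)\norm{\vN}_F$ for $\mathcal{E}_k=\G(\hat p^{-1}\P_{\hat\Omega_k}-\I)\G^*(\vN)$, and that this plus $\norm{\vZ^k}\lesssim\sigma_1(\vM_\star)^{1/2}$ yields both the cross-term control and the smoothness bound $\norm{\nabla f^{(k)}}_F^2\lesssim\sigma_1(\vM_\star)\norm{\vN}_F^2$. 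This Frobenius-level RIP statement is false for a sub-sampled Hankel operator. Since $\G$ is an isometry, $\norm{\mathcal{E}_k}_F=\norm{(\hat p^{-1}\P_{\hat\Omega_k}-\I)\G^*\vN}_2$, and the observed coordinates of $\G^*\vN$ are inflated by a factor $\hat p^{-1}$; for a spread-out (incoherent) $\vN$ this gives $\norm{\mathcal{E}_k}_F\asymp\hat p^{-1/2}\norm{\G^*\vN}_2$, which for $\hat p=\hat m/n$ of order $\kappa^4r^2\log n/n$ is enormously larger than $(\varepsilon_0/\kappa)\norm{\vN}_F$. No amount of incoherence fixes this: the low-rank structure of $\vN$ does not make $(\hat p^{-1}\P_{\hat\Omega_k}-\I)$ a Frobenius contraction.

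What actually makes the argument work — and what the paper does — is that $\mathcal{E}_k$ is never measured in Frobenius norm. Every occurrence of the deviation operator is sandwiched between two \emph{structured} matrices: the tangent-space part $\vDelta_1(\vZ^k_{\star2})^T+\vZ^k_{\star1}\vDelta_2^T$ is tested against itself or against $\tilde\vX(\vZ^k)^T$ using the tangent-space concentration $\norm{\P_T\G(\hat p^{-1}\P_{\hat\Omega_k}-\I)\G^*\P_T}$ of Lemma~\ref{lem:tangent_completion_inq}, and the residual $\vDelta_1\vDelta_2^T$ is handled via the bilinear concentration inequality of Lemma~\ref{lem:Hankcom_tr_inq}, which exploits the $\norm{\cdot}_{2,\infty}$ bounds coming from $\vZ^k\in\CS$ and from \eqref{eq:nmcond2}. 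In particular, the smoothness term is bounded as $I_4=\max_{\norm{\tilde\vX}_F=1}\lab\la\G(\hat p^{-1}\P_{\hat\Omega_k}-\I)\G^*\vN,\tilde\vX(\vZ^k)^T\ra\rab$, again a pairing against a structured test matrix, rather than via a raw $\norm{\mathcal{E}_k}_F$ bound. Your proof plan needs to be revised to replace the blanket $\norm{\mathcal{E}_k}_F$ control by these two structured concentration estimates; without that replacement the smoothness and cross-term bounds as you state them do not hold at the claimed sample complexity.
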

\begin{proof}
    See Appendix \ref{apd:proof-lem2}.
\end{proof}

Lemma~\ref{lem:dist_contract} reveals that when the iterates satisfy conditions about the local basin of attraction, they enjoy a linear convergence rate to the truth after the gradient update step. {We verify the iterates returned by Algorithm \ref{alg:SHGD} satisfy such conditions via an inductive manner, the details of which are deferred to subsection \ref{subsec:thmpf}.}
\subsection{Proof of Theorem \ref{thm:recovery_guarantee}}\label{subsec:thmpf}
\begin{proof}
We establish  Theorem \ref{thm:recovery_guarantee} via an inductive approach with the conditions \eqref{eq:distancebd}, \eqref{eq:nmcond1}, and \eqref{eq:nmcond2} as the inductive hypotheses. If we can inductively establish \eqref{eq:distancebd}, we finish the proof of Theorem \ref{thm:recovery_guarantee}. The hypotheses \eqref{eq:nmcond1} and \eqref{eq:nmcond2} are necessary to establish \eqref{eq:distancebd} for the next iteration (eg: from the $k$-th to the $(k+1)$-th step). Therefore we need to establish \eqref{eq:distancebd}, \eqref{eq:nmcond1}, and \eqref{eq:nmcond2} together in an inductive manner.

First, it is evident that the hypotheses are true for $k=0$ from Lemma~\ref{lem:init}, with probability at least $1-n^{-2}$ when $\hat{m}\gtrsim O\( \varepsilon_0^{-2}\mu\kappa^4 r^2\log(n)\)$, $\sigma=\sigma_1(\vM_0)/(1-\varepsilon_0)$, $\mu\geq \mu_0$ and $\varepsilon_0$ is a small enough constant. 

Next, we provide the proof sketch from the $ k$-step to the $(k+1)$-th step; see the complete proof in Appendix \ref{apd:pf_induction}. 
The key idea is to invoke Lemma~\ref{lem:dist_contract} to establish a linear convergence mechanism and that the projection step makes the iterates contract in terms of our distance metric, {as well as keeping  well-conditioness of $\vZ_{\star{{1}}}$ and $\vZ_{\star{2}}$.} Each induction step holds with probability at least $1-c_2 n^{-2}$ when $\hat{m}\gtrsim O\( \varepsilon_0^{-2}\mu^2\kappa^2 r^2\log(n)\)$ and $\eta=\frac{\eta^{\prime}}{\sigma_1(\vM_\star)}$, where $0<\eta^{\prime}\leq\frac{1}{54}$.

combining the above results, we can inductively establish the 
 following linear convergence result of Theorem \ref{thm:recovery_guarantee}
 \begin{align*}
    \distP{\vZ^k}{\vZ_\star}\leq \left(1-\frac{11\eta^{\prime}}{100\kappa}\right)^k c_1\hspace{0.05cm}\sigma_r(\vM_\star)^{\frac{1}{2}},
\end{align*}
with  probability at least  $1-O(n^{-2})$ and $m=(k+1)\hat{m}\gtrsim O\(\varepsilon_0^{-2}\mu^2\kappa^4 r^2\log(n)\)$.  
\end{proof}

 


\section{Numerical Simulations} \label{sec:numerical}
In this section, we demonstrate the performance of SHGD via extensive numerical simulations \footnote{Our code is available at \url{https://github.com/Jinshengg/SHGD}.}. Specifically,  the signal can not be lifted as a symmetric Hankel matrix at first glance as the length of the signal is set even on purpose. The simulations are run in MATLAB R2019b on a 64-bit Windows machine with multi-core Intel CPU i9-10850K at 3.60 GHz and 16GB RAM. For each gradient updating of SHGD, we use the whole observation set rather than the disjoint subsets, as done in \cite{Zhang2019, Cai2019, Cherapanamjeri2017}.  We first compare the SHGD's performance in phase transition to nonconvex methods PGD and FIHT, and convex approaches EMaC and ANM. Then the computational efficiency of SHGD is shown in Section \ref{subsec:comptime}. We choose the square Hankel version of FIHT to compare, which further reduces the computation and storage costs. Also, the robustness of SHGD to additive noise is presented in Section \ref{subsec:robust2noise}.  Lastly, we present the application of SHGD in delay-Doppler estimation in Section \ref{subsec:delay-doppler-esti}.
\subsection{Phase transition} 	\label{subsec:phasetrans}
We compare the performance of SHGD with PGD \cite{Cai2018}, FIHT \cite{Cai2019}, EMaC \cite{Chen2014}, and ANM \cite{Tang2013} in terms of phase transition.
The frequency $f_k$ is random generated from $[0,1)$, and the amplitudes $d_k$ are selected as $d_k=(1+10^{0.5c_k})e^{-i\phi_k}$, where $c_k$ is uniformly distributed on $[0,1)$ and $\phi_k$ is uniformly sampled from $[0,2\pi)$. 

We consider 1-D signals, of which the length is $n=126$. The lifted Hankel matrix is a $63\times 64$ rectangular matrix for PGD and EMaC. The observed signal is zero-padded to be a length of the $n=127$ signal and then lifted to a $64\times64$ square Hankel matrix for SHGD and FIHT. 
ANM and EMaC are implemented using CVX. We use the backtracking line search to choose the stepsize of SHGD and PGD. SHGD, PGD and FIHT terminate when $\norm{\vx^{k+1}-\vx^k}_2/\norm{\vx^k}_2\leq10^{-7}$ or the maximum number of iterations is reached. 
We run 50 random simulations for a prescribed $(r,m)$, where $m=\lfloor pn\rfloor$ is the number of observations, sample ratio $p$ takes $19$ values from $0.05$ to $0.95$, and $r$ starts from $r=1$ and increases by $1$ until $r=35$. We run simulations for two different settings of frequencies, one is that there is no separation between $\{f_k\}_{k=1}^r$ and the other is that the separation condition $\Delta=\min_{j\neq k}\|f_j-f_k\|\geq1.5/n$ is satisfied. A test is supposed to be successful when $\norm{\vx^k-\vx}_2/\norm{\vx}_2\leq10^{-3}$.

From Fig. \ref{fig:phasetrans_withoutsep} and Fig. \ref{fig:phasetrans_withsep},  
SHGD shows almost the same performance as PGD and better performance than  EMaC in both frequencies without separation and with separation settings. Also, SHGD outperforms ANM when frequencies are generated without separation. Compared to FIHT, SHGD shows a better phase transition performance when the sampling ratio $p<0.5$, which is more typical in reality. 

 \begin{figure}[!t]
		\centering
	  \includegraphics[width=0.45\linewidth]{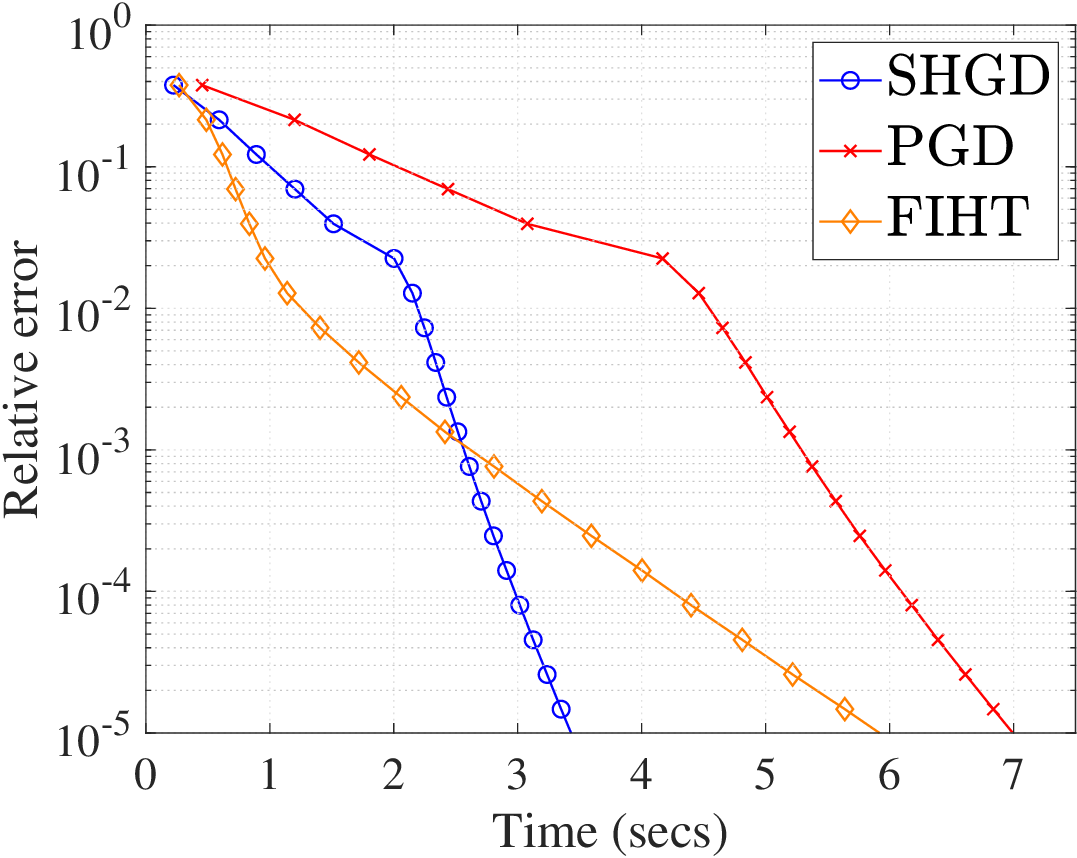}	
   \hfill
  \includegraphics[width=0.45\linewidth]{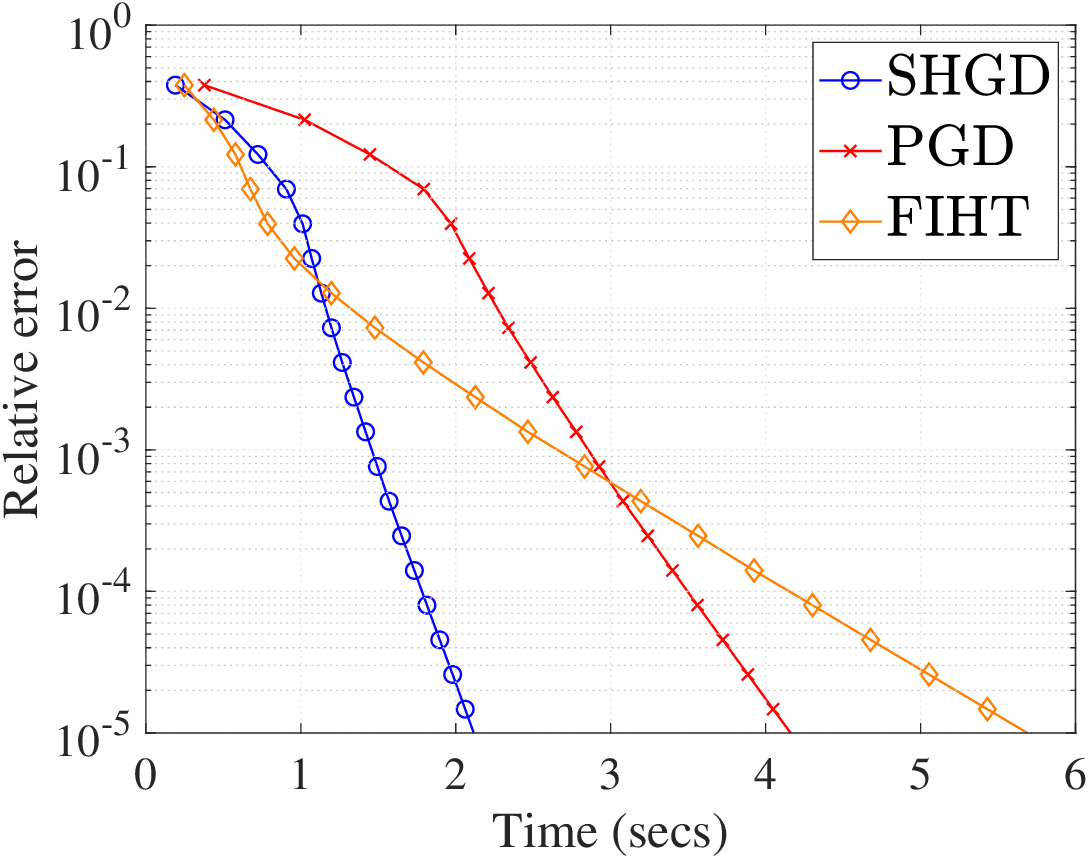} 
  \caption{The comparisons of computation time versus relative error. \textbf{Left}: Frequencies without separation. \textbf{Right}: Frequencies with separation.}
		\label{fig:timecomp_1D}
\end{figure}
 \begin{figure}[!t]
		\centering	  \includegraphics[width=0.49\linewidth]{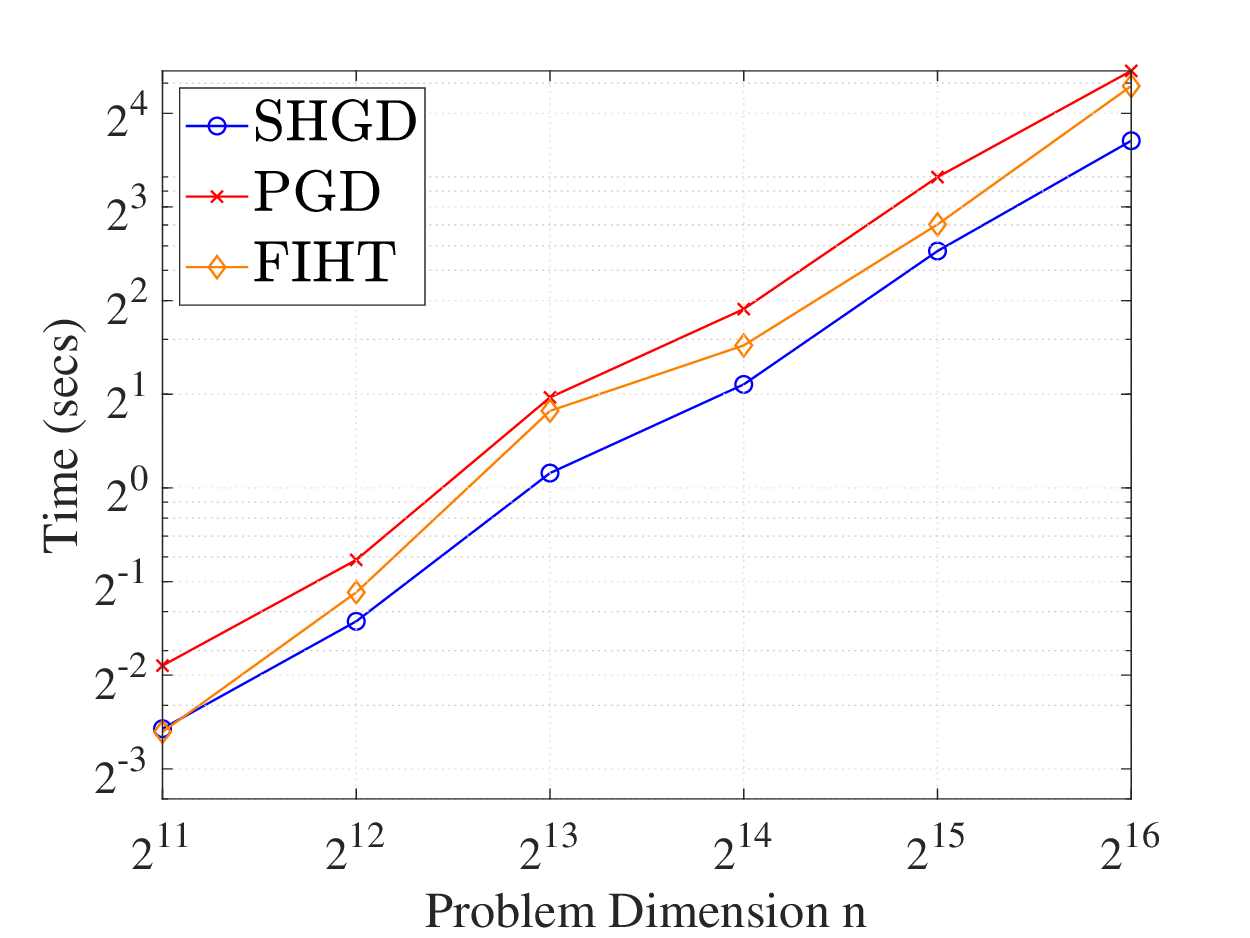}	
  \hfill
\includegraphics[width=0.49\linewidth]{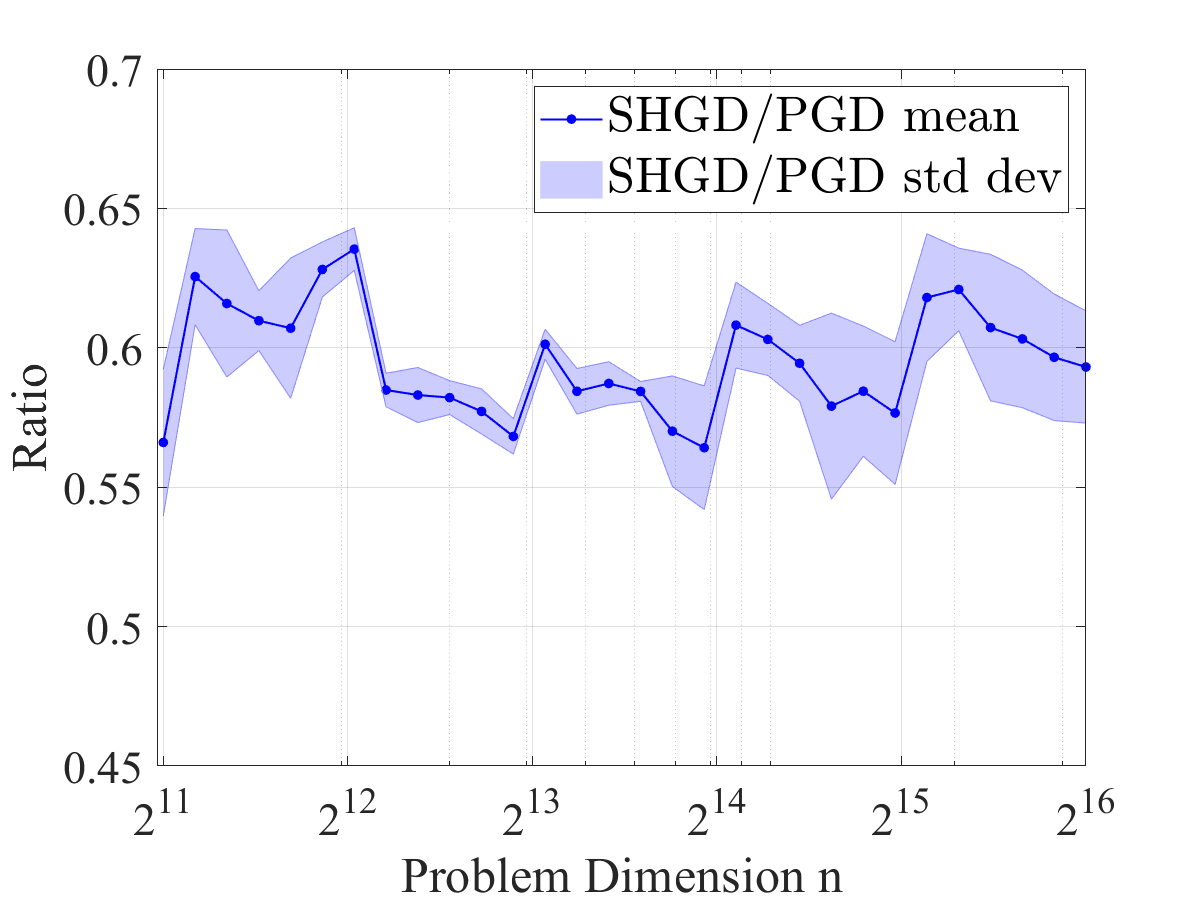} 
  		\caption{The comparisons of computation time versus different problem scales. \textbf{Left}: The average computation time of three algorithms SHGD, PGD, and FIHT. \textbf{Right}: The average computation time ratio between SHGD and PGD (SHGD/PGD).}		\label{fig:time_pbdim}
	\end{figure}  
 
\subsection{Computation time}\label{subsec:comptime}
{
This subsection conducts simulations to evaluate the computational efficiency of SHGD. In the first experiment, we examine the computation time versus the different prescribed recovery error $\norm{\vx^k-\vx}_2/\norm{\vx}_2$ and run 30 random simulations for different nonconvex fast algorithms, namely SHGD, PGD, and FIHT. 
We choose the length of the signal as $n=2046$, and the number of sinusoids as $r=150$, which is a high-dimensional situation. The number of measurements is set as 
$m=876$. After zero-padding preprocessing, the signal is lifted to 
a $1024\times 1024$ square Hankel matrix for SHGD and FIHT.  We apply a fixed stepsize strategy as $\eta=\frac{0.75}{\sigma_1(\vM_0)}$ in SHGD and PGD for faster computing.  The simulations are conducted both for frequencies with separation and without separation.  
In Fig. \ref{fig:timecomp_1D}, the average computation time of SHGD is approximately half that of PGD in both settings. 
Additionally, it can be observed that FIHT exhibits better performance in the initial stage, but becomes slower than SHGD when higher accuracy is required in both frequency settings.  

In the second experiment, we examine the computation time 
for SHGD, PGD, and FIHT  under different problem scales. Each problem scale runs 20 random simulations. The target recovery accuracy is set as $\norm{\vx^k-\vx}_2/\norm{\vx}_2=10^{-7}$. We set the number of sinusoids as $r=30$, and the frequencies  $\{f_k\}_{k=1}^r$ are generated without separations. The number of measurements is set as 
$m=512$. Firstly, we set the problem dimension as $n = 2^j-2$ with $j\in\{11,...,16\}$. The lifted matrix is a square $2^{j-1}\times2^{j-1}$ Hankel matrix for SHGD and FIHT after zero-padding preprocessing and a $(2^{j-1}-1)\times2^{j-1}$ rectangular Hankel matrix for PGD.  We examine the average computation time for SHGD, PGD, and FIHT to attain the fixed accuracy $10^{-7}$. 
Secondly, we evaluate the time ratio between SHGD and PGD. And we set the problem dimension as $n=2\lfloor2^{j-1}\rfloor-2$, where $j$ takes 30 equally spaced values from 11 to 16. From Fig. \ref{fig:time_pbdim}, we observe that SHGD behaves faster than FIHT for most problem scenarios. Besides, the computation time of SHGD is about 0.55-0.65 times that of PGD across different problem dimensions.}
{
\subsection{Robust recovery from noisy observations} \label{subsec:robust2noise}
In this subsection, we demonstrate the robustness of SHGD to additive noise. The experiments are conducted both in frequencies with separation and frequencies without separation settings.
We set the length of signals as $n=127$ and 
the number of sinusoids as $r=12$. 
 The observations are perturbed by the noise vector  $\ve=\sigma_e\cdot\|\P_{\Omega}(\vx)\|_2\cdot\vw/\|\vw\|_2$ where $\vx$ is referred to the desired signal, $\sigma_e$ be the noise level, 
and the elements of $\vw$ are i.i.d. standard complex Gaussian random variables.  SHGD is terminated when $\norm{\vx^{k+1}-\vx^k}_2/\norm{\vx^k}_2\leq10^{-7}$. 
The number of measurements is set as $m=60$ and $m=120$ respectively. The noise level $\sigma_e$ varies from $10^{-3}$ to 1, corresponding to signal-to-noise ratios (SNR) from 60 to 0 dB.  For each noise level, we conduct 20 random tests and record the average RMSE of the reconstructed signals versus SNR.  Fig. \ref{fig:stable_rec_SNR} shows that the relative reconstruction error decreases linearly as SNR increases (the noise level decreases). Besides, the relative reconstruction error reduces when the number of observations increases. 
\begin{figure}[!t]
		\centering
		\includegraphics[width=0.49\linewidth]{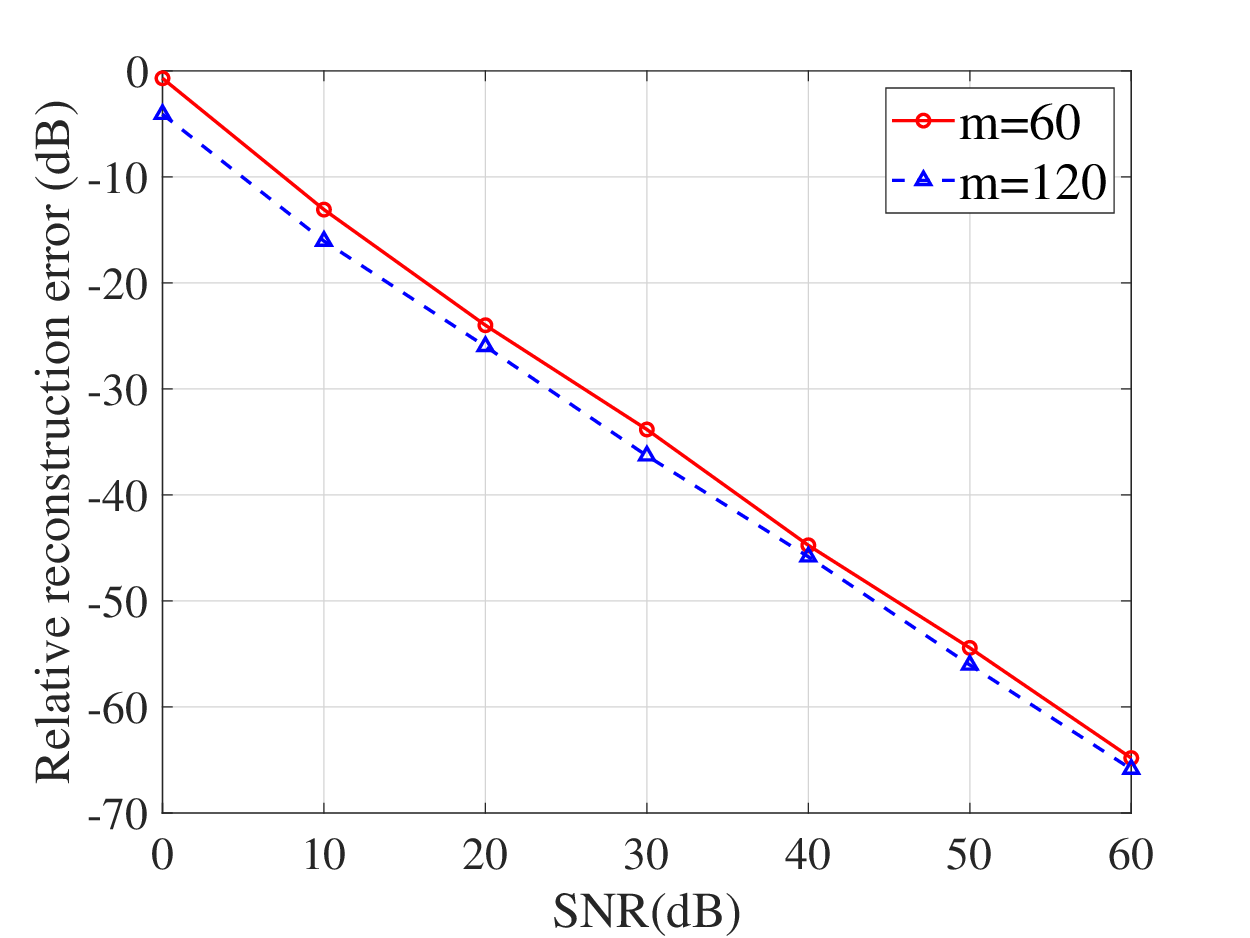} 
  \hfill
  \includegraphics[width=0.49\linewidth]{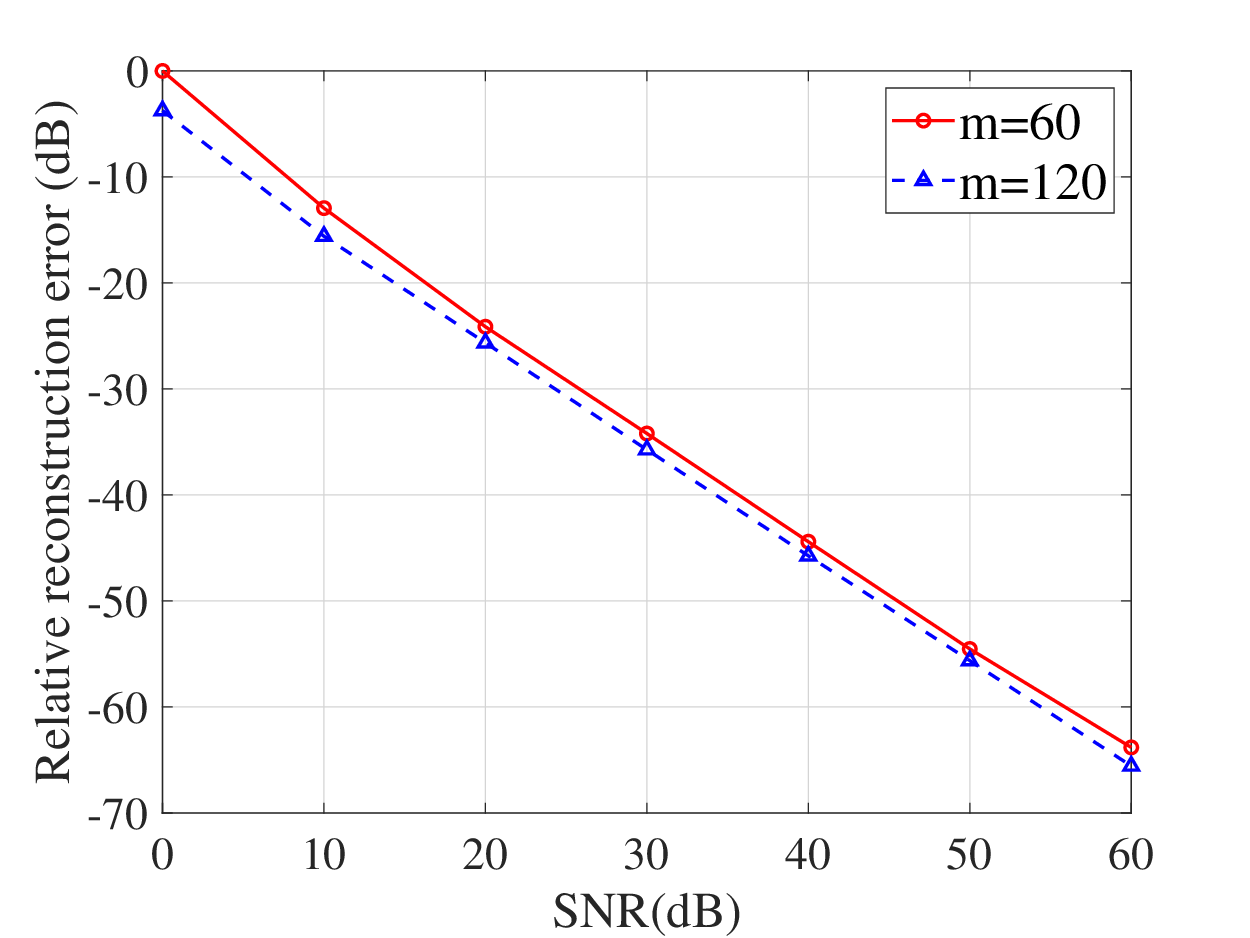}
  		\caption{The stable recovery of SHGD under different noise levels. \textbf{Left:} Frequencies without separation. \textbf{Right:} Frequencies with separation.}
		\label{fig:stable_rec_SNR}
\end{figure}
}

{\subsection{SHGD for delay-Doppler estimation based on OFDM signals}\label{subsec:delay-doppler-esti}
We consider the problem of delay-Doppler estimation from OFDM signals \cite{Sanson2020}. The channel matrix  $\vD_R$ can be obtained when the reference symbols are known, which is defined as
\begin{align*}
    \vD_R(l_1,l_2)=\sum_{k=1}^r d_k e^{i2\pi(l_2Tf_k-l_1\Delta f\tau_k)}, (l_1,l_2)\in[N_1] \times [N_2],
\end{align*}
where $N_1$ denotes the number of orthogonal subcarriers, $N_2$ denotes the number of OFDM symbols, $\{\tau_k,f_k\}_{k=1}^r$ are the delays and Doppler frequencies, $\{d_k\}_{k=1}^r$ are channel coefficients, and $T$ is the symbol duration. For simplicity of notation, we define $\psi_k=Tf_k\in[0,1)$, $\phi_k=\Delta f\tau_k\in[0,1)$.
\begin{figure}[!t]
		\centering		\includegraphics[width=0.48\linewidth]{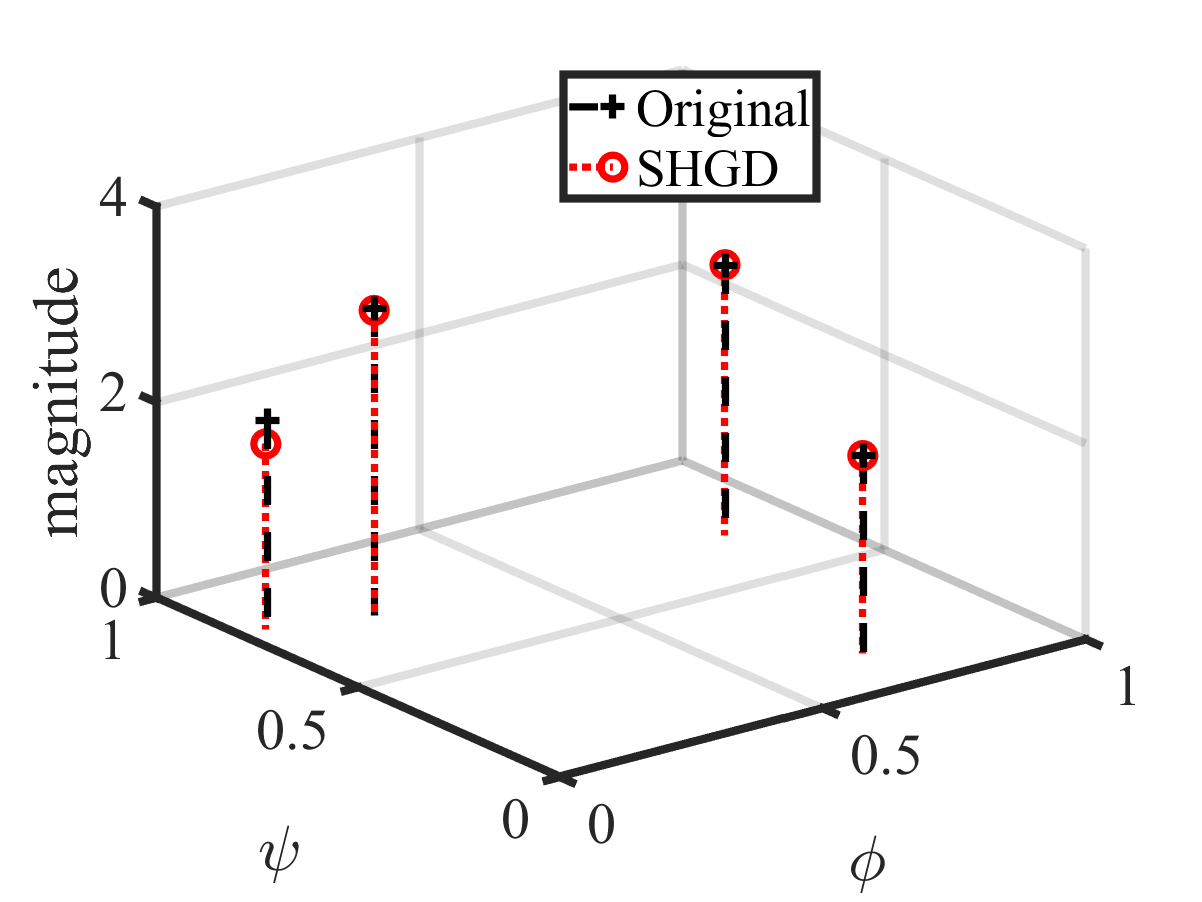} 
  \hfill  \includegraphics[width=0.48\linewidth]{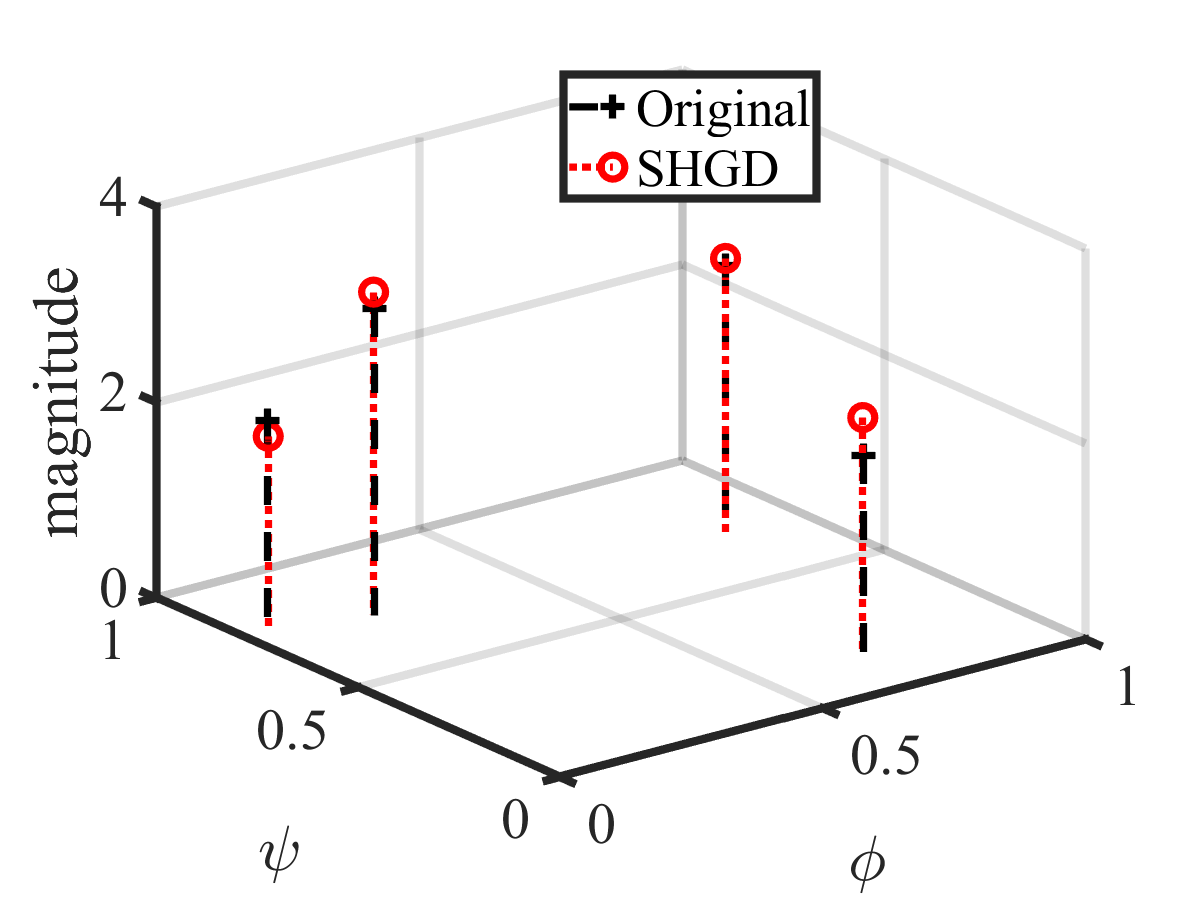}
  		\caption{ SHGD for delay-Doppler estimation from  OFDM signals. \textbf{Left}: SHGD as a denoiser when subcarriers of OFDM are equally spaced.  \textbf{Right}: SHGD as a recovery algorithm when the subcarriers of OFDM are unequally spaced.}
		\label{fig:stable_rec_visual_2D}
	\end{figure}
 
 SHGD can be viewed as a denoiser when subcarriers of OFDM are equally spaced as this corresponds to the full sampling of the channel matrix $\vD_R$.  When subcarriers of OFDM are unequally spaced as \cite{Nikookar}, it can be interpreted as the partially non-uniform row sampling of channel matrix $\vD_R$, and we use SHGD to recover the channel matrix. The strengths of non-uniform subcarriers are lower complexity of the system design and less power of intercarrier interference (ICI) \cite{Nikookar,Yang2017}. We conduct experiments on delay-Doppler estimations from standard OFDM signals and OFDM signals of non-uniform subcarriers. After executing SHGD, we use 2-D MUSIC \cite{Berger2010} to estimate delay-Doppler parameters. 
In simulations, we set $N_1=30$, $N_2=14$, and the number of targets $r=4$. We suppose the channel matrix is corrupted by the noise  $\vE=\sigma_e\cdot\|\vD_R\|_F\cdot\vW/\|\vW\|_F$ where the entries of $\vW$ are i.i.d standard complex Gaussian random variables and $\sigma_e=0.15$ is the noise level. The number of non-uniform subcarriers is set as $m=8$ and thus the number of observations of $\vD_R$ is $M=mN_2=112$. 
The 2-D channel matrix $\vD_R$ is lifted to a $128\times 128$ symmetric block-Hankel matrix after zero-padding preprocessing. Fig. \ref{fig:stable_rec_visual_2D} shows that our algorithm SHGD estimates delays, Doppler frequencies almost exactly, and channel coefficients with minor deviations. 
}
\section{Conclusions} \label{sec:conclusions}
In this paper, a new non-convex gradient method called SHGD based on symmetric factorization is proposed for the spectrally sparse signal recovery problem. Theoretical analysis guarantees that if the number of random samples is larger than $O(r^2\log(n))$, a linear convergence guarantee towards the true spectrally sparse signal can be established with high probability. The proposed SHGD algorithm reduces almost half of the cost of computation and storage compared to PGD and shows comparable computation efficiency as FIHT in numerical simulations. It is interesting to extend our symmetric factorization approach to robust spectrally sparse signal recovery and explore vanilla gradient descent via symmetric factorization when the number of sinusoids is unknown. 

\appendices  
\section{Supporting lemmas}\label{apd:support-lems}
\begin{lemma}[Existence of minimizer of the distance metric] \label{lem:existence}
  Suppose there exists an invertible matrix $\hat{\vP}$, and set $\hat{\vZ}_{\star{1}}=\vZ_{\star}\hat{\vP}$, $\hat{\vZ}_{\star{2}}=\vZ_{\star}\hat{\vP}^{-T}$. When
 \begin{align*}
     \sqrt{\|\vZ-\hat{\vZ}_{\star{1}}\|_F^2+\|\vZ-\hat{\vZ}_{\star{2}}\|_F^2} \leq \varepsilon\sigma_r(\vM_\star)^{\frac{1}{2}},
 \end{align*} 
where $\varepsilon\leq \sqrt{2}/8$ and $
\sigma_r(\hat{\vZ}_{\star{1}}) ~\mathrm{or}~\sigma_r(\hat{\vZ}_{\star{2}})\geq \frac{\sqrt{2}}{2}\sigma_r(\vM_\star)^{\frac{1}{2}}
  $, the minimizer of the problem \eqref{eq:def_dist_invert} exists.
 \end{lemma}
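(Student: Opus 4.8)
The only real obstacle here is that the feasible set in \eqref{eq:def_dist_invert}, namely the invertible $r\times r$ complex matrices, is an open set, so the Weierstrass extreme value theorem cannot be applied directly: a priori the infimum might fail to be attained because a minimizing sequence ``escapes to the boundary'' by degenerating to a singular matrix, or escapes to infinity. The plan is to show the objective is coercive on this set, so that a nonempty sublevel set is compact, and then apply Weierstrass on that sublevel set.

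Write $g(\vP):=\sqrt{\norm{\vZ-\vZ_\star\vP}_F^2+\norm{\vZ-\vZ_\star\vP^{-T}}_F^2}$, which is continuous on the open set of invertible matrices (matrix inversion and transposition being continuous there). Two facts drive the argument. First, by \eqref{eq:defZstar} and $\rank(\vM_\star)=r$, the matrix $\vZ_\star=\vU_\star\vSS_\star^{1/2}$ has full column rank $r$ with $\vZ_\star^H\vZ_\star=\vSS_\star$, so $\sigma_r(\vZ_\star)=\sigma_r(\vM_\star)^{1/2}>0$ and $\norm{\vZ_\star\vA}_F\ge\sigma_r(\vM_\star)^{1/2}\norm{\vA}_F$ for every $\vA\in\C^{r\times r}$. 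Second, the hypothesis provides a feasible point: $\hat\vP$ is invertible and $g(\hat\vP)\le\varepsilon\,\sigma_r(\vM_\star)^{1/2}=:M_0<\infty$, so the infimum in \eqref{eq:def_dist_invert} is finite and at most $M_0$. (For the bare existence claim only $g(\hat\vP)<\infty$ is used; the quantitative bounds $\varepsilon\le\sqrt2/8$ and $\max\{\sigma_r(\hat{\vZ}_{\star 1}),\sigma_r(\hat{\vZ}_{\star 2})\}\ge\tfrac{\sqrt2}{2}\sigma_r(\vM_\star)^{1/2}$ are what subsequently convert ``the minimizer exists'' into quantitative control of $\sigma_r$ and $\sigma_1$ of $\vZ_\star\vP_{\vZ}$ and $\vZ_\star\vP_{\vZ}^{-T}$, which is how the lemma is consumed inside the inductive framework.)

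Next I would establish coercivity through elementary estimates. Fix any $M\ge M_0$ and any invertible $\vP$ with $g(\vP)\le M$. From the first summand, $\sigma_r(\vM_\star)^{1/2}\norm{\vP}_F\le\norm{\vZ_\star\vP}_F\le\norm{\vZ}_F+M$, hence $\norm{\vP}_F\le R$ with $R:=(\norm{\vZ}_F+M)/\sigma_r(\vM_\star)^{1/2}$. From the second summand, the same reasoning applied to $\vP^{-T}$ gives $\norm{\vP^{-1}}_F=\norm{\vP^{-T}}_F\le R$, hence $\sigma_r(\vP)=1/\norm{\vP^{-1}}\ge 1/\norm{\vP^{-1}}_F\ge 1/R$. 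Therefore the sublevel set $K_M:=\{\vP\ \text{invertible}:g(\vP)\le M\}$ is contained in $S:=\{\vP\in\C^{r\times r}:\norm{\vP}_F\le R,\ \sigma_r(\vP)\ge 1/R\}$; the set $S$ is bounded, and it is closed in $\C^{r\times r}$ because $\sigma_r(\cdot)$ is continuous on all of $\C^{r\times r}$, so $S$ is compact and consists only of invertible matrices. Since $g$ is continuous on $S$, the sublevel set $K_M=S\cap\{g\le M\}$ is compact, and it is nonempty because $\hat\vP\in K_{M_0}$.

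Finally I would invoke Weierstrass: $g$ attains a minimum on the compact set $K_{M_0}$ at some invertible $\vP^\star$, with $g(\vP^\star)\le g(\hat\vP)=M_0$; and for any invertible $\vP\notin K_{M_0}$ one has $g(\vP)>M_0\ge g(\vP^\star)$, so $\vP^\star$ minimizes $g$ over all invertible matrices. Hence the minimizer in \eqref{eq:def_dist_invert} exists and $\distP{\vZ}{\vZ_\star}=g(\vP^\star)$. The step requiring the most care is the degeneracy half of the coercivity bound --- that $g(\vP)\to\infty$ as $\sigma_r(\vP)\to 0$ --- which is exactly where the second term $\norm{\vZ-\vZ_\star\vP^{-T}}_F$ built into the metric \eqref{eq:def_dist_invert}, together with the full column rank of $\vZ_\star$, is indispensable; with only the first term the infimum need not be attained.
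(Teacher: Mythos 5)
Your proof is correct, but it takes a genuinely different route from the paper's, and arguably a cleaner one. The paper first reparametrizes $\vP=\hat{\vP}\vR$, observes that $\vR=\vI$ is feasible, and then invokes the quantitative hypotheses ($\varepsilon\le\sqrt{2}/8$ together with $\sigma_r(\hat{\vZ}_{\star{1}})$ or $\sigma_r(\hat{\vZ}_{\star{2}})\ge\tfrac{\sqrt{2}}{2}\sigma_r(\vM_\star)^{1/2}$) to show that every $\vR$ at least as good as $\vI$ satisfies $\norm{\vI-\vR}_F\le 2\sqrt{2}\varepsilon$; by Weyl's inequality $\sigma_r(\vR)\ge 1-2\sqrt{2}\varepsilon\ge\tfrac12$, so the relevant sublevel set sits inside a small compact ball of invertible matrices centered at $\vI$, and Weierstrass finishes. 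You instead prove coercivity directly from $\sigma_r(\vZ_\star)=\sigma_r(\vM_\star)^{1/2}>0$: the first summand of $g$ bounds $\norm{\vP}_F$ and the second bounds $\norm{\vP^{-1}}_F$ on any sublevel set, placing it inside a compact set that stays uniformly away from singular matrices. Your argument dispenses with the reparametrization and, as you correctly flag, with the quantitative hypotheses of the lemma altogether --- existence holds whenever $\vZ_\star$ has full column rank, since a finite-objective feasible point such as $\vP=\vI$ always exists --- so your version is strictly more general. What the paper's narrower argument buys, as a byproduct, is an explicit localization $\norm{\hat{\vP}^{-1}\vP_{\vZ}-\vI}_F\le 2\sqrt{2}\varepsilon$ of the minimizer near $\hat{\vP}$; the lemma statement itself only records existence, and the paper obtains the later quantitative control on $\vZ_\star\vP_{\vZ}$ and $\vZ_\star\vP_{\vZ}^{-T}$ through separate triangle-inequality arguments, so nothing downstream is lost by your route.
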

 \begin{proof}
   See Appendix \ref{apd:subsec_pf_existence}.
 \end{proof}
\begin{lemma}\label{lem:symdist_inq}
For symmetric $\vM_\star,\vM\in\C^{n_s\times n_s}$, their Takagi factorizations are given as  $\vM_\star=\vU_\star\vSS_\star\vU_\star^T$ and $\vM=\vU\vSS\vU^T$. Let $\vZ=\vU\vSS^\frac{1}{2}$, $\vZ_\star=\vU_\star\vSS_\star^\frac{1}{2}$, then 
\begin{align*}
\distPsq{\vZ}{\vZ_\star}\leq2\underset{\vQ \in \mathcal{O}}{\operatorname{min}}\|\vZ-\vZ_\star\vQ\|_F^2\leq\frac{\sqrt{2}+1}{\sigma_r(\vM_\star)}\|\vM-\vM_\star\|_F^2,\numberthis\label{eq:Otho_disterr}
\end{align*}
where $\mathcal{O} \triangleq \{\vS \in \R^{r\times r}| \vS\vS^T=\vS^T\vS=\vI\}$.
\end{lemma}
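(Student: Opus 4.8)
We prove the two inequalities separately.

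For the left inequality, note that every real orthogonal $\vQ\in\mathcal{O}$ is invertible and satisfies $\vQ^{-T}=(\vQ^{-1})^{T}=\vQ$, so $\vQ$ is feasible in \eqref{eq:def_dist_invert} and the two summands there coincide; hence
\[
\distPsq{\vZ}{\vZ_\star}\le\|\vZ-\vZ_\star\vQ\|_F^2+\|\vZ-\vZ_\star\vQ^{-T}\|_F^2=2\|\vZ-\vZ_\star\vQ\|_F^2,
\]
and minimizing over $\vQ\in\mathcal{O}$ gives $\distPsq{\vZ}{\vZ_\star}\le 2\min_{\vQ\in\mathcal{O}}\|\vZ-\vZ_\star\vQ\|_F^2$. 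It remains to bound $\min_{\vQ\in\mathcal{O}}\|\vZ-\vZ_\star\vQ\|_F^2$ by $\tfrac{\sqrt2+1}{2\sigma_r(\vM_\star)}\|\vM-\vM_\star\|_F^2$.

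The plan for this second step is to reduce it to a real-variable Procrustes bound. Split $\vZ,\vZ_\star$ into real and imaginary parts and stack them (imaginary part below real part) into real matrices $\tilde\vZ,\tilde\vZ_\star\in\R^{2n_s\times r}$; then for a real matrix $\vQ$ one has $\|\vZ-\vZ_\star\vQ\|_F^2=\|\tilde\vZ-\tilde\vZ_\star\vQ\|_F^2$, so $\min_{\vQ\in\mathcal{O}}\|\vZ-\vZ_\star\vQ\|_F^2=\min_{\vQ\in\mathcal{O}}\|\tilde\vZ-\tilde\vZ_\star\vQ\|_F^2$. Since $\tilde\vZ_\star^T\tilde\vZ_\star=\Real(\vZ_\star^H\vZ_\star)=\vSS_\star$ (here the Takagi normalization $\vZ_\star=\vU_\star\vSS_\star^{1/2}$ is used), $\tilde\vZ_\star$ has full column rank with $\sigma_r(\tilde\vZ_\star)^2=\sigma_r(\vM_\star)$. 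I would then invoke the classical real factorization--Procrustes inequality (\cite[Lem.~5.4]{Tu2016}, see also \cite{Zheng2016}):
\[
\min_{\vQ\in\mathcal{O}}\|\tilde\vZ-\tilde\vZ_\star\vQ\|_F^2\le\frac{1}{2(\sqrt2-1)\,\sigma_r(\vM_\star)}\,\|\tilde\vZ\tilde\vZ^T-\tilde\vZ_\star\tilde\vZ_\star^T\|_F^2,
\]
where $\tfrac{1}{2(\sqrt2-1)}=\tfrac{\sqrt2+1}{2}$.

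It then suffices to show $\|\tilde\vZ\tilde\vZ^T-\tilde\vZ_\star\tilde\vZ_\star^T\|_F^2\le\|\vM-\vM_\star\|_F^2$. Expanding $\tilde\vZ\tilde\vZ^T$ into its four real blocks and comparing with the real and imaginary parts of $\vM=\vZ\vZ^T$ and of $\vZ\vZ^H$ yields the identity
\[
\|\tilde\vZ\tilde\vZ^T-\tilde\vZ_\star\tilde\vZ_\star^T\|_F^2=\tfrac12\|\vM-\vM_\star\|_F^2+\tfrac12\|\vZ\vZ^H-\vZ_\star\vZ_\star^H\|_F^2,
\]
so the claim reduces to the \emph{square-root contraction} $\|\vZ\vZ^H-\vZ_\star\vZ_\star^H\|_F\le\|\vM-\vM_\star\|_F$. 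For this I would again use the Takagi structure: $\vM\vM^H=\vU\vSS^2\vU^H=(\vZ\vZ^H)^2$ and likewise for $\vM_\star$, so $\vZ\vZ^H=(\vM\vM^H)^{1/2}$ and $\vZ_\star\vZ_\star^H=(\vM_\star\vM_\star^H)^{1/2}$; after reducing by unitary invariance to diagonal $\vSS,\vSS_\star$ and a unitary $\vW=\vU^H\vU_\star$, the quantity $\|\vM-\vM_\star\|_F^2-\|\vZ\vZ^H-\vZ_\star\vZ_\star^H\|_F^2$ becomes a sum of terms, each a nonnegative multiple of $|\vW_{jk}|^2-\Real(\vW_{jk}^2)\ge 0$ with weights the products of diagonal entries of $\vSS,\vSS_\star$. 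Combining the three displays above gives $\min_{\vQ\in\mathcal{O}}\|\vZ-\vZ_\star\vQ\|_F^2\le\frac{\sqrt2+1}{2\sigma_r(\vM_\star)}\|\vM-\vM_\star\|_F^2$, and doubling completes the proof.

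The main obstacle is the square-root contraction $\|\vZ\vZ^H-\vZ_\star\vZ_\star^H\|_F\le\|\vM-\vM_\star\|_F$: for a generic (non-symmetric) low-rank factorization the corresponding ``matrix absolute value'' estimate only holds with a factor $\sqrt2$, which would destroy the sharp constant $\tfrac{1}{2(\sqrt2-1)}$, and it is precisely the complex symmetric (Takagi) form $\vZ=\vU\vSS^{1/2}$ --- through the transpose rather than conjugate transpose in $\vM=\vZ\vZ^T$ --- that forces the constant down to $1$. Morally this is also why aligning over the \emph{real} orthogonal group $\mathcal{O}$ is enough even though the factorization ambiguity $\vM_\star=\vZ_\star\vQ(\vZ_\star\vQ)^T$ lives over the larger complex orthogonal group $\mathcal{Q}$. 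A minor technical point is to carry the unitary reduction from the full-rank case $r=n_s$ to $r<n_s$ by extending $\vU,\vU_\star$ to square unitaries, which is routine.
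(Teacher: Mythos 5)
Your proof is correct and follows essentially the same route as the paper: the left inequality uses exactly the same observation that real orthogonal $\vQ$ satisfies $\vQ^{-T}=\vQ$, and for the right inequality both arguments combine a sharp real-orthogonal Procrustes bound with the Takagi-driven contraction $\|\vZ\vZ^H-\vZ_\star\vZ_\star^H\|_F\leq\|\vM-\vM_\star\|_F$. The only differences are cosmetic: you stack $[\Real\vZ;\,\mathrm{Im}\vZ]$ into $\tilde\vZ\in\R^{2n_s\times r}$ and invoke \cite[Lem.~5.4]{Tu2016} directly, whereas the paper re-derives the equivalent bound for the complex stacking $\vF=[\vZ;\bar\vZ]$ (your $\|\tilde\vZ\tilde\vZ^T-\tilde\vZ_\star\tilde\vZ_\star^T\|_F^2=\tfrac12\|\vM-\vM_\star\|_F^2+\tfrac12\|\vZ\vZ^H-\vZ_\star\vZ_\star^H\|_F^2$ is its $\|\vF\vF^H-\vF_\star\vF_\star^H\|_F^2=2\|\vM-\vM_\star\|_F^2+2\|\vZ\vZ^H-\vZ_\star\vZ_\star^H\|_F^2$ divided by four), and you establish the contraction by diagonalizing through $\vW=\vU^H\vU_\star$ while the paper writes the same computation as the identity $\|\vZ\vZ^H-\vZ_\star\vZ_\star^H\|_F^2-\|\vM-\vM_\star\|_F^2=-\|\vZ^H\vZ_\star-\bar\vZ^H\bar\vZ_\star\|_F^2$.
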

\begin{proof}
    See Appendix \ref{apd:proof-lem4-syminq}.
\end{proof}


We remove the dependence on $c_s$ in Lemma~\ref{lem:spec_init} and Lemma~\ref{lem:tangent_completion_inq} as $c_s\leq 2$ in the square case. 
 \begin{lemma}[{\cite[Lemma~2]{Cai2019}}]\label{lem:spec_init}
Assume $\G{\vy}$ is $\mu_0$-incoherent. Set $\bm{L}=\T_r\G(p^{-1}\P_{{\hat{\Omega}}}{(\vy)})$, and there exists a universal constant $c_3>0$ such that
$$
\|\bm{L}-\G\vy\|_2\leq c_3\sqrt{\frac{\mu_0r\log(n)}{m}}\|\G{\vy}\|_2
$$
holds with probability at least $1-n^{-2}$.
\end{lemma}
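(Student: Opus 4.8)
Lemma~\ref{lem:spec_init} is quoted from \cite[Lemma~2]{Cai2019}, so the plan is to invoke that result and record the only adaptation needed here: the estimate in \cite{Cai2019} carries a multiplicative factor that is a fixed polynomial in $c_s=n/n_s$, and since we have taken $n=2n_s-1$ we have $c_s\leq 2$, so this factor is an absolute constant and can be absorbed into $c_3$. For completeness I would also recall why the bound holds, as sketched below.

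First I would exploit the best rank-$r$ approximation property. Writing $\vN=\G(p^{-1}\P_{{\hat{\Omega}}}(\vy))$, the matrix $\bm{L}=\T_r\vN$ is a best rank-$r$ approximation of $\vN$ in spectral norm, so since $\G\vy$ has rank $r$ we get $\|\bm{L}-\vN\|_2\leq\|\G\vy-\vN\|_2$; the triangle inequality then gives
\[
\|\bm{L}-\G\vy\|_2\leq 2\,\big\|\G\big(p^{-1}\P_{{\hat{\Omega}}}(\vy)-\vy\big)\big\|_2 .
\]
Hence it suffices to show $\big\|\G\big(p^{-1}\P_{{\hat{\Omega}}}(\vy)-\vy\big)\big\|_2\leq\tfrac{1}{2}c_3\sqrt{\mu_0 r\log(n)/m}\,\|\G\vy\|_2$.

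Next I would recast the remaining term as a matrix concentration problem. Using $\G^*\G=\I$ write $\vy=\G^*\vM_\star$ with $\vM_\star=\G\vy$, and note $\G\G^*\vM_\star=\vM_\star$ since $\G\G^*$ projects onto Hankel matrices and $\vM_\star$ is Hankel. Under the sampling-with-replacement model $p^{-1}\P_{{\hat{\Omega}}}(\vz)=\tfrac{n}{m}\sum_{j=1}^{m}\la\vz,\ve_{a_j}\ra\ve_{a_j}$, so
\[
\G\big(p^{-1}\P_{{\hat{\Omega}}}(\vy)-\vy\big)=\sum_{j=1}^{m}\vX_j,\qquad \vX_j=\tfrac{n}{m}(\vy)_{a_j}\,\G\ve_{a_j}-\tfrac{1}{m}\vM_\star ,
\]
a sum of i.i.d.\ zero-mean random matrices. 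I would then apply the matrix Bernstein inequality: the per-term spectral norm is controlled through $\|\G\ve_a\|_2=w_a^{-1/2}$, $|(\vy)_a|=\sqrt{w_a}\,|x_a|$, and the entrywise incoherence bound $\max_a|x_a|\leq\|\vU_\star\|_{2,\infty}^2\|\vM_\star\|_2\leq\tfrac{2\mu_0 r}{n}\|\vM_\star\|_2$ from Definition~\ref{def:coh} (using $x_a=(\vM_\star)_{ij}$ for any $i+j=a$), while the predictable quadratic variations $\|\sum_j\E\vX_j\vX_j^H\|_2$ and $\|\sum_j\E\vX_j^H\vX_j\|_2$ are estimated by the same ingredients together with $c_s\leq 2$. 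Bernstein then yields $\|\sum_j\vX_j\|_2\lesssim\sqrt{\mu_0 r\log(n)/m}\,\|\vM_\star\|_2$ on an event of probability at least $1-n^{-2}$ once $m\gtrsim\mu_0 r\log n$, which is the claim.

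The hard part will be solely the bookkeeping inside the Bernstein step --- tracking the skew-diagonal weights $w_a$ through $\G\ve_a$ and $(\vy)_a$, and bounding the variance proxy tightly enough that the final estimate carries the stated exponent $\tfrac{1}{2}$ on $\mu_0 r\log(n)/m$. This computation is carried out in \cite{Cai2019}; our square-case statement inherits it unchanged apart from the observation $c_s\leq 2$, so no new estimate is required.
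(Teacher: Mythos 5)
Your proposal takes the same route as the paper: Lemma~\ref{lem:spec_init} is cited from \cite[Lemma~2]{Cai2019}, and the only adaptation the paper makes is noted explicitly before the lemma—dropping the $c_s$-dependence because $c_s=n/n_s\leq 2$ in the square case—which is exactly what you do. Your recap of the underlying argument (Eckart–Young plus matrix Bernstein with per-term and variance bounds driven by the incoherence of $\G\vy$ and the skew-diagonal weights) is consistent with how \cite{Cai2019} establishes the bound, so no gap.
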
 
\begin{lemma}[{\cite[Lemma~3]{Chen2014}}]\label{lem:tangent_completion_inq}
Assume $\G{\vy}$ is $\mu_0$-incoherent, and let $T$ be the tangent space of the rank $r$ matrix manifold at $\G\vy$. Then
\begin{equation*}
\|\P_{T}\G(\I-p^{-1}\P_{{\hat{\Omega}}})\G^*\P_{T}\|_2\leq\sqrt{\frac{64\mu_0r\log(n)}{m}}
\end{equation*}
holds with probability at least $1-n^{-2}$.
\end{lemma}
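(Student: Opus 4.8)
The plan is to write the random operator as a sum of i.i.d.\ zero-mean \emph{self-adjoint} operators on the space of $n_s\times n_s$ complex-symmetric matrices (with the Frobenius inner product) and then invoke the matrix Bernstein inequality. Under the sampling-with-replacement model one has $p^{-1}\P_{{\hat{\Omega}}}=\frac{1}{m}\sum_{j=1}^{m}n\,\ve_{a_j}\ve_{a_j}^{H}$ with $a_j$ i.i.d.\ uniform on $[n]$, so $\mean{n\,\ve_{a_j}\ve_{a_j}^{H}}=\vI$ and hence $\mean{p^{-1}\P_{{\hat{\Omega}}}}=\I$. Setting $\vA_a:=n\,\P_T\G\ve_a\ve_a^{H}\G^{*}\P_T\succeq\bzero$ and $\vB:=\P_T\G\G^{*}\P_T$, this yields
\begin{align*}
\P_T\G\big(\I-p^{-1}\P_{{\hat{\Omega}}}\big)\G^{*}\P_T=\sum_{j=1}^{m}\vW_j,\qquad \vW_j:=\tfrac{1}{m}\big(\vB-\vA_{a_j}\big),
\end{align*}
where each $\vW_j$ is self-adjoint and, since $\mean{\vA_{a_j}}=\vB$, also $\mean{\vW_j}=\bzero$. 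So it remains to bound the fluctuation of a centered i.i.d.\ sum; restricting all operators to the range $T$, the ``dimension'' entering Bernstein is $d=\dim T\lesssim n_s r\lesssim nr$.

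The only problem-specific input, which I would establish first, is the incoherence estimate $\nu:=\max_{a\in[n]}\norm{\P_T\G\ve_a}_F^{2}\leq C_0\mu_0 r/n$ for an absolute constant $C_0$. This follows from Definition~\ref{def:coh}: writing $\P_T\vM=\vU_\star\vU_\star^{H}\vM+\vM\overline{\vU_\star}\vU_\star^{T}-\vU_\star\vU_\star^{H}\vM\overline{\vU_\star}\vU_\star^{T}$, and using $\G\ve_a\propto\H\ve_a=\sum_{i+j=a}\ve_i\ve_j^{T}$ with $\norm{\G\ve_a}_F=1$, one gets $\norm{\vU_\star^{H}\H\ve_a}_F^{2}=\sum_{i+j=a}\norm{\vU_\star^{H}\ve_i}_2^{2}\leq w_a\norm{\vU_\star}_{2,\infty}^{2}\leq\tfrac{2\mu_0 r}{n}w_a$; dividing by $w_a$ controls the left/right projections of $\G\ve_a$ onto $\vU_\star$, hence $\P_T\G\ve_a$. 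This is also the step where the $c_s$ factor carried in \cite{Chen2014,Cai2018,Cai2019} is absorbed via $c_s\leq2$.

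Next I would feed the Bernstein parameters. Since $\P_T$ and $\G\G^{*}$ are orthogonal projections, $\norm{\vB}\leq1$ and $\norm{\vA_a}=n\norm{\P_T\G\ve_a}_F^{2}\leq n\nu$, so $\norm{\vW_j}\leq\tfrac1m(1+n\nu)\lesssim\tfrac{\mu_0 r}{m}=:L$. For the variance, the identity $\ve_a^{H}\G^{*}\P_T\G\ve_a=\norm{\P_T\G\ve_a}_F^{2}$ gives $\vA_a^{2}=n\norm{\P_T\G\ve_a}_F^{2}\,\vA_a\preceq n\nu\,\vA_a$; expanding $(\vB-\vA_{a_j})^{2}$, the cross terms obey $\mean{\vB\vA_{a_j}}=\mean{\vA_{a_j}\vB}=\vB^{2}$, so $\mean{(\vB-\vA_{a_j})^{2}}=\mean{\vA_{a_j}^{2}}-\vB^{2}\preceq n\nu\,\vB$, whence $\big\|\sum_{j}\mean{\vW_j^{2}}\big\|=\tfrac1m\norm{\mean{(\vB-\vA_{a_1})^{2}}}\leq\tfrac{n\nu}{m}\lesssim\tfrac{\mu_0 r}{m}=:\sigma^{2}$. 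Matrix Bernstein (see, e.g., \cite{Vershynin2018}) then gives, for $t>0$,
\begin{align*}
\Pr{\Big\|\sum_{j=1}^{m}\vW_j\Big\|\geq t}\ \leq\ 2d\,\exp\!\Big(\frac{-t^{2}/2}{\sigma^{2}+Lt/3}\Big).
\end{align*}
Taking $t=\sqrt{64\mu_0 r\log(n)/m}$: in the regime $m\gtrsim\mu_0 r\log n$ the $Lt/3$ term is lower order, the exponent is at most a constant multiple of $-\log n$, and with $\log d\lesssim\log n$ a careful accounting of the constants ($C_0$, the Bernstein constants, and the implicit threshold on $m$) delivers the bound with probability at least $1-n^{-2}$.

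I expect the main obstacle to be the incoherence reduction $\nu\leq C_0\mu_0 r/n$ rather than the concentration step: once $\nu$, $L$, $\sigma^{2}$ are in hand the Bernstein bookkeeping is routine, but $\nu$ is where the square-Hankel structure, the Takagi/Vandermonde factorization $\vM_\star=\vU_\star\vSS_\star\vU_\star^{T}$, the hypothesis $\norm{\vU_\star}_{2,\infty}\leq\sqrt{2\mu_0 r/n}$, and the per-anti-diagonal weights $w_a$ must all be combined. A secondary technical point is verifying that $\P_T$, $\P_T\G\G^{*}\P_T$ and $\P_T\G\ve_a\ve_a^{H}\G^{*}\P_T$ are self-adjoint on the complex-symmetric matrix space, so the self-adjoint form of matrix Bernstein applies directly without a Hermitian dilation.
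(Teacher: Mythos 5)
This lemma is not proved in the paper at all: it is imported from \cite[Lemma~3]{Chen2014} (with the aspect-ratio factor $c_s\le 2$ of the square case absorbed into the constant and the incoherence phrased only through $\vU_\star$), so there is no in-paper argument to compare against, only the citation. Your proposal essentially reconstructs the standard proof behind that cited result: write $\P_T\G(\I-p^{-1}\P_{\hat{\Omega}})\G^*\P_T$ as an i.i.d.\ sum of centered, self-adjoint, rank-one operators, control $\max_a\|\P_T\G\ve_a\|_F^2\lesssim \mu_0 r/n$ using Definition~\ref{def:coh} together with $\G\ve_a=w_a^{-1/2}\H\ve_a$, and invoke matrix Bernstein; your incoherence step is exactly where the square-Hankel/Takagi structure enters and where the $c_s$ of \cite{Chen2014,Cai2018,Cai2019} disappears, consistent with the paper's remark. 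The computation of $L$ and $\sigma^2$ and the rank-one identity $\vA_a^2=n\|\P_T\G\ve_a\|_F^2\,\vA_a$ are correct, so the sketch is sound; the only caveat is the one you flag yourself: with the fixed threshold $t=\sqrt{64\mu_0 r\log(n)/m}$ the tail bound delivers probability $1-n^{-2}$ only when $m\gtrsim \mu_0 r\log(n)$ (so that the subexponential branch is dominated), but this is precisely the regime in which the lemma is invoked in Lemma~\ref{lem:dist_contract} and Theorem~\ref{thm:recovery_guarantee}, so nothing in the paper's usage is affected.
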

\begin{remark}
Let $\G\vy=\vU\vSS\vV^H$ being the SVD of an $n_s\times n_s$ matrix $\G\vy$. The subspace $T$ is denoted as
\begin{align*}
T=\{\vU\vC^H+\vD\vV^H~|~\vC\in\C^{n_s\times r},~\vD\in\C^{n_s\times r}\}.
\end{align*}
The Takagi factorization of $\G\vy$ is $\G\vy=\vU_\star\vSS_\star\vU_\star^T$ where $\vSS=\vSS_\star$, which is a special form of SVD. Therefore the  subspace  $T$ can be reformulated as
$T=\{\vU_\star\vC^T+\vD\vU_\star^T~|~\vC\in\C^{n_s\times r},~\vD\in\C^{n_s\times r}\}$.
\end{remark}

\begin{lemma}\label{lem:Hankcom_tr_inq}
    For any fixed matrices $\vA\in\C^{n_s\times r}$, $\vB\in\C^{n_s\times r}$, $\vC\in\C^{n_s\times r}$, $\vD\in\C^{n_s \times r}$
    \begin{align*}
     &\left|\la\G(p^{-1}\P_{\hat{\Omega}}-\I)\G^*(\vA\vB^T),\vC\vD^T\ra\right|\\
     &\leq c_4\sqrt{\frac{n^2\log(n)}{m}}\min\{\ln\vA\rn_F\ln\vB\rn_{2,\infty},\ln\vA\rn_{2,\infty}\ln\vB\rn_F\}\\&
     \quad \cdot\min\{\ln\vC\rn_F\ln\vD\rn_{2,\infty},\ln\vC\rn_{2,\infty}\ln\vD\rn_F\},
    \end{align*}
holds with probability at least $1-n^{-2}$ when $m\gtrsim O(\log(n))$.
\begin{proof}
    See Appendix \ref{apd:subsec_hankelcomp_tr_inq}. 
\end{proof}
\end{lemma}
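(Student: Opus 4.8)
The plan is to recognize the left-hand side as a sum of $m$ independent, mean-zero random variables and control it with a Bernstein-type bound, after first distilling two deterministic estimates that exploit the Hankel structure of $\G^*$. First I would use $\G^*\G=\I$ together with the adjoint identity $\la\G\vx,\vM\ra=\la\vx,\G^*\vM\ra$ to set $\vz=\G^*(\vA\vB^T)$ and $\vw=\G^*(\vC\vD^T)$ in $\C^n$, so that the quantity of interest becomes $\la(p^{-1}\P_{\hat{\Omega}}-\I)\vz,\vw\ra$. Under the with-replacement model $\hat{\Omega}=\{a_j\}_{j=1}^m$ with the $a_j$ i.i.d.\ uniform on $[n]$ and $p=m/n$, one has $p^{-1}\P_{\hat{\Omega}}(\vz)=\tfrac nm\sum_{j=1}^m[\vz]_{a_j}\ve_{a_j}$, whence $\la(p^{-1}\P_{\hat{\Omega}}-\I)\vz,\vw\ra=\sum_{j=1}^m Y_j$ with $Y_j=\tfrac nm\,\overline{[\vz]_{a_j}}[\vw]_{a_j}-\tfrac1m\la\vz,\vw\ra$ and $\E Y_j=0$, since $\E[\overline{[\vz]_{a_j}}[\vw]_{a_j}]=\tfrac1n\la\vz,\vw\ra$.

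The two key deterministic estimates are (i) a mixed-norm $\ell_\infty$ bound and (ii) a Frobenius bound. For (i), write $[\G^*\vM]_a=w_a^{-1/2}\sum_{i+j=a}\vM_{ij}$, where the $a$-th skew-diagonal contains exactly $w_a$ index pairs; taking $\vM=\vA\vB^T$, bounding $|\vM_{ij}|\le\ln\vA^{(i,:)}\rn_2\ln\vB^{(j,:)}\rn_2$, estimating one of the two row-norm factors by the $2,\infty$ norm, and applying Cauchy--Schwarz over the $w_a$ terms of the anti-diagonal, which exactly cancels the weight $w_a^{1/2}$, gives
\begin{align*}
\ln\G^*(\vA\vB^T)\rn_\infty\le\min\{\ln\vA\rn_F\ln\vB\rn_{2,\infty},\ \ln\vA\rn_{2,\infty}\ln\vB\rn_F\},
\end{align*}
and the analogous bound for $\G^*(\vC\vD^T)$. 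For (ii), since $\G^*\G=\I$ forces $\ln\G^*\rn=1$, we get $\ln\vw\rn_2=\ln\G^*(\vC\vD^T)\rn_2\le\ln\vC\vD^T\rn_F$, and then $\ln\vC\vD^T\rn_F\le\sqrt{n_s}\min\{\ln\vC\rn_F\ln\vD\rn_{2,\infty},\ln\vC\rn_{2,\infty}\ln\vD\rn_F\}\le\sqrt n\,(\cdots)$ by bounding $\ln\vD\rn_F\le\sqrt{n_s}\ln\vD\rn_{2,\infty}$ (or symmetrically in $\vC,\vD$); likewise for $\vz$.

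Next I would apply the scalar (complex) Bernstein inequality \cite{Vershynin2018} to $\sum_jY_j$ with the uniform bound $|Y_j|\le R:=\tfrac nm\ln\vz\rn_\infty\ln\vw\rn_\infty+\tfrac1m\ln\vz\rn_2\ln\vw\rn_2$ and the variance proxy $\sigma^2:=\tfrac nm\ln\vz\rn_\infty^2\ln\vw\rn_2^2\ge\sum_j\E|Y_j|^2$ (using $\sum_a|[\vz]_a|^2|[\vw]_a|^2\le\ln\vz\rn_\infty^2\ln\vw\rn_2^2$), obtaining $|\sum_jY_j|\lesssim\sqrt{\sigma^2\log n}+R\log n$ with probability at least $1-n^{-2}$. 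Plugging in (i) and (ii), $\sqrt{\sigma^2\log n}\lesssim\sqrt{n^2\log(n)/m}\cdot\min\{\cdots\}_{\vA,\vB}\min\{\cdots\}_{\vC,\vD}$, while the tail term satisfies $R\log n\lesssim(n\log(n)/m)\min\{\cdots\}_{\vA,\vB}\min\{\cdots\}_{\vC,\vD}$, which is dominated by the variance term precisely when $m\gtrsim\log n$ — this is exactly the stated hypothesis. Combining the two contributions yields the claimed inequality with a universal constant $c_4$.

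I expect the main obstacle to be the mixed-norm estimate (i): one has to exploit that each coordinate of $\G^*(\vM)$ is a sum over an anti-diagonal of length $w_a$ rescaled by $w_a^{-1/2}$, so that Cauchy--Schwarz over that anti-diagonal loses nothing and turns a sum of row norms into a Frobenius norm; obtaining both orderings of the $\min$ requires carrying this out symmetrically in $\vA$ and $\vB$. The remaining ingredients — the isometry bound $\ln\G^*\rn=1$, the passage $\ln\vC\vD^T\rn_F\le\sqrt n\min\{\cdots\}$, and the Bernstein step itself — are routine, modulo minor care with the complex-valued concentration bound and with the precise with-replacement form of $\P_{\hat{\Omega}}$.
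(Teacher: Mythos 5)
Your proposal follows essentially the same route as the paper's Appendix~\ref{apd:subsec_hankelcomp_tr_inq}: rewrite the quantity as a sum of $m$ i.i.d.\ mean-zero scalar terms indexed by the with-replacement samples, establish the mixed $\ell_\infty$ bound $\|\G^*(\vA\vB^T)\|_\infty\le\min\{\|\vA\|_F\|\vB\|_{2,\infty},\|\vA\|_{2,\infty}\|\vB\|_F\}$ by exploiting the anti-diagonal structure of the (weighted) Hankel basis, and finish with the matrix/scalar Bernstein inequality, absorbing the tail term into the variance term under $m\gtrsim\log n$. The only cosmetic differences are notational (working with $\vz=\G^*(\vA\vB^T)$ rather than directly with $\langle\vA\vB^T,\vH_a\rangle$) and in how you obtain the variance proxy (via $\|\vw\|_2\le\sqrt n\min\{\cdots\}$ rather than the paper's $\sum_a|\cdot|^2|\cdot|^2\le n\max_a|\cdot|^2\max_a|\cdot|^2$), both yielding the same $\frac{n^2}{m}L^2$ bound.
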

\section{Proof of Lemma~\ref{lem:init}}\label{apd:proof-init-lem}
The proof is partly inspired by \cite{Ma2021}, and we make adaptations to our (complex) symmetric Hankel matrix completion problem. 

By setting $\bm{L}=\vM^0$ in Lemma~\ref{lem:spec_init}, one has
\begin{small}
\begin{align*}
\ln\vM^{0}-\G\vy\rn_F\leq\sqrt{2r}\ln\vM^{0}-\G\vy\rn_2
&\leq c_3\sqrt{\frac{2\mu_0r^2\log(n)}{\hat{m}}}\sigma_1(\vM_\star)\\
&\leq\kappa^{-1}\varepsilon_0\sigma_r(\vM_\star),\numberthis\label{eq:spectral-initial}
\end{align*}
\end{small}with probability at least $1-n^{-2}$. The last inequality follows from the assumption on $\hat{m}\gtrsim O\( \varepsilon_0^{-2}\mu\kappa^4 r^2\log(n)\)$.

As $\vM_\star=\G\vy$, a simple corollary of \eqref{eq:spectral-initial} by Weyl's Theorem is that
\begin{align*}
    \sigma_1(\vM^0)\geq (1-\kappa^{-1}\varepsilon_0)\sigma_1(\vM_\star)\geq(1-\varepsilon_0)\sigma_1(\vM_\star).
\end{align*}
Consequently $\sigma=\frac{\sigma_1(\vM^0)}{1-\varepsilon_0}\geq\sigma_1(\vM_\star)$ with high probability.

In Lemma~\ref{lem:symdist_inq}, we set $\vM=\vM^0=\vU^0\vSS^0(\vU^0)^T$, then $\vZ=\tilde{\vZ}^0=\vU^0(\vSS^0)^{\frac{1}{2}}$. combining Lemma~\ref{lem:symdist_inq} and \eqref{eq:spectral-initial} to obtain 
\begin{align*}
 \distP{\tilde{\vZ}^0}{\vZ_\star}&\leq
\sqrt{2}\min_{\substack{\vQ \in \mathcal{O}}}\|\tilde{\vZ}^{0}-\vZ_{\star}\vQ\|_F \\
&\leq\frac{(\sqrt{2}+1)^{\frac{1}{2}}}{\sigma_r(\vM_\star)^{\frac{1}{2}}}\|\vM^{0}-\G\vy\|_F\leq\frac{8\varepsilon_0}{5\kappa}\sigma_r(\vM_\star)^{\frac{1}{2}}.  
\numberthis
\label{eq:init-spec-distanceinq}
\end{align*}

Denote the optimal solution  for problem $\min_{\vQ \in \mathcal{O}}\|\tilde{\vZ}^0-\vZ_\star\vQ\|_F$ as $\vQ_z$ and we have the following result from \eqref{eq:init-spec-distanceinq}
\begin{align*}
    \|\tilde{\vZ}^{0}-\vZ_\star\vQ_{z}\|\leq 1.6\kappa^{-1}\varepsilon_0\sigma_r(\vM_\star)^{\frac{1}{2}}. 
\end{align*}

Besides, denote $\tilde{\vZ}^0_{\star{1}}=\vZ_\star\tilde{\vP}^0,\tilde{\vZ}^0_{\star{2}}=\vZ_\star(\tilde{\vP}^0)^{-T}$ where $\tilde{\vP}^0$ is the optimal solution to $\distP{\tilde{\vZ}^0}{\vZ_\star}$, as defined in \eqref{eq:def_optinvP}. The existence of $\tilde{\vP}^0$ can be verified from Lemma~\ref{lem:existence} by setting $\hat{\vP}=\vQ_z$. Thus
\begin{align*}
    \distP{\tilde{\vZ}^0}{\vZ_\star}=\sqrt{\|\tilde{\vZ}^0-\tilde{\vZ}^0_{\star{1}}\|_F^2+\|\tilde{\vZ}^0-\tilde{\vZ}^0_{\star{2}}\|_F^2},
\end{align*}
and 
\begin{align*}
 \|\tilde{\vZ}^{0}-\tilde{\vZ}^{0}_{\star{1}}\|\leq\distP{\tilde{\vZ}^0}{\vZ_\star}\leq 1.6\kappa^{-1}\varepsilon_0\sigma_r(\vM_\star)^{\frac{1}{2}}.
\end{align*}

Invoking triangle inequality, we have
\begin{align*}
\sigma_1\(\tilde{\vZ}^0_{\star{1}}\)&\leq\|\vZ_\star\vQ_z\|+\|\tilde{\vZ}^{0}-\vZ_\star\vQ_{z}\|+\|\tilde{\vZ}^{0}_{\star{1}}-\tilde{\vZ}^{0}\|\\
    &\leq(1+3.2\kappa^{-1}\varepsilon_0)\sigma_1(\vM_\star)^{\frac{1}{2}}.\numberthis\label{eq:init_inter_sig1}
\end{align*}
Additionally, from Weyl's theorem \begin{align*}
    \sigma_r\(\tilde{\vZ}_{\star{1}}^{0}\)&\geq\sigma_r\(\vZ_\star\vQ_{z}\)-\|\tilde{\vZ}_{\star{1}}^{0}-\vZ_\star\vQ_{z}\|\\
    &\geq\sigma_r\(\vZ_\star\)-\Big(\|\tilde{\vZ}^0_{\star{1}}-\tilde{\vZ}^0\|+\|\tilde{\vZ}^0-\vZ_\star\vQ_{z}\|\Big)\\
    &\geq\(1-3.2\kappa^{-1}\varepsilon_0\)\sigma_r\(\vM_\star\)^{\frac{1}{2}}.\numberthis\label{eq:init_inter_sigr}
\end{align*}
Following a similar route, we can establish the same bounds for $\sigma_1\(\tilde{\vZ}^0_{\star{2}}\)$ and $\sigma_r\(\tilde{\vZ}_{\star{1}}^{0}\)$. 
Then we have 
\begin{align*}
\|\tilde{\vZ}^0_{\star{1}}\|_{2,\infty}&\leq\ln\vU_\star\rn_{2,\infty}\sigma_1\(\tilde{\vZ}^0_{\star{1}}\)\leq 2\sqrt{\frac{\mu r\sigma}{n}},
\end{align*} 
where we invoke \eqref{eq:init_inter_sig1} and the fact that $\varepsilon_0$ is a small constant such that 
 $(1+3.2\kappa^{-1}\varepsilon_0)\leq\sqrt{2}$. The same holds for $\|\tilde{\vZ}^0_{\star{2}}\|_{2,\infty}$.
 
 Thus it is evidently that $\tilde{\vZ}^0_{\star{1}},\tilde{\vZ}^0_{\star{2}}\in\CS$ where $\CS$ is a convex set defined in \eqref{eq:set_C}. Considering $\vZ^0=\PC{\tilde{\vZ}^0}$ and the definition of $\distP{\vZ^0}{\vZ_\star}$, we immediately have
 \begin{small} 
\begin{align*}
     \distP{\vZ^0}{\vZ_\star}&\leq\sqrt{\|\vZ^0-\tilde{\vZ}^0_{\star{1}}\|_F^2+\|\vZ^0-\tilde{\vZ}^0_{\star{2}}\|_F^2}
    \\&\leq\sqrt{\|\tilde{\vZ}^0-\tilde{\vZ}^0_{\star{1}}\|_F^2+\|\tilde{\vZ}^0-\tilde{\vZ}^0_{\star{2}}\|_F^2}
    \\&\leq 1.6\kappa^{-1}\varepsilon_0\sigma_r(\vM_\star)^{\frac{1}{2}}. \numberthis \label{eq:init_opt_diff_inter}
\end{align*}
 \end{small}
The existence of ${\vP}^0$ can be verified from Lemma~\ref{lem:existence} by setting $\hat{\vP}=\tilde{\vP}^0$.
Two immediate results of \eqref{eq:init_opt_diff_inter} are
\begin{align}
    \|\vZ_{\star{1}}^{0}-\vZ^{0}\|\leq1.6\kappa^{-1}\varepsilon_0\sigma_r(\vM_\star)^{\frac{1}{2}},\label{eq:init_opt_diff3}\\
    \|\vZ^{0}-\tilde{\vZ}_{\star{1}}^{0}\|\leq 1.6\kappa^{-1}\varepsilon_0\sigma_r(\vM_\star)^{\frac{1}{2}}. \label{eq:init_opt_diff4}
\end{align}
We are still left to prove \eqref{eq:init_nmbd_f1} and \eqref{eq:init_nmbd_f2}, which are the bounds for $\vZ_{\star{1}}^{0}$ and $\vZ_{\star{2}}^{0}$. We only give the proof of bounding $\vZ_{\star{1}}^{0}$, and  the same holds for bounding $\vZ_{\star{2}}^{0}$. 

Invoking triangle inequality and combining \eqref{eq:init_inter_sig1}, \eqref{eq:init_inter_sigr}, \eqref{eq:init_opt_diff3}, \eqref{eq:init_opt_diff4}, we have
\begin{align*}
 \sigma_1\(\vZ_{\star{1}}^{0}\) &\leq \|\vZ_{\star{1}}^{0}-\vZ^{0}\|+\|\vZ^{0}-\tilde{\vZ}_{\star{1}}^{0}\|+\|\tilde{\vZ}^0_{\star{1}}\|\\
 &\leq\(1+6.4\kappa^{-1}\varepsilon_0\)\sigma_1\(\vM_\star\)^{\frac{1}{2}},
\end{align*}
 and additionally invoking Weyl's theorem
\begin{align*}
    \sigma_r\(\vZ_{\star{1}}^{0}\)&\geq \sigma_r(\tilde{\vZ}_{\star{1}}^{0})-\|\vZ_{\star{1}}^{0}-\tilde{\vZ}_{\star{1}}^{0}\|\\
    &\geq\sigma_r(\tilde{\vZ}_{\star{1}}^{0})-\Big(\|\vZ_{\star{1}}^{0}-\vZ^{0}\|+\|\vZ^{0}-\tilde{\vZ}_{\star{1}}^{0}\|\Big)\\
    &\geq\(1-6.4\kappa^{-1}\varepsilon_0\)\sigma_r\(\vM_\star\)^{\frac{1}{2}}.
\end{align*}

\section{Proof of Lemma~\ref{lem:dist_contract}}\label{apd:proof-lem2}
{As shown in \cite{Zhang2018,Zhang2019,Recht2011}, it is sufficient to study the sampling model with replacement. In the following derivations, ${\hat{\Omega}}_{k}$} is a sampling set with replacement corresponding to the index set $\Omega_{k}$ in the $k$-th iteration.
The following facts are immediate results of \eqref{eq:nmcond1} and \eqref{eq:nmcond2}, which are helpful for establishing \eqref{eq:dist_contraction_npj}.
\begin{claim} \label{claim:regcond_for_factors}
Under the conditions of \eqref{eq:nmcond1} and \eqref{eq:nmcond2}, 
\begin{align}
    \frac{1}{2}\sigma_r(\vM_\star)\leq\sigma_r^2\(\vZ_{\star{i}}^k\)&\leq\sigma_1^2\(\vZ_{\star{i}}^k\)\leq 2\sigma_1(\vM_\star).\label{eq:regcond1}
\end{align}
\begin{proof}
    There we only give the proof of  \eqref{eq:regcond1} for $i=1$, as it follows a similar route when $i=2$. From \eqref{eq:nmcond2}
    \begin{align*}
    \sigma_1(\vZ_{\star{1}}^k)
    &\leq\big(1+6.4\kappa^{-1}\varepsilon_0\sum_{t=0}^\infty(1-\frac{11\eta^{\prime}}{100\kappa})^t\big)\sigma_1\(\vM_\star\)^{\frac{1}{2}}\\
    &=(1+\frac{640\varepsilon_0}{11\eta^{\prime}})\sigma_1(\vM_\star)^{\frac{1}{2}}\leq\sqrt{2}\sigma_1(\vM_\star)^{\frac{1}{2}},    
    \end{align*}
    where the last inequality is from that $\varepsilon_0$ is small enough such that $\frac{640\varepsilon_0}{11\eta^{\prime}}\leq\sqrt{2}-1$ and $\eta^{\prime}$ is some constant.
    From \eqref{eq:nmcond1}
    \begin{align*}
      \sigma_r(\vZ_{\star{1}}^k)&\geq\big(1-6.4\kappa^{-1}\varepsilon_0\sum_{t=0}^\infty(1-\frac{11\eta^{\prime}}{100\kappa})^t\big)\sigma_r\(\vM_\star\)^{\frac{1}{2}}\\
      &=(1-\frac{640\varepsilon_0}{11\eta^{\prime}})\geq\frac{\sqrt{2}}{2}\sigma_r\(\vM_\star\)^{\frac{1}{2}},
    \end{align*}
    as long as $\varepsilon_0$ is a small constant such that $\frac{640\varepsilon_0}{11\eta^{\prime}}\leq1-\frac{\sqrt{2}}{2}$.
\end{proof}
\end{claim}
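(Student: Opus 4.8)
The plan is to deduce both of the claimed two-sided bounds directly from the inductive hypotheses \eqref{eq:nmcond1} and \eqref{eq:nmcond2}, after replacing the $k$-dependent factor $6.4\kappa^{-1}\varepsilon_0\sum_{t=0}^k(1-\frac{11\eta^{\prime}}{100\kappa})^t$ by a bound that is uniform in $k$. First I would observe that $0<\frac{11\eta^{\prime}}{100\kappa}<1$ (since $0<\eta^{\prime}\le 1/54$ and $\kappa\ge 1$), so the partial geometric sum is monotone increasing in $k$ and dominated by the full series: $\sum_{t=0}^k(1-\frac{11\eta^{\prime}}{100\kappa})^t\le\frac{100\kappa}{11\eta^{\prime}}$, hence $6.4\kappa^{-1}\varepsilon_0\sum_{t=0}^k(1-\frac{11\eta^{\prime}}{100\kappa})^t\le\frac{640\varepsilon_0}{11\eta^{\prime}}$, which no longer depends on $k$ (this uniformity is exactly what makes the claim usable at every step of the induction).

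Next I would insert this bound into \eqref{eq:nmcond2} to get $\sigma_1(\vZ_{\star{i}}^k)\le(1+\frac{640\varepsilon_0}{11\eta^{\prime}})\sigma_1(\vM_\star)^{1/2}$; since $\eta^{\prime}$ is a fixed constant, choosing $\varepsilon_0$ small enough that $\frac{640\varepsilon_0}{11\eta^{\prime}}\le\sqrt{2}-1$ gives $\sigma_1^2(\vZ_{\star{i}}^k)\le 2\sigma_1(\vM_\star)$. Symmetrically, inserting the same bound into \eqref{eq:nmcond1} yields $\sigma_r(\vZ_{\star{i}}^k)\ge(1-\frac{640\varepsilon_0}{11\eta^{\prime}})\sigma_r(\vM_\star)^{1/2}$, and imposing the further smallness $\frac{640\varepsilon_0}{11\eta^{\prime}}\le 1-\frac{\sqrt{2}}{2}$ gives $\sigma_r^2(\vZ_{\star{i}}^k)\ge\frac{1}{2}\sigma_r(\vM_\star)$. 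The middle inequality $\sigma_r^2(\vZ_{\star{i}}^k)\le\sigma_1^2(\vZ_{\star{i}}^k)$ is immediate. Taking $\varepsilon_0$ so that both smallness conditions hold (equivalently $\varepsilon_0\lesssim\eta^{\prime}$) proves \eqref{eq:regcond1} for $i=1$, and the case $i=2$ is verbatim the same, since \eqref{eq:nmcond1} and \eqref{eq:nmcond2} are stated uniformly over $i\in\{1,2\}$.

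There is essentially no obstacle here: the claim is a one-line consequence of summing a geometric series. The only point worth flagging is that the admissible size of the ``small enough constant'' $\varepsilon_0$ is tied to the step-size constant $\eta^{\prime}$ — since $\frac{640\varepsilon_0}{11\eta^{\prime}}$ must be pushed below an absolute threshold, $\varepsilon_0$ must be chosen proportionally to $\eta^{\prime}$; because $\eta^{\prime}$ is itself a fixed absolute constant, this rescaling has no effect on the sample-complexity orders stated elsewhere.
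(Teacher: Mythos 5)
Your proposal is correct and follows essentially the same route as the paper's proof: bound the partial geometric sum $\sum_{t=0}^k(1-\frac{11\eta'}{100\kappa})^t$ by the full series $\frac{100\kappa}{11\eta'}$, substitute into \eqref{eq:nmcond1} and \eqref{eq:nmcond2}, and choose $\varepsilon_0$ small enough that $\frac{640\varepsilon_0}{11\eta'}$ falls below $\sqrt{2}-1$ and $1-\frac{\sqrt{2}}{2}$. Your added remark that the admissible $\varepsilon_0$ scales with $\eta'$ but that this does not affect the sample-complexity orders is a sensible observation that the paper leaves implicit.
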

Now we begin to establish \eqref{eq:dist_contraction_npj}.
According to the definition of the distance metric \eqref{eq:def_dist_invert}
\begin{small}
 \begin{align*}
  \distPsq{\tilde{\vZ}^{k+1}}{\vZ_\star}&\leq\|\tilde{\vZ}^{k+1}-\vZ_{\star{1}}^k\|_F^2+\|\tilde{\vZ}^{k+1}-\vZ_{\star{2}}^k\|_F^2\\
  &=\distPsq{\vZ^k}{\vZ_\star}-2\eta\Real\langle\nabla f^{(k)}(\vZ^k),\vDeltaa^k+\vDeltab^k\rangle\\
  &\quad+2\eta^2\|\nabla f^{(k)}(\vZ^k)\|_F^2, \numberthis \label{eq:dist_iter_oracle} \end{align*}
\end{small}\noindent{where} the equality results from $\tilde{\vZ}^{k+1}=\vZ^k-\eta\nabla f^{(k)}(\vZ^k)$ and we set $\vDeltaa^k=\vZ^k-\vZ_{\star{1}}^k$, $\vDeltab^k=\vZ^k-\vZ_{\star{2}}^k$.
 Next we build a lower bound for $\Real\langle\nabla f^{(k)}(\vZ^k),\vDeltaa^k+\vDeltab^k\rangle$ and an upper bound for $\|\nabla f^{(k)}(\vZ^k)\|_F^2$ in step 1 and step 2. Finally, we give the upper bound of \eqref{eq:dist_iter_oracle} in step 3. 
 

\textbf{Step 1: bounding $\Real\la\nabla f^{(k)}(\vZ^k),\vDeltaa^k+\vDeltab^k\ra$.} We reformulate this term into two parts to bound.
\begin{small}
\begin{align*}
  &\Real\la\nabla f^{(k)}(\vZ^k),\vDeltaa^k+\vDeltab^k\ra\\
  &=
  \Real \langle(\I-\G\G^*)(\vZ^k(\vZ^k)^T-\vM_\star)\bar{\vZ}^k \\
  &\quad+\hat{p}^{-1}\G\P_{{\hat{\Omega}}_k}\G^*(\vZ^k(\vZ^k)^T-\vM_\star)\bar{\vZ}^k,\vDeltaa^k+\vDeltab^k \rangle\\
  &=\Real\langle\vZ^k(\vZ^k)^T-\vM_\star ,\vDeltaa^k(\vZ^k)^T+\vZ^k(\vDeltab^k)^T\rangle+\Real\\&\quad\langle\G(\hat{p}^{-1}\P_{{\hat{\Omega}}_k}-\I)\G^*(\vZ^k(\vZ^k)^T-\vM_\star),\vDeltaa^k(\vZ^k)^T+\vZ^k(\vDeltab^k)^T\rangle\\
  &=I_1+I_2,
\end{align*}
\end{small}where the first equality results from $(\I-\G\G^*)(\vZ^k(\vZ^k)^T-\vM_\star)=(\I-\G\G^*)(\vZ^k(\vZ^k)^T)$, and the second equality results from the symmetry of $\vZ^k(\vZ^k)^T-\vM_\star$. In the following derivations, we invoke the decomposition as
\begin{small}
\begin{align*}
\vDeltaa^k(\vZ^k)^T+\vZ^k(\vDeltab^k)^T&=\vDeltaa^k(\vZ_{\star{2}}^k)^T+\vZ_{\star{1}}^k(\vDeltab^k)^T+2\vDeltaa^k(\vDeltab^k)^T,\\
\vZ^k(\vZ^k)^T-\vM_\star
&=\vDeltaa^k(\vZ_{\star{2}}^k)^T+\vZ_{\star{1}}^k(\vDeltab^k)^T+\vDeltaa^k(\vDeltab^k)^T.
\end{align*}
\end{small}

Thus for $I_1$, we have
\begin{small}
\begin{align*}
I_1&=\Real\langle\vDeltaa^k(\vZ_{\star{2}}^k)^T+\vZ_{\star{1}}^k(\vDeltab^k)^T+\vDeltaa^k(\vDeltab^k)^T,\\
&\quad\vDeltaa^k(\vZ_{\star{2}}^k)^T+\vZ_{\star{1}}^k(\vDeltab^k)^T+2\vDeltaa^k(\vDeltab^k)^T\rangle\\
    &=\|\vDeltaa^k(\vZ_{\star{2}}^k)^T+\vZ_{\star{1}}^k(\vDeltab^k)^T\|_F^2+2\|\vDeltaa^k(\vDeltab^k)^T\|_F^2\\
    &\quad+3\Real\langle\vDeltaa^k(\vZ_{\star{2}}^k)^T+\vZ_{\star{1}}^k(\vDeltab^k)^T,\vDeltaa^k(\vDeltab^k)^T\rangle\\
    &\geq\frac{3}{4}\|\vDeltaa^k(\vZ_{\star{2}}^k)^T{+}\vZ_{\star{1}}^k(\vDeltab^k)^T\|_F^2-7\|\vDeltaa^k\|_F^2\|\vDeltab^k\|_F^2\\
    &\geq \frac{3}{4}\|\vDeltaa^k(\vZ_{\star{2}}^k)^T{+}\vZ_{\star{1}}^k(\vDeltab^k)^T\|_F^2
    \\&\quad-14\varepsilon_0^2\sigma_r(\vM_\star)(\|\vDeltaa^k\|_F^2+\|\vDeltab^k\|_F^2),
\end{align*}
\end{small}where the third inequality results from $3\Real\langle\vA,\vB\rangle=\|\frac{1}{2}\vA+3\vB\|_F^2-\frac{1}{4}\|\vA\|_F^2-9\|\vB\|_F^2\geq-\frac{1}{4}\|\vA\|_F^2-9\|\vB\|_F^2$. 
The last inequality follows from
\begin{align*}
    \mathrm{max}\{\|\vDeltaa^k\|_F,\|\vDeltab^k\|_F\}{\leq}\distP{\vZ^k}{\vZ_\star}{\leq}2\varepsilon_0\sqrt{\sigma_r(\vM_\star)}. \numberthis \label{eq:converge_intermid_deltaab}
\end{align*}

For $I_2$, we have
\begin{small}
\begin{align*}
I_2&=\Real\langle\G(\hat{p}^{-1}\P_{{\hat{\Omega}}_k}-\I)\G^*(\vDeltaa^k(\vZ_{\star{2}}^k)^T+\vZ_{\star{1}}^k(\vDeltab^k)^T+\vDeltaa^k(\vDeltab^k)^T),\\
&\quad\vDeltaa^k(\vZ_{\star{2}}^k)^T+\vZ_{\star{1}}^k(\vDeltab^k)^T+2\vDeltaa^k(\vDeltab^k)^T\rangle\\
&=\Real\langle\G(\hat{p}^{-1}\P_{{\hat{\Omega}}_k}-\I)\G^*(\vDeltaa^k(\vZ_{\star{2}}^k)^T+\vZ_{\star{1}}^k(\vDeltab^k)^T),\\
&\quad\vDeltaa^k(\vZ_{\star{2}}^k)^T+\vZ_{\star{1}}^k(\vDeltab^k)^T\rangle\\&
+3\Real\langle\G(\hat{p}^{-1}\P_{{\hat{\Omega}}_k}-\I)\G^*(\vDeltaa^k(\vDeltab^k)^T),
\vDeltaa^k(\vZ_{\star{2}}^k)^T+\vZ_{\star{1}}^k(\vDeltab^k)^T\rangle\\
&+2\Real\langle\G(\hat{p}^{-1}\P_{{\hat{\Omega}}_k}-\I)\G^*(\vDeltaa^k(\vDeltab^k)^T),\vDeltaa^k(\vDeltab^k)^T\rangle\\
&=I_{2,1}+3I_{2,2}+2I_{2,3}.
\end{align*}
\end{small}
It is easily to see  $\vDeltaa^k(\vZ_{\star{2}}^k)^T+\vZ_{\star{1}}^k(\vDeltab^k)^T\in T$. Invoking Lemma~\ref{lem:tangent_completion_inq} when $\hat{m}\gtrsim O\( \varepsilon_0^{-2}\mu r\log(n)\)$, we have
\begin{small}
\begin{align*}
    I_{2,1}&\geq-\|\P_T\G(\hat{p}^{-1}\P_{{\hat{\Omega}}_k}-\I)\G^*\P_T\|\|\vDeltaa^k(\vZ_{\star{2}}^k)^T+\vZ_{\star{1}}^k(\vDeltab^k)^T\|_F^2\\
    &\geq-\varepsilon_0\|\vDeltaa^k(\vZ_{\star{2}}^k)^T+\vZ_{\star{1}}^k(\vDeltab^k)^T\|_F^2.
\end{align*}
\end{small}
We have the following results from Claim \ref{claim:regcond_for_factors} 
\begin{align*}
\|\vZ_{\star{1}}^k\|_{2,\infty}\leq\|\vU_\star\|_{2,\infty}\|\vZ_{\star{1}}^k\|\leq 2\sqrt{\frac{\mu r\sigma}{n}},\\
\|\vDeltaa^k\|_{2,\infty}\leq\|\vZ^k\|_{2,\infty}+\|\vZ_{\star{1}}^k\|_{2,\infty}\leq 4\sqrt{\frac{\mu r\sigma}{n}},
\end{align*}
and similarly, the bounds hold for $\|\vZ_{\star{2}}^k\|_{2,\infty}$ and $\|\vDeltab^k\|_{2,\infty}$.

combining the previous results and invoking Lemma~\ref{lem:Hankcom_tr_inq}
\begin{align*}
    I_{2,2}&=\Real\langle\G(\hat{p}^{-1}\P_{{\hat{\Omega}}_k}-\I)\G^*(\vDeltaa^k(\vDeltab^k)^T),
\vDeltaa^k(\vZ_{\star{2}}^k)^T\rangle\\
&\quad+\Real\langle\G(\hat{p}^{-1}\P_{{\hat{\Omega}}_k}-\I)\G^*(\vDeltaa^k(\vDeltab^k)^T),
\vZ_{\star{1}}^k(\vDeltab^k)^T\rangle\\
&\geq -c_3\sqrt{\frac{n^2\log(n)}{\hat{m}}}(\|\vDeltaa^k\|_{2,\infty}\|\vDeltab^k\|_F\|\vDeltaa^k\|_F\|\vZ_{\star{2}}^k\|_{2,\infty}\\
&\quad+\|\vDeltaa^k\|_F\|\vDeltab^k\|_{2,\infty}\|\vZ_{\star{1}}^k\|_{2,\infty}\|\vDeltab^k\|_F)\\
&\overset{(a)}{\geq} -16c_3\sqrt{\frac{\mu^2 r^2(1+\varepsilon_0)^2\log(n)\sigma_1^2(\vM_\star)}{\hat{m}(1-\varepsilon_0)^2}}\|\vDeltaa^k\|_F\|\vDeltab^k\|_F\\
&\overset{(b)}{\geq} -\varepsilon_0\sigma_r(\vM_\star)\|\vDeltaa^k\|_F\|\vDeltab^k\|_F,
\end{align*}
$(a)$ results from $\sigma=\frac{\sigma_1(\vM^0)}{1-\varepsilon_0}\leq\frac{1+\varepsilon_0}{1-\varepsilon_0}\sigma_1(\vM_\star)$, where $\varepsilon_0$ is a small constant. $(b)$ follows from  $\hat{m}\gtrsim O(\varepsilon_0^{-2}\mu^2\kappa^2 r^2\log(n))$. 

Similarly invoking Lemma~\ref{lem:Hankcom_tr_inq}, one can obtain 
\begin{align*}
I_{2,3}\geq -\varepsilon_0\sigma_r(\vM_\star)\|\vDeltaa^k\|_F\|\vDeltab^k\|_F.  
\end{align*}
Therefore
\begin{small}
\begin{align*}
    I_2&{\geq} {-}\varepsilon_0\|\vDeltaa^k(\vZ_{\star{2}}^k)^T{+}\vZ_{\star{1}}^k(\vDeltab^k)^T\|_F^2{-}5\varepsilon_0\sigma_r(\vM_\star)\|\vDeltaa^k\|_F\|\vDeltab^k\|_F\\
    &{\geq} {-}\varepsilon_0\|\vDeltaa^k(\vZ_{\star{2}}^k)^T{+}\vZ_{\star{1}}^k(\vDeltab^k)^T\|_F^2
    {-}\frac{5}{2}\varepsilon_0\sigma_r(\vM_\star)(\|\vDeltaa^k\|_F^2+\|\vDeltab^k\|_F^2),
\end{align*}
\end{small}and combine $I_1$, $I_2$ to obtain
    \begin{small}    
\begin{align*}
     \Real\langle\nabla f^{(k)}(\vZ),\vDeltaa^k+&\vDeltab^k\rangle\geq (\frac{3}{4}-\varepsilon_0)\|\vDeltaa^k(\vZ_{\star{2}}^k)^T+\vZ_{\star{1}}^k(\vDeltab^k)^T\|_F^2\\
    &-(14\varepsilon_0^2+\frac{5}{2}\varepsilon_0)\sigma_r(\vM_\star)(\|\vDeltaa^k\|_F^2+\|\vDeltab^k\|_F^2).
\end{align*}
    \end{small}We point out a result towards the first term in the RHS of the previous inequality. 
\begin{small}
\begin{align*}
&\|\vDeltaa^k(\vZ_{\star{2}}^k)^T+\vZ_{\star{1}}^k(\vDeltab^k)^T\|_F^2\\&=\|\vDeltaa^k(\vZ_{\star{2}}^k)^T\|_F^2+\|\vZ_{\star{1}}^k(\vDeltab^k)^T\|_F^2+2\Real\langle\vDeltaa^k(\vZ_{\star{2}}^k)^T,\vZ_{\star{1}}^k(\vDeltab^k)^T\rangle\\
&\overset{(a)}{\geq}\|\vDeltaa^k(\vZ_{\star{2}}^k)^T\|_F^2+\|\vZ_{\star{1}}^k(\vDeltab^k)^T\|_F^2\\
&\geq\sigma_r^2(\vZ_{\star{1}}^k)\|\vDeltab^k\|_F^2+\sigma_r^2(\vZ_{\star{2}}^k)\|\vDeltaa^k\|_F^2\\&\overset{(b)}{\geq}\frac{1}{2}\sigma_r(\vM_\star)(\|\vDeltaa^k\|_F^2+\|\vDeltab^k\|_F^2),\numberthis \label{eq:distcontrac_opt_funda_term}
\end{align*}
\end{small}where $(a)$ results from a corollary of the first-order optimality condition \eqref{eq:optcond_primal}: $(\vZ_{\star{1}}^k)^{H}\vDeltaa^k=(\vDeltab^k)^T{\overline{\vZ_{\star{2}}^k}}$, which is
\begin{small}
\begin{align*}
\langle\vDeltaa^k(\vZ_{\star{2}}^k)^T,\vZ_{\star{1}}^k(\vDeltab^k)^T\rangle=\langle(\vZ_{\star{1}}^k)^{H}\vDeltaa^k,(\vDeltab^k)^T{\overline{\vZ_{\star{2}}^k}}\rangle\geq 0,
\end{align*}
\end{small}and $(b)$ comes from Claim \ref{claim:regcond_for_factors}.

\textbf{Step 2: bounding $\|\nabla f^{(k)}(\vZ^k)\|_F^2$.} We bound this term by controlling two parts $I_3$ and $I_4$.
\begin{small}
    \begin{align*}
    &\|\nabla f^{(k)}(\vZ^k)\|_F^2
    \\&{=}\|\big((\vZ^k(\vZ^k)^T{-}\vM_\star){+}\G(\hat{p}^{-1}\P_{{\hat{\Omega}}_k}-\I)\G^*(\vZ^k(\vZ^k)^T{-}\vM_\star)\big)\bar{\vZ}^k\|_F^2\\
    &\leq2\|(\vZ^k(\vZ^k)^T-\vM_\star)\bar{\vZ}^k\|_F^2\\
    &\quad+2\|\G(\hat{p}^{-1}\P_{{\hat{\Omega}}_k}-\I)\G^*(\vZ^k(\vZ^k)^T-\vM_\star)\bar{\vZ}^k\|_F^2=2(I_3^2+I_4^2),
\end{align*}
\end{small}where the first equality results from the fact $(\I-\G\G^*)(\vZ^k(\vZ^k)^T-\vM_\star)=(\I-\G\G^*)(\vZ^k(\vZ^k)^T)$.
\begin{small}
\begin{align*}
    I_3^2&\leq \|\vZ^k\|^2\|\vZ^k(\vZ^k)^T-\vM_\star\|_F^2\\
    &\leq 2\|\vZ^k\|^2\(\|\vDeltaa^k(\vZ_{\star{2}}^k)^T+\vZ_{\star{1}}^k(\vDeltab^k)^T\|_F^2+\|\vDeltaa^k\|_F^2\|\vDeltab^k\|_F^2\),\\
    I_4&=\|\G(\hat{p}^{-1}\P_{{\hat{\Omega}}_k}-\I)\G^*(\vZ^k(\vZ^k)^T-\vM_\star)\bar{\vZ}^k\|_F\\
    &=\max_{\|\tilde{\vX}\|_F=1}|\langle\G(\hat{p}^{-1}\P_{{\hat{\Omega}}_k}-\I)\G^*(\vZ^k(\vZ^k)^T-\vM_\star),\tilde{\vX}(\vZ^k)^T\rangle|.
\end{align*}
\end{small}
Note that 
\begin{small}
\begin{align*}
    &|\langle\G(\hat{p}^{-1}\P_{{\hat{\Omega}}_k}-\I)\G^*(\vZ^k(\vZ^k)^T-\vM_\star),\tilde{\vX}(\vZ^k)^T\rangle|\\
    &=|\langle\G(\hat{p}^{-1}\P_{{\hat{\Omega}}_k}-\I)\G^*(\vDeltaa^k(\vZ_{\star{2}}^k)^T+\vZ_{\star{1}}^k(\vDeltab^k)^T+\vDeltaa^k(\vDeltab^k)^T),\\
    &\quad\tilde{\vX}(\vDeltab^k)^T+\tilde{\vX}(\vZ_{\star{2}}^k)^T\rangle|\\
    &\leq |\langle\G(\hat{p}^{-1}\P_{{\hat{\Omega}}_k}-\I)\G^*(\vDeltaa^k(\vZ_{\star{2}}^k)^T+\vZ_{\star{1}}^k(\vDeltab^k)^T),\tilde{\vX}(\vZ_{\star{2}}^k)^T\rangle|\\
    &\quad+|\langle\G(\hat{p}^{-1}\P_{{\hat{\Omega}}_k}-\I)\G^*(\vDeltaa^k(\vZ_{\star{2}}^k)^T+\vZ_{\star{1}}^k(\vDeltab^k)^T),\tilde{\vX}(\vDeltab^k)^T\rangle|\\
    &\quad+|\langle\G(\hat{p}^{-1}\P_{{\hat{\Omega}}_k}-\I)\G^*(\vDeltaa^k(\vDeltab^k)^T),\tilde{\vX}(\vDeltab^k)^T\rangle|\\
    &\quad+ |\langle\G(\hat{p}^{-1}\P_{{\hat{\Omega}}_k}-\I)\G^*(\vDeltaa^k(\vDeltab^k)^T),\tilde{\vX}(\vZ_{\star{2}}^k)^T\rangle|
    \\
    &= I_{4,1}+I_{4,2}+I_{4,3}+I_{4,4}.
\end{align*}
\end{small}

As $\vDeltaa^k(\vZ_{\star{2}}^k)^T+\vZ_{\star{1}}^k(\vDeltab^k)^T\in T$ and $\tilde{\vX}(\vZ_{\star{2}}^k)^T\in T$, we invoke Lemma~\ref{lem:tangent_completion_inq} to obtain the upper bound of $I_{4,1}$ when $\|\tilde{\vX}\|_F=1$, similarly as bounding $I_{2,1}$
\begin{small}
\begin{align*}
    I_{4,1}&\leq \|\P_T\G(\hat{p}^{-1}\P_{{\hat{\Omega}}_{k}}-\I)\G^*\P_T\|\\
    &\quad\cdot\|\vDeltaa^k(\vZ_{\star{2}}^k)^T+\vZ_{\star{1}}^k(\vDeltab^k)^T\|_F\cdot\|\tilde{\vX}(\vZ_{\star{2}}^k)^T\|_F\\
    &\leq \varepsilon_0 \sqrt{2\sigma_1(\vM_\star)}\|\vDeltaa^k(\vZ_{\star{2}}^k)^T+\vZ_{\star{1}}^k(\vDeltab^k)^T\|_F.
\end{align*}
\end{small}Similarly as bounding $I_{2,2}$, when $\hat{m}\gtrsim O(\varepsilon_0^{-2}\mu^2\kappa^2 r^2\log(n))$ and $\|\tilde{\vX}\|_F=1$, invoke Lemma~\ref{lem:Hankcom_tr_inq} to obtain 
\begin{align*}
    I_{4,2}
   &\leq c_3\sqrt{\frac{n^2\log(n)}{\hat{m}}}(\|\vDeltaa^k\|_{F}\|\vZ_{\star{2}}^k\|_{2,\infty}\|\tilde{\vX}\|_F\|\vDeltab^k\|_{2,\infty}\\
&\quad+\|\vZ_{\star{1}}^k\|_{2,\infty}\|\vDeltab^k\|_{F}\|\tilde{\vX}\|_F\|\vDeltab^k\|_{2,\infty})\\
&\leq \varepsilon_0\sigma_r(\vM_\star)
(\|\vDeltaa^k\|_F+\|\vDeltab^k\|_F),
\\
    I_{4,3}&\leq c_3\sqrt{\frac{n^2\log(n)}{\hat{m}}}\|\vDeltaa^k\|_{2,\infty}\|\vDeltab^k\|_F\|\vDeltab^k\|_{2,\infty}\|\tilde{\vX}\|_F\\
    &\leq \varepsilon_0\sigma_r(\vM_\star)
    \|\vDeltab^k\|_F,
    \\I_{4,4}&\leq c_3\sqrt{\frac{n^2\log(n)}{\hat{m}}}\|\vDeltaa^k\|_F\|\vDeltab^k\|_{2,\infty}\|\vZ_{\star{2}}^k\|_{2,\infty}\|\tilde{\vX}\|_F\\
    &\leq \varepsilon_0\sigma_r(\vM_\star)
    \|\vDeltaa^k\|_F.
\end{align*}

Combine $I_{4,1}$, $I_{4,2}$, $I_{4,3}$ and  $I_{4,4}$ to obtain 
\begin{small}
\begin{align*}
  I_4^2&\leq 4\varepsilon_0^2\sigma_1(\vM_\star)\Big(\|\vDeltaa^k(\vZ_{\star{2}}^k)^T+\vZ_{\star{1}}^k(\vDeltab^k)^T\|_F^2\\&\quad+4\sigma_r(\vM_\star)(\|\vDeltaa^k\|_F^2+\|\vDeltab^k\|_F^2)\Big).
\end{align*}
\end{small}
Consequently, combine $I_3^2$ and $I_4^2$ to obtain
\begin{small}
\begin{align*}
  &\|\nabla f^{(k)}(\vZ^k)\|_F^2\\
  &\overset{(a)}{\leq }\sigma_1(\vM_\star)\Big((10+8\varepsilon_0^2)\|\vDeltaa^k(\vZ_{\star{2}}^k)^T+\vZ_{\star{1}}^k(\vDeltab^k)^T\|_F^2\\
&\quad+10\|\vDeltaa^k\|_F^2\|\vDeltab^k\|_F^2+32\varepsilon_0^2\sigma_r(\vM_\star)(\|\vDeltaa^k\|_F^2+\|\vDeltab^k\|_F^2)\Big)\\
&\overset{(b)}{\leq }\sigma_1(\vM_\star)\Big((10+8\varepsilon_0^2)\|\vDeltaa^k(\vZ_{\star{2}}^k)^T+\vZ_{\star{1}}^k(\vDeltab^k)^T\|_F^2\\
&\quad+52\varepsilon_0^2\sigma_r(\vM_\star)(\|\vDeltaa^k\|_F^2+\|\vDeltab^k\|_F^2)\Big),
\end{align*}
\end{small}
where step $(a)$ comes from 
the fact that
\begin{align*}
 \|\vZ^k\|^2&\leq\(\|\vDeltaa^k\|+\|\vZ_{\star{1}}^k\|\)^2
 \leq2.5\sigma_1(\vM_\star),   
\end{align*}
 where we invoke the results \eqref{eq:converge_intermid_deltaab} and Claim \ref{claim:regcond_for_factors} for $\varepsilon_0$ is a small constant. Step $(b)$ follows from \eqref{eq:converge_intermid_deltaab} directly.

Finally, we give the upper bound of \eqref{eq:dist_iter_oracle} by combining step 1 and step 2.

\textbf{Step 3:
bounding \eqref{eq:dist_iter_oracle}.}
Combining the above bounds for $\Real\la\nabla f^{(k)}(\vZ^k),\vDeltaa^k+\vDeltab^k\ra$ and $\|\nabla f^{(k)}(\vZ^k)\|_F^2$, we have
\begin{small}
\begin{align*}
    &-2\eta\Real\la\nabla f^{(k)}(\vZ^k),\vDeltaa^k+\vDeltab^k\ra+2\eta^2\|\nabla f^{(k)}(\vZ^k)\|_F^2\\
    &\leq\frac{\alpha}{\sigma_1(\vM_\star)}\|\vDeltaa^k(\vZ_{\star{2}}^{k})^T+\vZ_{\star{1}}^k(\vDeltab^k)^T\|_F^2+\frac{\beta}{\kappa}(\|\vDeltaa^k\|_F^2+\|\vDeltab^k\|_F^2)\\
    &\overset{(a)}{\leq}\frac{\alpha}{2\kappa}(\|\vDeltaa^k\|_F^2+\|\vDeltab^k\|_F^2)+\frac{\beta}{\kappa}(\|\vDeltaa^k\|_F^2+\|\vDeltab^k\|_F^2)\\
    &\overset{(b)}{\leq}-\frac{11\eta^\prime}{50\kappa}\distPsq{\vZ^k}{\vZ_\star},
\end{align*}
\end{small}where $\eta=\frac{\eta^{\prime}}{\sigma_1(\vM_\star)}$, $\alpha=4{\eta^{\prime}}^2(5+4\varepsilon_0^2)-\eta^{\prime}(1.5-2\varepsilon_0)$ and $\beta=\eta^{\prime}(28\varepsilon_0^2+5\varepsilon_0)+104{\eta^{\prime}}^2\varepsilon_0^2$. For $\eta^{\prime}\leq\frac{1}{54}$, we list the following numerical results for $\varepsilon_0$ is a small enough constant 
\begin{align*}
\alpha&=4{\eta^{\prime}}^2(5+4\varepsilon_0^2)-\eta^{\prime}(1.5-2\varepsilon_0)\leq 0,\\
\frac{\alpha}{2}+\beta&=2{\eta^{\prime}}^2(5+56\varepsilon_0^2)+\eta^{\prime}(28\varepsilon_0^2+6\varepsilon_0-0.75)\leq -\frac{11\eta^{\prime}}{50}.
\end{align*}
Therefore step $(a)$ comes from \eqref{eq:distcontrac_opt_funda_term} and   $\alpha\leq 0$, step $(b)$ follows from 
$\frac{\alpha}{2}+\beta\leq -\frac{11\eta^{\prime}}{50}$.

Finally, we obtain an upper bound of  \eqref{eq:dist_iter_oracle}
\begin{align*}
  \distPsq{\tilde{\vZ}^{k+1}}{\vZ_\star}&\leq\|\tilde{\vZ}^{k+1}-\vZ_{\star{1}}^k\|_F^2+\|\tilde{\vZ}^{k+1}-\vZ_{\star{2}}^k\|_F^2\\
  &\leq(1-\frac{11\eta^\prime}{50\kappa})\distPsq{\vZ^k}{\vZ_\star},\numberthis \label{eq:dist_contrac_full}
\end{align*}
and further establish \eqref{eq:dist_contraction_npj}
\begin{align*}
 \distP{\tilde{\vZ}^{k+1}}{\vZ_\star}
 \leq(1-\frac{11\eta^{\prime}}{100\kappa})\distP{\vZ^k}{\vZ_\star}. 
\end{align*}
 
\section{} 

\subsection{Proof of the induction hypotheses from the $ k$-th to the $(k+1)$-th step}\label{apd:pf_induction}
As \eqref{eq:distancebd}, \eqref{eq:nmcond1}, and \eqref{eq:nmcond2} hold for the $k$-th step, we have the following linear convergence result by invoking Lemma~\ref{lem:dist_contract} 
    \begin{align*}
 \distP{\tilde{\vZ}^{k+1}}{\vZ_\star} \leq (1-\frac{11\eta^{\prime}}{100\kappa})\distP{\vZ^k}{\vZ_\star},\numberthis\label{eq:induc_linconverg}
\end{align*}
with probability at least $1-c_2 n^{-2}$ when $\hat{m}\gtrsim O\( \varepsilon_0^{-2}\mu^2\kappa^2 r^2\log(n)\)$, $\eta=\frac{\eta^{\prime}}{\sigma_1(\vM_\star)}$, $\eta^{\prime}$ and $\varepsilon_0$ are some appropriate constants.
We establish \eqref{eq:distancebd} for the $(k+1)$-th step by proving the following non-expansiveness property of the projection in terms of the distance metric:
\begin{align*}
    \distP{\vZ^{k+1}}{\vZ_\star}\leq\distP{\tilde{\vZ}^{k+1}}{\vZ_\star},
\end{align*}
where $\vZ^{k+1}=\PC{\tilde{\vZ}^{k+1}}$. 

Set $\tilde{\vZ}_{\star{1}}^{k+1}=\vZ_\star\tilde{\vP}^{k+1}$, $\tilde{\vZ}_{\star{2}}^{k+1}=\vZ_\star(\tilde{\vP}^{k+1})^{-T}$ where $\tilde{\vP}^{k+1}$ is the optimal solution as defined in \eqref{eq:def_optinvP}. 
According to \eqref{eq:induc_linconverg} and the definition of $\distP{\tilde{\vZ}^{k+1}}{\vZ_\star}$
\begin{small}
 \begin{align*}
     \|\tilde{\vZ}^{k+1}-\tilde{\vZ}_{\star{1}}^{k+1}\|\leq\distP{\tilde{\vZ}^{k+1}}{\vZ_\star}
     \leq(1-\frac{11\eta^{\prime}}{100\kappa})\distP{\vZ^k}{\vZ_\star}.\numberthis\label{eq:nmcond_triangle1}
     \end{align*}   
\end{small}

     One simple corollary of \eqref{eq:dist_contrac_full} is that
     \begin{align*}
     \|\tilde{\vZ}^{k+1}-\vZ_{\star{1}}^k\|&\leq(1-\frac{11\eta^{\prime}}{100\kappa}) \distP{\vZ^k}{\vZ_\star}. \numberthis\label{eq:nmcond_triangle2}
\end{align*}
Invoking triangle inequality and combining \eqref{eq:nmcond_triangle1},\eqref{eq:nmcond_triangle2}, and \eqref{eq:nmcond2}, we obtain that 
\begin{align*}
    \sigma_1(\tilde{\vZ}_{\star{1}}^{k+1})&\leq \|\vZ_{\star{1}}^k\|+\|\tilde{\vZ}_{\star{1}}^{k+1}-\tilde{\vZ}^{k+1}\|+\|\tilde{\vZ}^{k+1}-\vZ_{\star{1}}^k\|\\
    &\leq\Big[1+6.4\kappa^{-1}\varepsilon_0\sum_{t=0}^{\infty}(1-\frac{11\eta^\prime}{100\kappa})^t\Big]\sigma_1\(\vM_\star\)^{\frac{1}{2}}\\
    &=(1+\frac{640\varepsilon_0}{11\eta^{\prime}})\sigma_1\(\vM_\star\)^{\frac{1}{2}}\leq\sqrt{2}\sigma_1\(\vM_\star\)^{\frac{1}{2}},
\end{align*}
when $\varepsilon_0$ is a small constant such that $\frac{640\varepsilon_0}{11\eta^{\prime}}\leq\sqrt{2}-1$. 

As $\tilde{\vZ}_{\star{1}}^{k+1}=\vZ_\star\tilde{\vP}^{k+1}$, one can obtain
\begin{align*}
\|\tilde{\vZ}_{\star{1}}^{k+1}\|_{2,\infty}\leq \|\vU_\star\|_{2,\infty}\cdot\sigma_1(\tilde{\vZ}_{\star{1}}^{k+1})\leq 2\sqrt{\frac{\mu r\sigma}{n}}.
\end{align*}
Thus $\tilde{\vZ}_{\star{1}}^{k+1}\in\CS$ and one can also obtain $\tilde{\vZ}_{\star{2}}^{k+1}\in\CS$ , following a similar route.

combining the definition of the distance metric and  the non-expansiveness of convex projection $\P_\CS$, we have
\begin{align*}
    \distP{\vZ^{k+1}}{\vZ_\star}&{\leq} \sqrt{\|\vZ^{k+1}-\tilde{\vZ}_{\star{1}}^{k+1}\|_F^2+\|\vZ^{k+1}-\tilde{\vZ}_{\star{2}}^{k+1}\|_F^2}\\
    &{\leq} \sqrt{\|\tilde{\vZ}^{k+1}-\tilde{\vZ}_{\star{1}}^{k+1}\|_F^2+\|\tilde{\vZ}^{k+1}-\tilde{\vZ}_{\star{2}}^{k+1}\|_F^2}\\
    &=\distP{\tilde{\vZ}^{k+1}}{\vZ_\star}.\numberthis\label{eq:dist_contrac_k+1}
\end{align*}

The things left are to prove \eqref{eq:nmcond1} and \eqref{eq:nmcond2} for the $(k+1)$-th step. We only give the proof of bounding $\vZ_{\star{1}}^{k+1}$ in details, because bounding  $\vZ_{\star{2}}^{k+1}$ follows a similar route.

From \eqref{eq:induc_linconverg} and \eqref{eq:dist_contrac_k+1}, one can obtain that 
\begin{small}
\begin{align*}
   \|\vZ_{\star{1}}^{k+1}-\vZ^{k+1}\|&\leq\distP{\vZ^{k+1}}{\vZ_\star}\leq(1-\frac{11\eta^{\prime}}{100\kappa})\distP{\vZ^k}{\vZ_\star},\numberthis\label{eq:nmcond_triangle3}\\
   \|\vZ^{k+1}-\tilde{\vZ}_{\star{1}}^{k+1}\|&\leq(1-\frac{11\eta^{\prime}}{100\kappa})\distP{\vZ^k}{\vZ_\star}.\numberthis\label{eq:nmcond_triangle4}
\end{align*}
\end{small}

Invoking the triangle inequality for several times, and combining the results \eqref{eq:nmcond_triangle1}, 
 \eqref{eq:nmcond_triangle2}, \eqref{eq:nmcond_triangle3}, and \eqref{eq:nmcond_triangle4}, we have
\begin{align*}
 \sigma_1(\vZ_{\star{1}}^{k+1})&=\|\vZ_{\star{1}}^{k+1}\|\\
 &\leq\|\vZ_{\star{1}}^k\|+\|\vZ_{\star{1}}^{k+1}-\vZ_{\star{1}}^{k}\|
 \\ &\leq\sigma_1(\vZ_{\star{1}}^k)+\|\vZ_{\star{1}}^{k+1}-\vZ^{k+1}\|
 +\|\vZ^{k+1}-\tilde{\vZ}_{\star{1}}^{k+1}\|\\&\quad+\|\tilde{\vZ}_{\star{1}}^{k+1}-\tilde{\vZ}^{k+1}\|+\|\tilde{\vZ}^{k+1}-\vZ_{\star{1}}^k\|\\
    &\leq\Big[1+6.4\kappa^{-1}\varepsilon_0\sum_{t=0}^{k+1}(1-\frac{11\eta^{\prime}}{100\kappa})^t\Big]\sigma_1\(\vM_\star\)^{\frac{1}{2}}.
    \end{align*}
    Besides, invoke Weyl's Theorem to obtain that  
\begin{align*}
&\sigma_r\(\vZ_{\star{1}}^{k+1}\)\\
    &\geq\sigma_r\(\vZ_{\star{1}}^{k}\)-\|\vZ_{\star{1}}^{k+1}-\vZ_{\star{1}}^{k}\|\\
    &\geq\sigma_r\(\vZ_{\star{1}}^{k}\)-\Big(\|\vZ_{\star{1}}^{k+1}-\vZ^{k+1}\|+\|\vZ^{k+1}-\tilde{\vZ}_{\star{1}}^{k+1}\|\\
    &\quad+\|\tilde{\vZ}_{\star{1}}^{k+1}-\tilde{\vZ}^{k+1}\|+\|\tilde{\vZ}^{k+1}-\vZ_{\star{1}}^k\|\Big)\\
    &\geq\Big[1-6.4\kappa^{-1}\varepsilon_0\sum_{t=0}^{k+1}(1-\frac{11\eta^{\prime}}{100\kappa})^t\Big]\sigma_r\(\vM_\star\)^{\frac{1}{2}}. 
\end{align*}
Therefore, \eqref{eq:nmcond1} and \eqref{eq:nmcond2} for the $(k+1)$-step are established. 

\subsection{Proof of Remark~\ref{rmk:recoveryerr_vec}}\label{apd:pf_rmkrecov_vec} 

We establish a relationship between the recovery error and our distance metric, which is 
  \begin{align*}
        \|\vx^k-\vx\|_2&\leq \|\vy^k-\vy\|_2=\|\G^*(\vZ^k(\vZ^k)^T-\G\vy)\|_2\\
        &\leq\|\vZ^k{(\vZ^k)}^T-\vM_\star\|_F\\&=\|(\vZ^k-\vZ^k_{\star{1}})(\vZ^k_{\star{2}})^T+\vZ^k(\vZ^k-\vZ^k_{\star{2}})^T)\|_F\\
        &\leq(\|\vZ^k_{\star{2}}\|+\|\vZ^k\|)\distP{\vZ^k}{\vZ_\star}
        \\
        &\lesssim\sigma_1(\vM_\star)^\frac{1}{2}\distP{\vZ^k}{\vZ_\star}.
    \end{align*}
The last inequality invokes results from the induction hypotheses \eqref{eq:distancebd} and \eqref{eq:nmcond2}, which are $ \|\vZ^k_{\star{2}}\|\lesssim\sigma_1\(\vM_\star\)^{\frac{1}{2}}$ and thus $\|\vZ^k\|\leq \|\vZ^k-\vZ^k_{\star{2}}\|+\|\vZ^k_{\star{2}}\|\lesssim\sigma_1\(\vM_\star\)^{\frac{1}{2}}$. Therefore, $\|\vx^k-\vx\|_2\leq O(\varepsilon)$ when $\distP{\vZ^k}{\vZ_\star}\leq \varepsilon$.

\subsection{Proof of Lemma~\ref{lem:relations-dist}} \label{apd:pf-lemma1-relatdist}
{
 From the definitions of $\mbox{dist}^2_Q(\vZ,\vZ_\star)$ and $\mbox{dist}^2_P(\vZ,\vZ_\star)$ 
\begin{align*}
  2\mbox{dist}^2_Q(\vZ,\vZ_\star)&\overset{(a)}{\geq} \inf_ {\substack{\vQ\in\C^{r\times r},\\\text{invertible}}}\ln\vZ-\vZ_\star\vQ\rn_F^2+\ln\vZ-\vZ_\star\vQ^{-T}\rn_F^2
  \\&=\mbox{dist}^2_P(\vZ,\vZ_\star),
\end{align*}
where $(a)$ results from that the domain becomes larger and the first part is proven.  

Then we prove the second part.  For  any invertible $\vP$ such that  
 $\sqrt{\ln\vZ-\vZ_\star\vP\rn_F^2+\ln\vZ-\vZ_\star\vP^{-T}\rn_F^2}\leq\varepsilon\sigma_r(\vM_\star)^{\frac{1}{2}}$, one has 
     \begin{align*}
         \|\vZ_\star(\vP-\vP^{-T})\|_F&\leq\|\vZ-\vZ_\star\vP\|_F+\|\vZ-\vZ_\star\vP^{-T}\|_F
         \\&\leq2\varepsilon\sigma_r(\vM_\star)^{\frac{1}{2}}, 
     \end{align*}
     where we invoke the triangle inequality. From the relation that $\|\vA\vB\|_F\geq\sigma_r(\vB)\|\vA\|_F$, we have
     \begin{align*}
         \sigma_r(\vM_\star)^{\frac{1}{2}}\|\vP-\vP^{-T}\|_F\leq \|\vZ_\star(\vP-\vP^{-T})\|_F,
     \end{align*}
     therefore $\|\vP-\vP^{-T}\|_F\leq2\varepsilon$, and  under this condition, $\mbox{dist}^2_P(\vZ,\vZ_\star)$ can be reformulated as
     \begin{align*}
    \mbox{dist}^2_P(\vZ,\vZ_\star){=}\inf_ {\substack{\|\vP-\vP^{-T}\|_F\leq2\varepsilon\\\vP\in\C^{r\times r}, \text{invertible}}}\ln\vZ{-}\vZ_\star\vP\rn_F^2{+}\ln\vZ{-}\vZ_\star\vP^{-T}\rn_F^2. 
     \end{align*}
     When $\varepsilon=0$, one has $\vP=\vP^{-T}$ and this means $\vP\vP^T=\vI$ where $\vP$ is a complex orthogonal matrix, then we obtain  
      \begin{align*}
    \mbox{dist}^2_P(\vZ,\vZ_\star)&{=}\inf_ {\substack{\|\vP-\vP^{-T}\|_F=0\\\vP\in\C^{r\times r}, \text{invertible}}}\ln\vZ{-}\vZ_\star\vP\rn_F^2{+}\ln\vZ{-}\vZ_\star\vP^{-T}\rn_F^2
    \\&=2\mbox{dist}^2_Q(\vZ,\vZ_\star),
     \end{align*}
     that is, the distance metric $\mbox{dist}_P^2(\vZ,\vZ_\star)$ is equivalent to the distance metric $\mbox{dist}_Q^2(\vZ,\vZ_\star)$. According to the linear convergence of $\mbox{dist}_P(\vZ,\vZ_\star)$ in Theorem \ref{thm:recovery_guarantee}, we have the linear convergence of $\varepsilon$ to zero. As a result, $\mbox{dist}_P^2(\vZ,\vZ_\star)$ reduces to $\mbox{dist}_Q^2(\vZ,\vZ_\star)$ asymptotically as $\varepsilon \to 0$.
     }
\section{}  \label{apd:suppl}

\subsection{ Proof of Lemma~\ref{lem:existence}}\label{apd:subsec_pf_existence}

Obviously, one can establish the equivalence between the following two minimization problems
\begin{align*}
    &\inf_{\substack{\vP \in \C^{r\times r} \\ \text { invertible }}} \sqrt{ \|\vZ-\vZ_{\star}\vP\|_F^2+\|\vZ-\vZ_{\star}\vP^{-T}\|_F^2}\numberthis\label{eq:dist_inf_1}\\
    &=\inf_{\substack{\vR \in \C^{r\times r} \\ \text { invertible }}} \sqrt{\|\vZ-\vZ_{\star}\hat{\vP}\vR\|_F^2+\|\vZ-\vZ_{\star}\hat{\vP}^{-T}\vR^{-T}\|_F^2}. \numberthis\label{eq:dist_inf_2}
\end{align*}
If the minimizer of the second optimization problem \eqref{eq:dist_inf_2} is attained at some $\vR$, then $\hat{\vP}\vR$ must be the minimizer of the first problem \eqref{eq:dist_inf_1}.  

As  $\hat{\vZ}_{\star{1}}=\vZ_{\star}\hat{\vP}$ and $\hat{\vZ}_{\star{2}}=\vZ_{\star}\hat{\vP}^{-T}$, we have
\begin{align*}
 &\inf_{\substack{\vR \in \C^{r\times r} \\ \text { invertible }}} \sqrt{\|\vZ-\hat{\vZ}_{\star{1}}\vR\|_F^2+\|\vZ-\hat{\vZ}_{\star{2}}\vR^{-T}\|_F^2}\\
 &\leq \sqrt{\|\vZ-\hat{\vZ}_{\star{1}}\|_F^2+\|\vZ-\hat{\vZ}_{\star{2}}\|_F^2}.
\end{align*}

 For any $\vR$ such that $\sqrt{\|\vZ-\hat{\vZ}_{\star{1}}\vR\|_F^2+\|\vZ-\hat{\vZ}_{\star{2}}\vR^{-T}\|_F^2}\leq \sqrt{\|\vZ-\hat{\vZ}_{\star{1}}\|_F^2+\|\vZ-\hat{\vZ}_{\star{2}}\|_F^2}$, we have  
\begin{align*}
   \|\hat{\vZ}_{\star{1}}(\vI-\vR)\|_F&\leq \|\vZ-\hat{\vZ}_{\star{1}}\|_F+\|\vZ-\hat{\vZ}_{\star{1}}\vR\|_F\\
    &\leq 2\varepsilon\sigma_r(\vM_\star)^\frac{1}{2},
\end{align*}
where we use the fact $\sqrt{\|\vZ-\hat{\vZ}_{\star{1}}\|_F^2+\|\vZ-\hat{\vZ}_{\star{2}}\|_F^2}\leq\varepsilon\sigma_r(\vM_\star)^\frac{1}{2}$ in the last inequality. When $
\sigma_r(\hat{\vZ}_{\star{1}}) \geq \frac{\sqrt{2}}{2}\sigma_r(\vM_\star)^{\frac{1}{2}}
  $, we further obtain that 
  \begin{align*}
    \|\hat{\vZ}_{\star{1}}(\vI-\vR)\|_F\geq\frac{\sqrt{2}}{2}\sigma_r(\vM_\star)^{\frac{1}{2}}\|\vI-\vR\|_F.
\end{align*}

Thus we obtain that $\|\vI-\vR\|_F\leq 2\sqrt{2}\varepsilon$. From Weyl's inequality we see that $\sigma_r(\vR)\geq 1-2\sqrt{2}\varepsilon\geq\frac{1}{2}$ as long as $\varepsilon\leq\frac{\sqrt{2}}{8}$, which denotes $\vR$ is invertible. 

Therefore  the minimization problem \eqref{eq:dist_inf_2} is equivalent to 
\begin{align*}
    \min_{\substack{\vR \in \C^{r\times r} \\ \text { invertible }}}&\sqrt{\|\vZ-\vZ_{\star}\hat{\vP}\vR\|_F^2+\|\vZ-\vZ_{\star}\hat{\vP}^{-T}\vR^{-T}\|_F^2}\\
    \mathrm{s.t.} ~~~~& \|\vI-\vR\|_F\leq 2\sqrt{2}\varepsilon,
\end{align*}
which is a continuous optimization problem over a compact set of invertible matrices. The minimizer exists from Weierstrass extreme value theorem. Similarly, we can prove the existence of minimizer if we have $\sigma_r(\hat{\vZ}_{\star{2}})\geq \frac{\sqrt{2}}{2}\sigma_r(\vM_\star)^{\frac{1}{2}}
$.
 
\subsection{Proof of Lemma~\ref{lem:symdist_inq}}\label{apd:proof-lem4-syminq}
 
We introduce the following special form of orthogonal Procrustes problem firstly 
\begin{align*}
\min_{\vQ \in \mathcal{O}}\ln\vZ-\vZ_\star\vQ\rn_F. 
\end{align*}
Let the SVD of $\Real(\vZ_\star^H\vZ)$ being $\Real(\vZ_\star^H\vZ)=\vQ_1\bm{\Lambda}\vQ_2^T$, this problem has a closed-form solution as $\vQ_{\vz}=\vQ_1\vQ_2^T.$

Let $\vF=\begin{bmatrix}\vZ\\\bar{\vZ}\end{bmatrix}$, $\vF_\star=\begin{bmatrix}\vZ_\star\\\bar{\vZ}_\star\end{bmatrix}$, $\vDelta=\vF\vQ_{\vz}^T-\vF_{\star}$,  and we know $\Real(\vZ_\star^H\vZ)=\frac{1}{2}\vF_\star^H\vF=\vQ_1\bm{\Lambda}\vQ_2^T$. As  $\vQ_{\vz}{=\vQ_1\vQ_2^T}$, $\vF_\star^H\vF\vQ_{\vz}^T$ is a positive semi-definite real matrix and this means $\vF_\star^H\vDelta=\vF_\star^H\vF\vQ_{\vz}^T-\vF_\star^H\vF_\star=\vQ_{\vz}\vF^H\vF_\star-\vF_\star^H\vF_\star=\vDelta^H\vF_\star$. One can establish
\begin{align*}
&\ln\vF\vF^H-\vF_\star\vF_\star^H\rn_F^2\\
&=\ln\vF_\star\vDelta^H+\vDelta\vF_\star^H+\vDelta\vDelta^H\rn_F^2\\
&=\operatorname{tr}(2\vF_\star^H\vF_\star\vDelta^H\vDelta+(\vDelta^H\vDelta)^2+(\vF_\star^H\vDelta)^2+(\vDelta^H\vF_\star)^2\\&\quad+2\vF_\star^H\vDelta\vDelta^H\vDelta+2\vDelta^H\vF_\star\vDelta^H\vDelta)\\
&\overset{(a)}{=}\operatorname{tr}(2\vF_\star^H\vF_\star\vDelta^H\vDelta+(\vDelta^H\vDelta+\sqrt{2}\vF_\star^H\vDelta)^2\\
&\quad+(4-2\sqrt{2})\vF_\star^H\vDelta\vDelta^H\vDelta)\\
&=\operatorname{tr}(2(\sqrt{2}-1)\vF_\star^H\vF_\star\vDelta^H\vDelta+(\vDelta^H\vDelta+\sqrt{2}\vF_\star^H\vDelta)^2\\
&\quad+(4-2\sqrt{2})\vF_\star^H\vF\vQ_{\vz}^T\vDelta^H\vDelta)\\
&\overset{(b)}{\geq}4(\sqrt{2}-1)\|(\vF\vQ_{\vz}^T-\vF_\star)\vSS_\star^{\frac{1}{2}}\|_F^2\\
&\geq8(\sqrt{2}-1)\sigma_r(\vM_\star)\underset{\vQ \in \mathcal{O}}{\operatorname{min}}\ln\vZ-\vZ_\star\vQ\rn_F^2  \\&\overset{(c)}{\geq}
4(\sqrt{2}-1)\sigma_r(\vM_\star)\distPsq{\vZ}{\vZ_\star}.\numberthis\label{eq:univ_alignment_lowerbd}
\end{align*}
 Step $(a)$ results from $\vF_\star^H\vDelta=\vDelta^H\vF_\star$. Step $(b)$ follows from the fact $\operatorname{tr}(\vF_\star^H\vF\vQ_{\vz}^T\vDelta^H\vDelta)\geq0$ ($\vDelta^H\vDelta$ and $\vF_\star^H\vF\vQ_{\vz}^T$ is positive semi-definite) and the fact $(\vDelta^H\vDelta+\sqrt{2}\vF_\star^H\vDelta)^H=\vDelta^H\vDelta+\sqrt{2}\vF_\star^H\vDelta$.  
Step $(c)$ invokes the fact that 
\begin{align*}
\distPsq{\vZ}{\vZ_\star}&=\inf_{\substack{\vP \in \C^{r\times r} \\ \text { invertible }}}\ln\vZ-\vZ_\star\vP\rn_F^2+\ln\vZ-\vZ_\star\vP^{-T}\rn_F^2\\&\leq\ln\vZ-\vZ_\star\vQ_{\vz}\rn_F^2+\ln\vZ-\vZ_\star\vQ_{\vz}^{-T}\rn_F^2
\\&=2\underset{\vQ \in \mathcal{O}}{\operatorname{min}}\ln\vZ-\vZ_\star\vQ\rn_F^2. 
\end{align*}
We establish the following inequality from some technical derivations
\begin{align*}
&\ln\vZ\vZ^H-\vZ_\star\vZ_\star^H\rn_F^2+\ln\bar{\vZ}\bar{\vZ}^H-\bar{\vZ}_\star\bar{\vZ}_\star^H\rn_F^2\\
&\quad-2\ln\vZ\vZ^T-\vZ_\star\vZ_\star^T\rn_F^2\\
&=\(\ln\vZ\vZ^H\rn_F^2+\ln\bar{\vZ}\bar{\vZ}^H\rn_F^2-2\ln\vZ\vZ^T\rn_F^2\)\\
&\quad+\(\ln\vZ_\star\vZ_\star^H\rn_F^2+\ln\bar{\vZ}_\star\bar{\vZ}_\star^H\rn_F^2-2\ln\vZ_\star\vZ_\star^T\rn_F^2\)\\&\quad-2\(\ln\vZ^H\vZ_\star\rn_F^2+\ln\bar{\vZ}^H\bar{\vZ}_\star\rn_F^2-2\Real\la\vZ\vZ^T,\vZ_\star\vZ_\star^T\ra\)\\&=-2\ln\vZ^H\vZ_\star-\bar{\vZ}^H\bar{\vZ}_\star\rn_F^2\leq0,
\end{align*}
where the second equality is from $\ln\vZ\vZ^T\rn_F^2=\ln\vZ\vZ^H\rn_F^2=\ln\bar{\vZ}\bar{\vZ}^H\rn_F^2=\ln\vSS\rn_F^2$ and $\ln\vZ_\star\vZ_\star^T\rn_F^2=\ln\vZ_\star\vZ_\star^H\rn_F^2=\ln\bar{\vZ}_\star\bar{\vZ}_\star^H\rn_F^2=\ln\vSS_\star\rn_F^2$. The previous inequality helps establish the upper-bound of $\ln\vF\vF^H-\vF_\star\vF_\star^H\rn_F^2$, which is
\begin{align*}
&\ln\vF\vF^H-\vF_\star\vF_\star^H\rn_F^2\\
&=2\ln\vZ\vZ^T-\vZ_\star\vZ_\star^T\rn_F^2+\ln\vZ\vZ^H-\vZ_\star\vZ_\star^H\rn_F^2\\
&\quad+\ln\bar{\vZ}\bar{\vZ}^H-\bar{\vZ}_\star\bar{\vZ}_\star^H\rn_F^2\\
&\leq4\ln\vZ\vZ^T-\vZ_\star\vZ_\star^T\rn_F^2=4\ln\vM-\vM_\star\rn_F^2. \numberthis\label{eq:univ_Csym_upbd}
\end{align*}
combining \eqref{eq:univ_alignment_lowerbd} and \eqref{eq:univ_Csym_upbd}, we complete the proof of Lemma~\ref{lem:symdist_inq}.

\subsection{Proof of Lemma~\ref{lem:Hankcom_tr_inq}}\label{apd:subsec_hankelcomp_tr_inq}

   Let $\vH_{a}$ denotes the $n_s\times n_s$ Hankel basis matrix with entries in the $a$-th skew diagonal as $\frac{1}{\sqrt{w_a}}$. Since $p=\frac{m}{n}$, we have
   \begin{align*}
       &\la\G(p^{-1}\P_{{\hat{\Omega}}}-\I)\G^*(\vA\vB^T),\vC\vD^T\ra\\
       &=\sum_{k=1}^{m}\Big(\frac{n}{m}\langle\vA\vB^T,\vH_{a_k}\rangle\langle\vH_{a_k},\vC\vD^T\rangle\\&\quad-\frac{1}{m}\sum_{a=0}^{n-1}\langle\vA\vB^T,\vH_a\rangle\langle\vH_a,\vC\vD^T\rangle\Big)=\sum_{k=1}^m z_{a_k},
   \end{align*}

 It is obviously that $z_{a_k}$ are i.i.d random variables and $\E[z_{a_k}]=0$. 
 We first point out some results   \begin{align*}  |\langle\vA\vB^T,\vH_a\rangle|&=|\langle\vA,\vH_a\bar{\vB}\rangle|\leq\|\vA\|_F\|\vB\|_{2,\infty},\\
|\langle\vA\vB^T,\vH_a\rangle|&=|\langle\vB^T,\vA^H\vH_a\rangle|\leq\|\vA\|_{2,\infty}\|\vB\|_F.
   \end{align*}  
where we use the fact that $\|\vH_a\bar{\vB}\|_F\leq\|\vB\|_{2,\infty}$ and $\|\vA^H\vH_a\|_F\leq\|\vA\|_{2,\infty}$. Similar results hold for $|\langle\vH_a,\vC\vD^T\rangle|$. Thus we have
   \begin{align*}
       &|z_{a_k}|\leq \frac{2n}{m}\max_{a}|\langle\vA\vB^T,\vH_a\rangle\langle\vH_a,\vC\vD^T\rangle|\leq \frac{2n}{m}L,
   \end{align*}
   where
   \begin{align*}
     L=&\min\{\ln\vA\rn_F\ln\vB\rn_{2,\infty},\ln\vA\rn_{2,\infty}\ln\vB\rn_F\}\\&\cdot\min\{\ln\vC\rn_F\ln\vD\rn_{2,\infty},\ln\vC\rn_{2,\infty}\ln\vD\rn_F\}. 
   \end{align*}
As $z_{a_k}$ is a complex scalar, we have 
\begin{small}
   \begin{align*}
       \E\({z_{a_k}^H}z_{a_k}\)=\E(z_{a_k}z_{a_k}^H)&=\E(\frac{n^2}{m^2}|\langle\vA\vB^T,\vH_{a_k}\rangle|^2|\langle\vH_{a_k},\vC\vD^T\rangle|^2)\\
       &\quad-\frac{1}{m^2}|\sum_{a=0}^{n-1}\langle\vA\vB^T,\vH_a\rangle\langle\vH_a,\vC\vD^T\rangle|^2\\
       &\leq \E(\frac{n^2}{m^2}|\langle\vA\vB^T,\vH_{a_k}\rangle|^2|\langle\vH_{a_k},\vC\vD^T\rangle|^2).
   \end{align*}
   \end{small}Therefore
\begin{align*}
\sum_{k=1}^{m}\E\(z_{a_k}z_{a_k}^H\)=\sum_{k=1}^{m}\E\(z_{a_k}^Hz_{a_k}\)\leq\frac{n^2}{m}L^2.
\end{align*}
Invoking Bernstein's inequality \cite[Theorem~1.6]{Tropp2012}, we have
\begin{align*}
\mathbb{P}\(\left|\sum_{k=1}^m z_{a_k}\right|\geq t\)\leq 2\exp\(\frac{-mt^2/2}{n^2L^2+2nLt/3}\).
\end{align*}
Consequently when $m\gtrsim O(\log(n))$, $\left|\sum_{k=1}^m z_{a_k}\right|\leq c_4L\sqrt{\frac{n^2\log(n)}{m}}$ holds with probability at least $1-n^{-2}$.

\end{document}